\newcommand{\review}[1]{{#1}}
\newcommand{\change}[1]{{#1}}
\tikzset{
	rect/.style={
		rectangle,
		rounded corners,
		draw=black, 
		thick,
		text centered},
	rectw/.style={
		rectangle,
		rounded corners,
		draw=white, 
		thick,
		text centered},
	sq/.style={
		rectangle,
		draw=black, 
		thick,
		text centered},
	sqw/.style={
		rectangle,
		draw=white, 
		thick,
		text centered},
	arrout/.style={
		->,
		-latex,
		thick,
	},
	arrin/.style={
		<-,
		latex-,
		thick,
		El         },
	arrd/.style={
		<->,
		>=latex,
		thick,
	},
	arrw/.style={
		thick,
	}
}
\newtheorem{fact}{Fact}
\renewcommand{\epsilon}{\varepsilon}
\renewcommand{\phi}{\varphi}
\newcommand{\sem}[1]{\llbracket#1\rrbracket}
\newcommand{\kc}{\kappa}
\newcommand{\ENUM}{\text{{\rm ENUM}}}
\newcommand{\COUNT}{\text{{\rm COUNT}}}
\newcommand{\GEN}{\text{{\rm GEN}}}
\newcommand{\B}{\text{\tt B}}
\newcommand{\ex}[1]{\mathbb{E}\normalfont\lbrack #1 \rbrack}
\newcounter{Frame}
\newenvironment{Frame}[1][htb]{%
\refstepcounter{Frame}
    \begin{mdframed}[%
        frametitle={#1},
        skipabove=\baselineskip plus 2pt minus 1pt,
        skipbelow=\baselineskip plus 2pt minus 1pt,
        linewidth=1.0pt,
        frametitlerule=true,
    ]%
}{%
    \end{mdframed}
}
\newcommand{\nl}{\text{\sc NL}}
\newcommand{\ul}{\text{\sc UL}}
\newcommand{\ptime}{\text{P}}
\newcommand{\cC}{\mathcal{C}}
\newcommand{\cR}{\mathcal{R}}
\newcommand{\cL}{\mathcal{L}}
\newcommand{\A}{\mathcal{A}}
\newcommand{\cA}{\mathcal{A}}
\newcommand{\cG}{\mathcal{G}}
\newcommand{\fail}{\text{\bf fail}}
\newcommand{\np}{\text{NP}}
\newcommand{\fp}{\text{\sc FP}}
\newcommand{\sharpp}{\#\text{P}}
\newcommand{\pr}{\text{\rm {\bf Pr}}}
\newcommand{\N}{\mathbb{N}} 
\tikzset{
	circ/.style={
		circle,
		draw=black, 
		thick,
		text centered,
	},
	circw/.style={
		circle,
		draw=white, 
		thick,
		text centered,
	},
	arrout/.style={
		->,
		-latex,
		thick,
	},
	arrin/.style={
		<-,
		latex-,
		thick,
	},
	arrw/.style={
		-,
		thick,
	},
	arrww/.style={
		-,
		thick,
		draw=white, 
	}
}
\newcommand{\rnl}{\text{\sc Relation\nl}}
\newcommand{\rul}{\text{\sc Relation\ul}}
\newcommand{\nfa}{\text{\rm MEM-NFA}}
\newcommand{\unfa}{\text{\rm MEM-UFA}}
\newcommand{\snfa}{\text{\rm \#NFA}}
\newcommand{\spanl}{\text{\sc SpanL}}
\newcommand{\dnf}{\text{\rm SAT-DNF}}
\newcommand{\shp}{\text{\sc \#P}}
\newcommand{\up}{\text{\sc UP}}
\newcommand{\eps}{\delta}
\newcommand{\trans}[2][]{\raisebox{-1pt}[10pt][0pt]{$\overset{#2}{\underset{^{#1}}{\raisebox{0pt}[3pt][0pt]{$\relbar\mspace{-8mu}\rightarrow$}}}$}}
\newcommand{\Open}[1]{#1~\mkern-12mu\vdash}
\newcommand{\Close}[1]{\dashv~\mkern-10mu#1}
\newcommand{\mspan}[1]{[#1\rangle}
\newcommand{\xset}{\textbf{X}}
\newcommand{\evaleva}{\text{\rm EVAL-eVA}}
\newcommand{\evalueva}{\text{\rm EVAL-UeVA}}
\newcommand{\evalRPQ}{\text{\rm EVAL-RPQ}}
\newcommand{\var}{\operatorname{var}}
\newcommand{\lo}{\operatorname{lo}}
\newcommand{\hi}{\operatorname{hi}}
\newcommand{\evalOBDD}{\text{\rm EVAL-OBDD}}
\newcommand{\evalnOBDD}{\text{\rm EVAL-nOBDD}}
\newcommand{\nA}{A}
\newcommand{\nAun}{\nA_{\emph{unroll}}}
\newcommand{\nQ}{Q}
\newcommand{\nI}{I}
\newcommand{\nF}{F}
\newcommand{\neps}{\epsilon}
\newcommand{\nemptyword}{\lambda}
\newcommand{\ndelta}{\delta}
\newcommand{\nL}{\mathcal{L}}
\newcommand{\nsketch}{\operatorname{sketch}}
\newcommand{\cP}{\mathcal{P}}
\newcommand{\cD}{\mathcal{D}}
	\providecommand\BibTeX{{%
			\normalfont B\kern-0.5em{\scshape i\kern-0.25em b}\kern-0.8em\TeX}}}
\begin{document}
	
	\title{\review{\#NFA admits an FPRAS: Efficient Enumeration, Counting, and Uniform Generation for Logspace Classes}}
          

	\author{Marcelo Arenas}
	\affiliation{%
		\institution{Pontificia Universidad Cat\'olica \& IMFD}
		\city{Santiago}
		\country{Chile}
	}
	\email{marenas@ing.puc.cl}
	
	\author{Luis Alberto Croquevielle}
	\affiliation{%
		\institution{Pontificia Universidad Cat\'olica  \& IMFD}
		\city{Santiago}
		\country{Chile}
	}
	\email{lacroquevielle@uc.cl}
	
	\author{Rajesh Jayaram}
	\affiliation{%
		\institution{Carnegie Mellon University}
		\city{Pittsburgh}
		\country{United States}
	}
	\email{rkjayara@cs.cmu.edu}
	
	\author{Cristian Riveros}
	\affiliation{%
		\institution{Pontificia Universidad Cat\'olica \& IMFD}
		\city{Santiago}
		\country{Chile}
	}
	\email{cristian.riveros@uc.cl}

	\renewcommand{\shortauthors}{Arenas, Croquevielle, Jayaram, and Riveros}
	
	\begin{abstract}

In this work, we study two simple yet general complexity classes, based on logspace Turing machines, which provide a unifying framework for efficient query evaluation in areas like information extraction and graph databases, among others. 
We investigate the complexity of three fundamental algorithmic problems for these classes: enumeration, counting and uniform generation of solutions, and show that they have several desirable properties in this respect.

Both complexity classes are defined in terms of non-deterministic logspace transducers (NL-transducers). 
For the first class, we consider the case of unambiguous NL-transducers, and we 
prove constant delay enumeration, and both counting and uniform generation of solutions in polynomial time. For the second class, we consider unrestricted NL-transducers, and we obtain polynomial delay enumeration, approximate counting in polynomial time, and polynomial-time randomized algorithms for uniform generation. More specifically, we show that each problem in this second class admits a fully polynomial-time randomized approximation scheme (FPRAS) and a polynomial-time Las Vegas algorithm (with preprocessing) for uniform generation. Remarkably, the key idea to prove these results is to show that the fundamental problem $\snfa$ admits an FPRAS, where $\snfa$ is the problem of counting the number of strings of length $n$ (given in unary) accepted by a non-deterministic finite automaton (NFA).
While this problem is known to be $\shp$-complete and, more precisely, $\spanl$-complete, it was open whether this problem admits an FPRAS. In this work, we solve this open problem, and obtain as a welcome corollary that every function in $\spanl$  admits an FPRAS.

	\end{abstract}
	
%
%
\begin{CCSXML}
 <ccs2012>
	<concept>
	<concept_id>10002951.10002952.10003190.10003192.10003210</concept_id>
	<concept_desc>Information systems~Query optimization</concept_desc>
	<concept_significance>500</concept_significance>
	</concept>
	<concept>
	<concept_id>10002951.10003317.10003347.10003352</concept_id>
	<concept_desc>Information systems~Information extraction</concept_desc>
	<concept_significance>300</concept_significance>
	</concept>
	<concept>
	<concept_id>10002951.10002952.10002953.10010820.10002958</concept_id>
	<concept_desc>Information systems~Semi-structured data</concept_desc>
	<concept_significance>100</concept_significance>
	</concept>
  <concept>
   <concept_id>10003752.10003753.10003754.10003755</concept_id>
   <concept_desc>Theory of computation~Turing machines</concept_desc>
   <concept_significance>500</concept_significance>
  </concept>
  <concept>
   <concept_id>10003752.10003777.10003778</concept_id>
   <concept_desc>Theory of computation~Complexity classes</concept_desc>
   <concept_significance>500</concept_significance>
  </concept>
  <concept>
   <concept_id>10003752.10003753.10003757</concept_id>
   <concept_desc>Theory of computation~Probabilistic computation</concept_desc>
   <concept_significance>300</concept_significance>
  </concept>
  <concept>
  <concept_id>10003752.10003766.10003776</concept_id>
  <concept_desc>Theory of computation~Regular languages</concept_desc>
  <concept_significance>500</concept_significance>
  </concept>
</ccs2012>
\end{CCSXML}

\ccsdesc[500]{Information systems~Query optimization}
\ccsdesc[300]{Information systems~Information extraction}
\ccsdesc[100]{Information systems~Semi-structured data}
\ccsdesc[500]{Theory of computation~Turing machines}
\ccsdesc[500]{Theory of computation~Complexity classes}
\ccsdesc[500]{Theory of computation~Regular languages}
\ccsdesc[300]{Theory of computation~Probabilistic computation}

%
\keywords{Enumeration, counting, uniform generation.}

	\maketitle

	\section{Introduction}
	\label{sec-intro}


Arguably, query answering is the most fundamental problem in databases. In this respect, developing 
efficient query answering algorithms, as well as understanding when this cannot be done, is of paramount importance \review{for database theory and applications}. In the most classical view of this problem, one is interested in computing all the answers, or solutions, to a query. However, as the quantity of data becomes enormously large, the number of answers to a query could also be enormous, so computing the complete set of solutions can be prohibitively expensive. In order to overcome this limitation, the idea of enumerating the answers to a query with a {\em small delay} has been recently studied in the database area \cite{segoufin2013enumerating}. More specifically, the idea is to divide the computation of the answers to a query into two phases. In a {\em preprocessing} phase, some data structures are constructed to accelerate the process of computing answers. Then in an {\em enumeration} phase, the answers are enumerated with a small delay between them. In particular, in the case of constant delay enumeration algorithms, the preprocessing phase should take polynomial time, while the time between consecutive answers should be constant.

Constant delay enumeration algorithms allow users to retrieve a fixed number of answers very efficiently, which can give them a lot of information about the solutions to a query. In fact, the same holds if users need a linear or a polynomial number of answers. However, because of the data structures used in the preprocessing phase, these algorithms usually return answers that are very similar to each other \cite{bagan2007acyclic,segoufin2013enumerating,florenzano2018constant}; for example, tuples with $n$ elements where only the first few coordinates are changed in the first answers that are returned. In this respect, other approaches can be used to return some solutions efficiently but improving \review{their variety}. Most notably, the possibility of generating an answer uniformly, at random, is a desirable condition if it can be done efficiently. Notice that returning varied solutions has been identified as an important property not only in databases, but also for algorithms that retrieve information in a broader sense~\cite{AMSW16}.

Efficient algorithms for either enumerating or uniformly generating the answers to a query are powerful tools to help in the process of understanding the answers to a query. But how can we know how long these algorithms should run, and how complete the set of computed answers is? A third tool that is needed then is an efficient algorithm for computing, or estimating, the number of solutions to a query.  
Then, taken together, enumeration, counting and uniform generation techniques form a powerful attacking trident when \review{confronting the problem} of answering a~query.

In this paper, we follow \review{a principled approach} to study the problems of enumerating, counting and uniformly generating the answers to a query. More specifically, we begin by following the guidance of \cite{jerrum1986random}, which urges the use of relations to formalize the notion of solution to a given input of a problem (for instance, to formalize the notion of answer to an input query over an input database). While there are many ways of formalizing this notion, most such formalizations only make sense for a specific kind of queries, e.g. a subset of the integers is well-suited as the solution set for counting problems, but not for sampling problems. \review{We want a general framework, so by following \cite{jerrum1986random}}, we represent a problem as a relation \review{$R \subseteq \{0,1\}^* \times \{0,1\}^*$, and we say that $y$ is a solution for an input $x$ if $(x,y) \in R$.}\footnote{\review{For the sake of presentation, we assume relations to be defined over the binary alphabet $\{0,1\}$. The results of this article also hold if we consider relations defined over an arbitrary finite alphabet $\Sigma$.}} Note that the problem of enumerating the solutions to a given input $x$ corresponds to the problem of enumerating the elements of the set $\{ y \in \{0,1\}^* \mid (x,y) \in R\}$, while the counting and uniform generation problems correspond to the problems of computing the cardinality of $\{ y \in \{0,1\}^* \mid (x,y) \in R\}$ and uniformly generating, at random, a string in this set,~respectively. 

Second, we study two simple yet general complexity classes for relations, based on non-deter\-ministic logspace transducers (NL-transducers), which provide a unifying framework for studying enumeration, counting and uniform generation. More specifically,
an NL-transducer $M$ is a \review{non-deterministic} Turing Machine with input and output alphabet $\{0,1\}$, a read-only input tape, a write-only output tape and a work-tape of which, on input $x \in \{0,1\}^*$, only the first $O(\log(|x|))$ cells can be used. Moreover, a string $y \in \{0,1\}^*$ is said to be an output of $M$ on input $x$, if there exists a run of $M$ on input $x$ that halts in an accepting state with $y$ as the string in the output tape. Finally, assuming that all outputs of $M$ on input $x$ are denoted by $M(x)$, a relation $R \subseteq \{0,1\}^* \times \{0,1\}^*$ is said to be accepted by $M$ if for every input $x$, it holds that $M(x) = \{ y \in \{0,1\}^* \mid (x,y) \in R\}$.

The first complexity class of relations studied in this paper consists of the relations accepted by unambiguous NL-transducers.
More precisely, an NL-transducer $M$ is said to be unambiguous if for every input $x$ and $y \in M(x)$, there exists exactly one run of $M$ on input $x$ that halts in an accepting state with $y$ as the string in the output tape. For this class, we are able to achieve constant delay enumeration, and both counting and uniform generation of solutions in polynomial time. For the second class, we consider (unrestricted) NL-transducers, and we obtain polynomial delay enumeration, approximate counting in polynomial time, and polynomial-time randomized algorithms for uniform generation. More specifically, we show that each problem in this second class admits a fully polynomial-time randomized approximation scheme (FPRAS)~\cite{jerrum1986random}  and a polynomial-time Las Vegas algorithm (with preprocessing) for uniform generation. It is important to mention that the key idea to prove these results is to show that the fundamental problem $\snfa$ admits an FPRAS, where $\snfa$ is the problem of counting the number of strings of length $n$ (given in unary) accepted by a non-deterministic finite automaton (NFA). While this problem is known to be $\shp$-complete and, more precisely, $\spanl$-complete~\cite{alvarez1993very}, it was open whether 
it admits an FPRAS, and only quasi-polynomial time randomized approximation schemes (QPRAS) were known for it \cite{KSM95,gore1997quasi}. In this work, we solve this open problem, and obtain as a welcome corollary that every function in $\spanl$ admits an FPRAS. Thus, to the best of our knowledge, \review{we identify $\spanl$ as the first} complexity class with a simple and robust definition based on Turing Machines, that contains $\shp$-complete problems and where each problem admits an FPRAS.

\medskip
\noindent {\bf Proviso.} This paper is an extended version of the article {\em ``Efficient Logspace Classes for Enumeration, Counting, and Uniform Generation''} that was published in the 38$^\text{th}$ ACM SIGMOD-SIGACT-SIGAI Symposium on Principles of Database Systems (PODS 2019). For this extended version, we have made many changes. In particular, we have completely reworked the proof that $\snfa$ admits a fully polynomial-time randomized approximation schema, which is the main result of this work. More specifically, we have carefully \review{restructured this proof} for the sake of readability, obtaining a simpler and more efficient algorithm. 
Besides, for the more general class of relations defined in terms of (unrestricted) NL-transducers, this result allows us to prove that each problem in this class admits a polynomial-time Las Vegas uniform generator. Such a randomized algorithm \review{needs a preprocessing phase}, which was not properly made explicit in the conference paper. We have solved this issue by introducing, and studying, the notion of preprocessing polynomial-time Las Vegas uniform generator. 

\medskip
\noindent {\bf Organization of the paper.} The main terminology used in the paper is given in Section \ref{preliminaries}.
In Section \ref{nlclass}, we define the two classes studied in this paper and state our main results.
In Section \ref{applications}, we show how these classes can be used to obtain positive results on query evaluation in information extraction, graph databases, and binary decision diagrams. 
The complete proofs of our results are presented in Sections \ref{cde} and~\ref{approximate}, and Appendix \ref{app-lemmas}. In particular, we explain the algorithmic techniques used to obtain an FPRAS for the $\snfa$ problem in Section \ref{approximate}, where we also provide a detailed proof of this result. 
Finally, some concluding remarks are given in Section \ref{conclusions}.

	\section{Preliminaries}
	\label{preliminaries}

Given natural numbers $n \leq m$, we use notation $[n,m]$ for the set $\{n, \ldots, m\}$. Beside, we use $\log(x)$ to refer to the logarithm of $x$ to base $e$. 

\subsection{Relations and problems} 
As usual, $\{0,1\}^*$ denotes the set of all strings over the binary alphabet $\{0,1\}$, $|x|$ denotes the length of a string $x \in \{0,1\}^*$, $x_1 \cdot x_2$ denotes the concatenation of two strings $x_1,x_2 \in \{0,1\}^*$, and $\{0,1\}^n$ denotes the set of all strings $x \in \{0,1\}^*$ such that $|x| = n$.
A problem is represented as a relation $R\subseteq \{0,1\}^*\times\{0,1\}^*$. For every pair $(x,y)\in R$, we interpret $x$ as being the encoding of an input to some problem, and $y$ as being the encoding of a \review{solution
to that input}. For each $x\in\{0,1\}^*$, we define the set
$W_R(x) = \{y\in\{0,1\}^* \mid (x,y)\in R\}$,
\review{and call it the
set of solutions for $x$. Also, if $y\in W_R(x)$, we call $y$ a
solution} to $x$.

This is a very general framework, so 
we work with $p$-relations \cite{jerrum1986random}. Formally, a relation $R \subseteq \{0,1\}^*\times\{0,1\}^*$ is a $p$-relation if (1) there exists a polynomial $q$ such that $(x,y)\in R$ implies that $|y|\leq q(|x|)$ and (2) there exists a deterministic Turing Machine that receives as input $(x,y)\in\{0,1\}^*\times\{0,1\}^*$, runs in polynomial time and accepts if, and only if, $(x,y)\in R$.
Without loss of generality, from now on we assume that for a $p$-relation $R$, there exists a polynomial $q$ such that $|y| = q(|x|)$ for every $(x,y) \in R$. 
This is not a strong requirement, \review{since all
solutions} can be made to have the same length through~padding.

\subsection{Enumeration, counting and uniform generation}
There are several computational problems associated to a relation $R$. For example, given a relation $R$ and an input $x\in\{0,1\}^*$, we could consider the existence problem of deciding whether \review{there are any
solutions} for $x$. 
In this paper,  given a $p$-relation $R$, we are interested in the following problems:
\begin{center}
	\framebox{
		\begin{tabular}{p{1.75cm} p{8.25cm}}
			\textbf{Problem:} & $\ENUM(R)$\\
			\textbf{Input:} &  A word $x \in \{0,1\}^*$ \\
			\textbf{Output:} & Enumerate all $y \in W_R(x)$ 
			without repetitions
		\end{tabular}
	}
\end{center}
\begin{center}
	\framebox{
		\begin{tabular}{p{1.75cm} p{8.25cm}}
			\textbf{Problem:} & $\COUNT(R)$ \\
			\textbf{Input:} &  A word $x \in \{0,1\}^*$ \\
			\textbf{Output:} & The size $|W_R(x)|$
		\end{tabular}
	}
\end{center}
\begin{center}
	\framebox{
		\begin{tabular}{p{1.75cm} p{8.25cm}}
			\textbf{Problem:} & $\GEN(R)$ \\
			\textbf{Input:} &  A word $x \in \{0,1\}^*$\\
			\textbf{Output:} &  Generate uniformly, at random, a word in $W_R(x)$
		\end{tabular}
	}
\end{center}
Given that $|y| = q(|x|)$ for every $(x,y) \in R$, we have that $W_R(x)$ is finite and these three problems are well defined. 
Notice that in the case of $\ENUM(R)$, we do not assume a specific order on words, so that \review{the elements} of $W_R(x)$ can be enumerated in any order (but without repetitions). Moreover, in the case of $\COUNT(R)$, we assume that $|W_R(x)|$ is encoded in binary and, therefore, the size of the output is logarithmic in the size of $W_R(x)$. Finally, in the case of $\GEN(R)$, we generate a word $y \in W_R(x)$ with probability $\frac{1}{|W_R(x)|}$ if the set $W_R(x)$ is not empty; otherwise, we return a special symbol $\bot$ to indicate that $W_R(x) = \emptyset$. 


\subsection{Enumeration with polynomial and constant delay}
An enumeration algorithm for $\ENUM(R)$ is a procedure that receives an input $x \in \{0,1\}^*$ and, during the computation, it outputs each word in $W_R(x)$, one by one and without repetitions. The time between two consecutive outputs is called the delay of the enumeration. In this paper, we consider two restrictions on the delay: \review{polynomial delay and constant delay}. \review{\emph{Polynomial delay enumeration}} is the standard notion of polynomial time efficiency in enumeration algorithms~\cite{johnson1988generating} and is defined as follows. An enumeration algorithm is of polynomial delay if there exists a polynomial $p$ such that for every input $x \in \{0,1\}^*$, the time between the beginning of the algorithm and the initial output, between any two consecutive outputs, and between the last output and the end of the algorithm, is bounded by $p(|x|)$.

\review{\emph{Constant delay enumeration}} is another notion of efficiency for enumeration algorithms that has attracted a lot attention in the recent years~\cite{bagan2006mso,courcelle2009linear,segoufin2013enumerating}. 
This notion has stronger guarantees compared to polynomial delay: the enumeration is done in a second phase after the processing of the input and taking constant-time \review{between two consecutive} outputs in a very precise sense. 
Several notions of \review{constant delay enumeration} has been given, most of them in database theory where it is important to divide the analysis between query and data. In this paper, we want a definition of \review{constant delay} that is agnostic of the distinction between query and data  (i.e. combined complexity) and, for this reason, we use a more general notion of \review{constant delay} enumeration than the one in~\cite{bagan2006mso,courcelle2009linear,segoufin2013enumerating}.

\review{Constant delay enumeration cannot be achieved in general with a standard Turing Machine, because merely moving the head through the tape will take up more than constant time.} So, as it is standard in the literature~\cite{segoufin2013enumerating}, for the notion of \review{constant delay enumeration} we consider enumeration algorithms on Random Access Machines (RAM) with addition and uniform cost measure~\cite{aho1974design}. 
Given a relation $R \subseteq \{0,1\}^* \times \{0,1\}^*$, an enumeration algorithm~$E$ for $R$ has constant delay if $E$ runs in two phases over the input $x$.
\begin{enumerate}
	\item The first phase (precomputation), which does not produce output. 
	\item The second phase (enumeration), which occurs
	immediately after the precomputation phase, where all words in $W_R(x)$ are enumerated without repetitions and satisfying the following conditions, for a fixed constant $c$:
	\begin{enumerate}
		\item \label{cd-1} the time it takes to generate the first output $y$ is bounded by $c \cdot |y|$;
		\item \label{cd-2} the time between two consecutive outputs $y$ and $y'$ is bounded by $c \cdot |y'|$ and does not depend on $y$; and
		\item \label{cd-3} the time between the final element $y$ that is returned and the
		end of the enumeration phase is bounded by $c \cdot |y|$,
	\end{enumerate}
\end{enumerate}
We say that $E$ is a constant delay algorithm for $R$ with precomputation phase $f$, if $E$ has constant delay and the precomputation phase takes time $O(f(|x|))$. Moreover, we say that $\ENUM(R)$ can be solved with constant delay 
if there exists a \review{constant delay} algorithm for $R$ with precomputation phase $p$ for some polynomial~$p$. 

Our notion of \review{constant delay} algorithm differ from the definitions in~\cite{segoufin2013enumerating} in two aspects. 
\review{First, in our definition the input is not divided into some components, so the preprocessing phase must take polynomial time on the size of the entire input. In the case of constant delay algorithms for query answering, the input is usually divided into the data and the query, and the preprocessing phase is only asked to take polynomial time on the size of the data (as the query is usually assumed to be fixed, which is referred to as data complexity~\cite{V82}).}
Second, our definition of \review{constant delay} is what in \cite{courcelle2009linear,bagan2006mso} is called {\em linear delay in the size of the output}, namely, writing the next output is linear in its size and does not depend on the size of the input. This is a natural assumption, since each output must at least be written down to return it to the user. 
Notice that, given an input $x$ and an output $y$, the notion of \review{polynomial delay} above means polynomial in $|x|$ and, instead, the notion of linear delay from~\cite{courcelle2009linear,bagan2006mso} means linear in $|y|$, i.e., constant in the size of $|x|$.
Thus, we have decided to call the two-phase enumeration from above \review{``constant delay''}, as it does not depend on the size of the input $x$, and the delay is just what is needed to write the output (which is the minimum requirement for such an enumeration algorithm). 

\subsection{Approximate counting and Las Vegas uniform generation with preprocessing}
Given a relation $R \subseteq \{0,1\}^* \times \{0,1\}^*$, the problem $\COUNT(R)$ can be solved efficiently if there exists a polynomial-time algorithm that, given $x \in \{0,1\}^*$, computes $|W_R(x)|$. In other words, if we think of $\COUNT(R)$ as a function that maps $x$ to the value $|W_R(x)|$, then $\COUNT(R)$ can be computed efficiently if $\COUNT(R) \in \fp$, the class of functions that can be computed in polynomial time. As such a condition does not hold for many fundamental problems, we also consider the possibility of efficiently approximating the value of the function $\COUNT(R)$. More precisely, $\COUNT(R)$ is said to admit a fully polynomial-time randomized approximation scheme (FPRAS)~\cite{jerrum1986random} if there exists a randomized algorithm 
$\A : \{0,1\}^* \times (0,1) \to  \N$ and a polynomial $q(u,v)$ such that for every $x \in \{0,1\}^*$ and $\eps \in (0,1)$, it holds that:
\begin{eqnarray*}
\pr(|\A(x,\eps) - |W_R(x)|| \leq \eps \cdot |W_R(x)|) & \geq & \frac{3}{4}
\end{eqnarray*}
and \review{the time needed} to compute $\A(x,\eps)$ is at most $q(|x|,\frac{1}{\eps})$. Thus, $\A(x, \eps)$ approximates the value $|W_R(x)|$ with a relative error of $\eps$, and it can be computed in polynomial time in the size of $x$ and the value $\frac{1}{\eps}$. 

The problem $\GEN(R)$ can be solved efficiently if there exists a polynomial-time randomized algorithm that, given $x \in \{0,1\}^*$, generates  an element of $W_R(x)$  with uniform probability distribution (if $W_R(x) = \emptyset$, then it returns~$\bot$). However, as in the case of $\COUNT(R)$, the existence of such a generator is not guaranteed for many fundamental problems, so we also consider a relaxed notion of generation that has a probability of failing in returning a solution. 
More precisely, $\GEN(R)$ is said to admit a preprocessing polynomial-time Las Vegas uniform generator (PPLVUG) if there exists a pair of randomized algorithms $\cP : \{0,1\}^* \times (0,1) \to  (\{0,1\}^* \cup \{ \bot\})$, $\cG : \{0,1\}^* \to  (\{0,1\}^* \cup \{\fail\})$ and a pair of polynomials $q(u,v)$, $r(u)$ such that for every~$x \in \{0,1\}^*$ and $\delta \in (0,1)$:
\begin{enumerate}
\item \label{pplvug-1} The preprocessing algorithm $\cP$ receives as inputs $x$ and $\delta$, \review{and runs in time bounded by} $q(|x|,  \log(1/\delta))$. If $W_R(x) \neq \emptyset$, then $\cP(x,\delta)$ returns a string $\cD$ such that $\cD$ is \emph{good-for-generation} with probability $1-\delta$. If $W_R(x) = \emptyset$, then $\cP(x,\delta)$ returns $\bot$.

\item \label{pplvug-2} The generator algorithm $\cG$ receives as input $\cD$ \review{and runs in time bounded by} $r(|\cD|)$. Moreover, if $\cD$ is good-for-generation then:
\begin{enumerate}
\item \label{pplvug-2-1} $\cG(\cD)$ returns $\fail$ with a probability at most $\frac{1}{2}$, and

\item \label{pplvug-2-2} conditioned on not returning $\fail$, $\cG(\cD)$ returns a truly uniform sample $y \in W_R(x)$, i.e. with a probability $1/|W_R(x)|$ for each $y \in W_R(x)$.
\end{enumerate}
Otherwise, if $\cD$ is not good-for-generation, then $\cG(\cD)$ outputs a string without any guarantee.
\end{enumerate}

In line with the notion of \review{constant delay enumeration} algorithm, we allow the previous concept of uniform generator to have a preprocessing phase. If \review{there is no
solution} for the input $x$ (that is, $W_R(x) = \emptyset$), then the preprocessing algorithm $\cP$ returns the symbol $\bot$. Otherwise, the invocation $\cP(x,\delta)$ returns a string  $\cD$ in $\{0,1\}^*$, namely, a data structure or ``advice'' for the generation procedure~$\cG$. The output of the invocation $\cP(x,\delta)$ is used by the generator algorithm $\cG$ to \review{produce a
solution} of $x$ with uniform distribution (that is, with probability $1/|W_R(x)|$). If the output of $\cP(x,\delta)$ is not good-for-generation (which occurs with probability $\delta$), then we have no guarantees on the output of the generator algorithm $\cG$. Otherwise, we know that $\cG(\cD)$ returns an element of $W_R(x)$ with uniform distribution, or it returns $\fail$. Furthermore, we can repeat $\cG(\cD)$ as many times as needed, generating each time a truly uniform sample $y$ from $W_R(x)$ whenever $y \neq \fail$.
   
Notice that by condition (\ref{pplvug-2-1}), we know that this probability of failing is smaller than $\frac{1}{2}$, so that by invoking $\cG(\cD)$ several times we can make this probability arbitrarily small (for example, the probability that $\cG(\cD)$ returns $\fail$ in 1000 consecutive independent invocations is at most $(\frac{1}{2})^{1000}$). 
Moreover, we have that $\cP(x,\delta)$ can be computed in time $q(|x|,  \log(1/\delta))$, so we can consider an exponentially small value of $\delta$ such as 
\begin{eqnarray*}
\frac{1}{2^{|x|+1000}}, 
\end{eqnarray*}
and still obtain that $\cP(x,\delta)$ can be computed in time polynomial in $|x|$. Notice that with such a value of $\delta$, the probability of producing a  good-for-generation string $\cD$ is at least 
\begin{eqnarray*}
1 - \frac{1}{2^{1000}},
\end{eqnarray*}
which is an extremely high probability . 
Finally, it is important to notice that the size of $\cD$ is at most $q(|x|,  \log(1/\delta))$, so that $\cG(\cD)$ can be computed in time polynomial in $|x|$ and $\log(1/\delta)$. Therefore, $\cG(\cD)$ can be computed in time polynomial in $|x|$ even if we consider an exponentially small value for $\delta$ such as $1/2^{|x|+1000}$.

It is important to notice that the notion of preprocessing polynomial-time Las Vegas uniform generator imposes stronger requirements than the notion of fully polynomial-time almost uniform generator introduced in~\cite{jerrum1986random}. In particular, the latter not only has a probability of failing, but also considers the possibility of generating a solution with a probability distribution that is {\em almost} uniform, that is, an algorithm that generates an string $y \in W_R(x)$ with a probability in an interval $[1/|W_R(x)| - \epsilon, 1/|W_R(x)| + \epsilon]$ for a given error $\epsilon \in (0,1)$.

	\section{NLOGSPACE transducers: definitions and our main results}
	\label{nlclass}


The goal of this section is to provide simple yet general definitions of classes of relations with good properties in terms of enumeration, counting and uniform generation. 
More precisely, we are first aiming at providing a class $\cC$ of relations  that has a simple definition in terms of Turing Machines and 
such that for every relation $R \in \cC$, it holds that $\ENUM(R)$ can be solved with constant delay, and both $\COUNT(R)$ and $\GEN(R)$ can be solved in polynomial time.
Moreover, as it is well known that such good conditions cannot always be achieved, we are then aiming at extending the definition of $\cC$ to obtain a simple class, also defined in terms of Turing Machines and with good approximation properties. 
It is important to mention that we are not looking for an exact characterization in terms of Turing Machines of the class of relations that admit constant delay enumeration algorithms, as this may result in an overly complicated model. Instead, we are looking for simple yet general classes of relations with good properties in terms of enumeration, counting and uniform generation, and which can serve as a starting point for the systematic study of these three fundamental properties. 

A key notion that is used in our definitions of classes of relations is that of transducer.
An $\nl$-transducer $M$ is a \review{non-deterministic} Turing Machine with input and output alphabet $\{0,1\}$, a read-only input tape, a write-only output tape where the head is always moved to the right once a symbol is written in it (so that the output cannot be read by $M$), and a work-tape of which, on input $x$, only the first $f(|x|)$ cells can be used, where $f(n) \in O(\log(n))$. A string $y \in \{0,1\}^*$ is said to be an output of $M$ on input $x$, if there exists a run of $M$ on input $x$ that halts in an accepting state with $y$ as the string in the output tape. The set of all outputs of $M$ on input $x$ is denoted by $M(x)$ (notice that $M(x)$ can be empty). Finally, the relation accepted by $M$, denoted by $\cR(M)$, is defined as $\{ (x,y) \in \{0,1\}^*\times\{0,1\}^* \mid y \in M(x)\}$.

\begin{definition}
A relation $R$ is in $\rnl$ if, and only if, there exists an $\nl$-transducer $M$ such that $\cR(M) = R$.
\end{definition}
\review{Cycles are forbidden in $\nl$-transducers to have polynomial-size solutions for each input~\cite{alvarez1993very}. However, we do not need to impose this restriction here as we only work with $p$-relations in this paper (see Section~\ref{preliminaries}).}

The class $\rnl$ should be general enough to contain some natural and
well-studied problems. A first such a problem is the satisfiability of
a propositional formula in DNF. As a relation, this problem can be
represented as follows:
\begin{align*}
\dnf \ = \ \{ (\varphi,\sigma) \mid \varphi \text{ is a \review{propositional formula} in DNF}, \sigma \text{ is a truth assignment and } \sigma(\varphi) = 1 \}.
\end{align*}
Thus, we have that $\ENUM(\dnf)$ corresponds to the problem of enumerating the truth assignments satisfying a propositional formula $\varphi$ in DNF, while $\COUNT(\dnf)$ and $\GEN(\dnf)$ correspond to the problems of counting and uniformly generating such truth assignments, respectively. It is not difficult to see that 
$\dnf \in \rnl$. 
In fact, assume that we are given a propositional formula $\varphi$ of the form $D_1 \vee \cdots \vee D_m$, where each $D_i$ is a conjunction of literals, that is, a conjunction of propositional variables and negation of propositional variables. Moreover, assume that each propositional variable in $\varphi$ is of the form $x\_k$, where $k$ is a binary number, and that $x\_1$, $\ldots$, $x\_n$ are the variables occurring in $\varphi$. Notice that with such a representation, we have that $\varphi$ is a string over the alphabet $\{x,\_,0,1,\wedge,\vee,\neg\}$.\footnote{\review{For the sake of presentation, we consider a non-binary alphabet in this case, although it is easy to see how $\dnf$ can be represented by using the binary alphabet.}}  We define as follows an $\nl$-transducer $M$ such that $M(\varphi)$ is the set of truth assignments satisfying $\varphi$. On input $\varphi$, the $\nl$-transducer $M$ non-deterministically chooses a disjunct $D_i$, which is represented by two indexes indicating the starting and ending symbols of $D_i$ in the string $\varphi$. Then it checks whether $D_i$ is satisfiable, that is, whether $D_i$ does not contain complementary literals. Notice that this can be done in logarithmic space by checking for every $j \in \{1, \ldots, n\}$, whether $x\_j$ and $\neg x\_j$ are both literals in $D_i$. If $D_i$ is not satisfiable, then $M$ halts in a non-accepting state. Otherwise, $M$ returns a satisfying truth assignment of $D_i$ as follows. A truth assignment for $\varphi$ is represented by a string of length $n$ over the alphabet $\{0,1\}$, where the $j$-th symbol of this string is the truth value assigned to variable $x\_j$. Then for every $j \in \{1, \ldots, n\}$, if $x\_j$ is a conjunct in $D_i$, \review{then $M$ writes} the symbol 1 in the output tape, and if $\neg x\_j$ is a conjunct in $D_i$, \review{then $M$ writes} the symbol 0 in the output tape. Finally if neither $x\_j$ nor $\neg x\_j$ is a conjunct in $D_i$, then $M$ non-deterministically chooses a symbol $b \in \{0,1\}$, and it writes $b$ in the output tape. 

Given that $\COUNT(\dnf)$ is a \review{$\shp$-complete problem \cite{PB83}},
we cannot expect $\COUNT(R)$ to be solvable in polynomial time for every $R \in \rnl$. However, $\COUNT(\dnf)$ admits an FPRAS \cite{KL83}, so we can still hope for $\COUNT(R)$ to admit an FPRAS for every $R \in \rnl$. It turns out that proving such a result involves providing an FPRAS for another natural and fundamental problem: $\snfa$. More specifically, $\snfa$ is the problem of counting the number of words of length $k$ accepted by a non-deterministic finite automaton without epsilon transitions (NFA), where $k$ is given in unary (that is, $k$ is given as a string $0^k$). It is known that $\snfa$ is $\shp$-complete~\cite{alvarez1993very}, but it is open whether it admits an FPRAS; in fact, the best randomized approximation scheme known for $\snfa$ runs in time $n^{O(\log(n))}$~\cite{KSM95}. In our notation, this problem is represented by the following relation:
\begin{align*}
\nfa = \{ ((\nA,0^k), w) \mid \nA \text{ is an NFA with alphabet } \{0,1\},
 w \in \{0,1\}^k \text{ and } w \text{ is accepted by } \nA\},
\end{align*}
that is, we have that $\snfa = \COUNT(\nfa)$. It is easy to see that $\nfa \in \rnl$. Hence, we give a positive answer to the open question of whether $\snfa$ admits an FPRAS by proving the following general result about $\rnl$.
\begin{theorem}\label{theo-rnl}
If $R \in \rnl$, then $\ENUM(R)$ can be solved with polynomial delay, $\COUNT(R)$ admits an FPRAS, and $\GEN(R)$ admits a PPLVUG.
\end{theorem}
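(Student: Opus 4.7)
The plan is to reduce all three statements to a single uniform NFA construction plus the paper's central FPRAS for $\snfa$. Given $R \in \rnl$ witnessed by an $\nl$-transducer $M$ and an input $x$, I would first layer the configuration graph of $M$ on $x$ by the number of symbols already written to the output tape, label each transition by the symbol it emits (or by $\lambda$ if it emits none), normalize so that every accepting run emits exactly $q(|x|)$ symbols (always possible via padding), and eliminate $\lambda$-transitions in the standard way. Because $M$ uses $O(\log|x|)$ work-tape space, the result is an NFA $\nA_x$ of polynomial size, computable from $(M,x)$ in polynomial time, with $L(\nA_x) \cap \Sigma^{q(|x|)} = W_R(x)$. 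The identity $|W_R(x)| = \snfa(\nA_x, 0^{q(|x|)})$ immediately makes the assumed FPRAS for $\snfa$ an FPRAS for $\COUNT(R)$. For $\ENUM(R)$, a classical flashlight procedure on $\nA_x$ suffices: walking down the prefix tree in lexicographic order, maintain the subset of reachable states after the current prefix $u$, and for each candidate symbol $a$ decide by polynomial-time reachability in $\nA_x$ whether any string in $W_R(x)$ extends $u \cdot a$; since no prefix is visited twice, the delay between consecutive outputs is polynomial.

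The delicate part is the PPLVUG. The preprocessing $\cP(x,\delta)$ would build $\nA_x$, test emptiness (returning $\bot$ if $W_R(x)=\emptyset$), and run the $\snfa$ FPRAS with relative error $\epsilon$ inversely polynomial in $|x|$ and confidence boosted to $1-\delta$ by median amplification to produce an estimate $\tilde M$ of $|W_R(x)|$; then $\cD$ packages $\nA_x$ together with $\tilde M$, and is declared \emph{good-for-generation} when $\tilde M \in (1\pm\epsilon)|W_R(x)|$. The generator $\cG(\cD)$ then performs a self-reducible walk on $\nA_x$: having sampled prefix $u$ with frontier $F_u$, it uses its own randomness to invoke the $\snfa$ FPRAS on the NFA obtained from $\nA_x$ by restarting at $F_{u\cdot a}$, thereby approximating the number of length-$(q(|x|)-|u|-1)$ completions for each $a \in \Sigma$; it then samples $a$ with probability proportional to these estimates and accumulates the conditional probability used into a running product $\tilde P(y)$. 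Having produced a candidate $y$, $\cG$ outputs $y$ with probability $(1-2\epsilon)/(\tilde M \cdot \tilde P(y))$ and otherwise returns $\fail$. A short telescoping argument shows that on a good $\cD$ this acceptance probability lies in $(0,1]$; moreover $\tilde P(y)$ times the acceptance probability is a constant independent of $y$, so the distribution conditional on not failing is \emph{exactly} uniform on $W_R(x)$, and the total acceptance probability equals $(1-2\epsilon)\cdot|W_R(x)|/\tilde M \geq 1/2$.

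The main obstacle is making this last scheme truly uniform rather than merely almost uniform in the sense of Jerrum--Valiant--Vazirani. Two points require care. First, because $\nA_x$ may be highly ambiguous, the counts used inside the sampling loop must be counts of \emph{distinct strings} (hence $\snfa$ calls), not counts of paths in the configuration graph: the latter are trivially computable by dynamic programming but would bias the distribution toward strings with many accepting runs. Second, the rejection factor must be expressible purely from $\tilde M$ and the conditional probabilities already used along the walk, which forces $\epsilon$ to be chosen small enough that the $q(|x|)$-fold product of per-step errors still leaves the final acceptance probability bounded below by $1/2$ independently of $|W_R(x)|$, and so that $\tilde P(y)\cdot\tilde M$ stays within a constant factor of $1$ for every $y$.
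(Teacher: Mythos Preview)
Your reduction of an arbitrary $R \in \rnl$ to an instance $(\nA_x,0^{q(|x|)})$ of $\nfa$ is exactly the completeness argument the paper uses (Proposition~\ref{prop_complete_problems}), and your treatments of $\COUNT$ (invoke the $\snfa$ FPRAS) and $\ENUM$ (flashlight search on $\nA_x$ using polynomial-time reachability, i.e.\ self-reducibility) coincide with the paper's route and are correct.

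The gap is in your PPLVUG. You let the generator $\cG$ make \emph{fresh} $\snfa$ FPRAS calls along the sampled prefix, and then reject with probability $1-(1-2\epsilon)/(\tilde M\cdot\tilde P(y))$. For this to yield \emph{exact} uniformity you need $(1-2\epsilon)/(\tilde M\cdot\tilde P(y))\le 1$ for every $y$ and every realization of $\cG$'s internal randomness; otherwise you must either cap the ratio at $1$ (and then $\pr(\text{output }y)=E[\min(\tilde P(y),(1-2\epsilon)/\tilde M)]$ depends on $y$) or declare $\fail$ when the ratio exceeds $1$ (and then $\pr(\text{output }y)=\frac{1-2\epsilon}{\tilde M}\cdot\pr[\tilde P(y)\ge (1-2\epsilon)/\tilde M]$, which again depends on $y$). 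But an FPRAS only guarantees a good estimate with probability $3/4$; no amount of boosting drives the failure probability to zero, and ``good-for-generation'' is a property of $\cD$ alone, not of the coins $\cG$ draws afterwards. So conditioned on $\cD$ being good you still cannot certify that the rejection ratio is always at most~$1$, and the output distribution is only \emph{almost} uniform, which is precisely what the PPLVUG definition rules out. Note also that you cannot simply move the FPRAS calls into $\cP$: the frontier $F_u$ ranges over all subsets of $Q$, so there are exponentially many sub-instances to pre-estimate.

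The paper resolves this differently. Its preprocessing computes, for every single state $p$ and layer $\beta$, an estimate $N(p^\beta)$ together with a small uniform sample $S(p^\beta)\subseteq\nL(p^\beta)$ (the ``sketch''). The crucial point is that, given the sketch, the size of $\nL(P^\beta)$ for an \emph{arbitrary} subset $P\subseteq Q$ can be estimated \emph{deterministically} via formula~\eqref{eq:remove-intersection}, using the samples $S(p^\beta)$ to correct for the intersections caused by ambiguity. Hence once $\cD$ is fixed and good, the generator's per-step branching probabilities are deterministic functions of $\cD$, the rejection ratio is provably in $(0,1]$ for every $y$ (Proposition~\ref{prop:unifsamplecondition}), and the output is truly uniform. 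This precomputed-sketch idea is exactly the missing ingredient in your scheme.
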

It is worth mentioning a fundamental consequence of this result in computational complexity. The \review{class of functions} $\spanl$ was introduced in \cite{alvarez1993very} to provide a characterization of some functions that are hard to compute. More specifically,
a function $f : \{0,1\}^* \to \N$ is in $\spanl$ if there exists an $\nl$-transducer $M$ with input alphabet $\{0,1\}$ such that $f(x) = |M(x)|$ for every $x \in \{0,1\}^*$. 
The complexity class $\spanl$ is contained in $\shp$, and it is a hard class in the sense that if $\spanl \subseteq \fp$, 
then $\ptime = \np$~\cite{alvarez1993very}, where $\fp$ is the class of functions that can be computed in polynomial time. In fact, $\spanl$ 
has been instrumental in proving that some functions are difficult to compute~\cite{alvarez1993very,HV95,ACP12,LM13}.
It is not difficult to see that $\snfa$ belongs to $\spanl$~\cite{alvarez1993very}. 

Given functions $f,g : \{0,1\}^* \to \N$, 
$f$ is said to be parsimoniously reducible to $g$ in polynomial-time
if 
there exists a 
polynomial-time computable function $h : \{0,1\}^* \to \{0,1\}^*$ such that, for every $x \in \{0,1\}^*$, it holds that $f(x) = g(h(x))$. 
It is known that $\snfa$ is $\spanl$-complete under polynomial-time parsimonious reductions, which in particular implies that if $\snfa$ can be computed in polynomial time, then $\ptime = \np$~\cite{alvarez1993very}. Moreover, given that $\snfa$ admits an FPRAS and parsimonious reductions preserve the existence of FPRAS, we obtain the following corollary from Theorem \ref{theo-rnl}.
\begin{corollary}
Every function in $\spanl$ admits an FPRAS.
\end{corollary}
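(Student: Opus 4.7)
The plan is to derive the corollary as a short consequence of Theorem~\ref{theo-rnl} combined with the known $\spanl$-completeness of $\snfa$ under polynomial-time parsimonious reductions. The work has essentially already been done; what remains is to chain three facts together.

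First, I would observe that $\nfa \in \rnl$ (as already noted in the text: an $\nl$-transducer on input $(\nA,0^k)$ can guess a computation of $\nA$ on a nondeterministically produced word $w \in \Sigma^k$ while writing $w$ to the output tape, using only logarithmic space to store the current state of $\nA$, a pointer into the input, and a counter up to $k$). Hence $\snfa = \COUNT(\nfa)$, and by Theorem~\ref{theo-rnl} the function $\snfa$ admits an FPRAS; call it $\A_{\snfa}(\cdot,\cdot)$ with running time bounded by some polynomial $q_0(|x|, 1/\eps)$.

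Second, I would invoke the fact recalled in the paragraph preceding the corollary: $\snfa$ is $\spanl$-complete under polynomial-time parsimonious reductions. Thus for every $f \in \spanl$ there is a polynomial-time computable function $h_f : \Sigma^* \to \Sigma^*$ with $f(x) = \snfa(h_f(x))$ for all $x \in \Sigma^*$, and $|h_f(x)|$ is bounded by some polynomial $p_f(|x|)$.

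Third, I would define an FPRAS $\A_f$ for $f$ by the composition
\[
\A_f(x,\eps) \ := \ \A_{\snfa}(h_f(x),\eps).
\]
Correctness: since $f(x) = \snfa(h_f(x))$, the guarantee
\[
\pr\bigl(\,|\A_{\snfa}(h_f(x),\eps) - \snfa(h_f(x))| \leq \eps \cdot \snfa(h_f(x))\,\bigr) \ \geq \ \tfrac{3}{4}
\]
immediately yields the required $(1\pm\eps)$-approximation of $f(x)$ with probability $\geq 3/4$. Efficiency: the total running time is bounded by the time to compute $h_f(x)$ plus $q_0(|h_f(x)|, 1/\eps) \leq q_0(p_f(|x|), 1/\eps)$, which is polynomial in $|x|$ and $1/\eps$.

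There is no real obstacle here beyond making sure that parsimonious reductions do indeed transport FPRAS-membership verbatim, which they do because $f(x)$ and $\snfa(h_f(x))$ are literally equal (not merely within a multiplicative factor), so no error is introduced by the reduction itself. The statement of the corollary is therefore an immediate combination of Theorem~\ref{theo-rnl} applied to the relation $\nfa$ and the $\spanl$-completeness of $\snfa$ under polynomial-time parsimonious reductions.
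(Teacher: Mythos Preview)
Your proposal is correct and follows exactly the paper's own argument: derive an FPRAS for $\snfa$ from Theorem~\ref{theo-rnl} (since $\nfa \in \rnl$), invoke the $\spanl$-completeness of $\snfa$ under polynomial-time parsimonious reductions, and observe that such reductions preserve the existence of an FPRAS. The paper states this in one sentence immediately before the corollary; you have simply written out the composition $\A_f(x,\eps) := \A_{\snfa}(h_f(x),\eps)$ and its correctness and running-time analysis explicitly.
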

Although some classes 
$\cC$ containing $\shp$-complete functions and for which every $f \in \cC$ admits an FPRAS have been identified before \cite{saluja1995descriptive,AMR17}, to the best of our knowledge this is the first such a class with a simple and robust definition based on Turing Machines.

A tight relationship between the existence of an FPRAS and the existence of a schema for almost uniform generation was proved in~\cite{jerrum1986random}, for the class of relations that are {\em self-reducible}. Thus, one might wonder whether the existence of a  PPLVUG  for $\GEN(R)$ in Theorem \ref{theo-rnl} is just a corollary of our FPRAS for $\COUNT(R)$ along with the result in~\cite{jerrum1986random}.
\review{However, as the notion of PPLVUG asks for a uniform generator without any distributional error $\epsilon$, it is not clear how to infer its existence from the results in \cite{jerrum1986random}.}
Thus, we prove in Section \ref{approximate} that $\COUNT(R)$ admits an FPRAS and $\GEN(R)$ admits a PPLVUG, for a relation $R \in \rnl$, without utilizing the aforementioned result from~\cite{jerrum1986random}.

A natural question at this point is whether a simple syntactic restriction on the definition of $\rnl$ gives rise to a class of relations with better properties in terms of enumeration, counting and uniform generation.
Fortunately, the answer to this question comes by imposing a natural and well-studied restriction on Turing Machines, \review{which allows the definition of a class} that contains many natural problems.
More precisely, we consider the notion of $\ul$-transducer, where the letter ``U'' stands for ``unambiguous''.  Formally, $M$ is a $\ul$-transducer if $M$ is an $\nl$-transducer such that for every input $x$ and $y \in M(x)$, there exists exactly one run of $M$ on input $x$ that halts in an accepting state with $y$ as the string in the output tape. Notice that this notion of transducer is based on well-known classes of decision problems (e.g. $\up$~\cite{V76} and $\ul$ \cite{RA00}) adapted to our case, namely, adapted to problems defined as relations.
\begin{definition}
A relation $R$ is in $\rul$ if, and only if, there exists a $\ul$-transducer $M$ such that $\cR(M) = R$.
\end{definition}
For the class $\rul$, we obtain the following result. 
\begin{theorem}\label{theo-rul}
If $R \in \rul$, then $\ENUM(R)$ can be solved with constant delay, there exists a polynomial-time algorithm for $\COUNT(R)$, and there exists a polynomial-time randomized algorithm for  $\GEN(R)$.
\end{theorem}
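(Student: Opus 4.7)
My plan is to reduce all three problems to operations on the polynomial-size configuration graph of the $\ul$-transducer $M$ witnessing $R\in\rul$, using unambiguity to identify outputs in $W_R(x)$ with accepting runs of $M$ on $x$. Let $C$ be the set of configurations of $M$ on an input $x$. Since $M$ uses $O(\log|x|)$ space, $|C|$ is polynomial in $|x|$, and we may assume by padding that every accepting run has the same length $T=\poly(|x|)$. Build the directed acyclic graph $G$ whose nodes are the pairs $(c,t)$ with $c\in C$ and $0\leq t\leq T$, with edges mirroring the one-step transitions of $M$ and labeled by the (possibly empty) symbol written on the output tape. Accepting runs correspond bijectively to paths from $(c_0,0)$ to an accepting node $(c_{\mathrm{acc}},T)$ in $G$, and by unambiguity distinct such paths produce distinct outputs; hence $|W_R(x)|$ equals the number of accepting paths in $G$.

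For $\COUNT(R)$, compute in polynomial time by backward dynamic programming on $G$ the values $N(v)$ counting paths from $v$ to an accepting node; then $|W_R(x)|=N((c_0,0))$. For $\GEN(R)$, use the table $N(\cdot)$ to sample a random accepting path: starting at $(c_0,0)$, at each node $v$ pick a successor $v'$ with probability $N(v')/N(v)$, and concatenate the symbols written along the chosen edges (returning $\bot$ when $N((c_0,0))=0$). The sampled path is uniform among accepting paths, so by unambiguity the emitted output is uniform in $W_R(x)$, giving a polynomial-time randomized uniform generator.

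For $\ENUM(R)$, I would prune $G$ to its live sub-DAG by removing every node $v$ with $N(v)=0$, so that every remaining node extends to at least one accepting sink. Enumerate accepting paths by depth-first traversal, keeping the current run on a stack, extending via the next unused successor, emitting the associated output when an accepting sink is reached, and then backtracking to the most recent branching point. Unambiguity guarantees that distinct paths produce distinct outputs, so no repetitions occur.

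The main obstacle is achieving worst-case delay $O(|y'|)$ between consecutive outputs rather than merely $\poly(|x|)$, since the raw graph $G$ may contain many silent transitions that a naive DFS would have to traverse. I would address this with a standard preprocessing step transforming $M$ into an equivalent $\ul$-transducer $M'$ whose every transition writes exactly one output symbol, by decomposing multi-symbol writes and contracting chains of silent transitions into single edges. Unambiguity survives this contraction: two distinct silent paths between live configurations would lift to two accepting runs with identical output, contradicting the unambiguity of $M$. In the DAG associated with $M'$ every accepting path has length exactly $|y|=q(|x|)$, so in the DFS both the descent to the next accepting sink and the preceding backtracking cost $O(|y'|)$ steps, and because every node in the live sub-DAG extends to an accepting sink no step is wasted on a dead branch. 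This yields the required worst-case constant delay.
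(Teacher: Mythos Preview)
Your proposal is correct and follows essentially the same approach as the paper. Both arguments build the polynomial-size configuration automaton of the $\ul$-transducer, use unambiguity to identify outputs with accepting runs, handle silent steps via $\varepsilon$-removal/contraction, and then solve the three tasks on the resulting DAG (DP for counting, proportional path-sampling for generation, pruned DFS for constant-delay enumeration). The only organizational difference is that the paper factors the argument through a complete problem $\unfa$ for $\rul$ and a reduction-preservation lemma, whereas you work directly on the configuration graph; your direct proportional sampler for $\GEN$ is in fact slightly more elementary than the paper's appeal to self-reducibility in the style of Jerrum--Valiant--Vazirani.
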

In particular, it should be noticed that given $R \in \rul$ and an input $x$, the solutions for $x$ can be enumerated, counted and uniformly generated efficiently. \review{In the following section, we provide examples of relations in this class.}

Classes of problems definable by machine models and that can be enumerated with constant delay have been proposed before. In~\cite{AmarilliBJM17}, it is shown that if a problem is definable by a d-DNNF circuit, then the solutions of an instance can be listed with linear preprocessing and constant delay enumeration. Still, to the best of our knowledge, \review{$\rul$ is the first such a class} with a simple and robust definition based on Turing~Machines.

{
\paragraph{On the relationship of $\rnl$ and $\rul$ with known complexity classes.}
It is well-known that a function $f$ is in $\sharpp$ if and only if
there exists a $p$-relation $R$ such that $f = \COUNT(R)$ (recall the
definition of $p$-relation from Section~\ref{preliminaries}). In the
same way, there exists a tight relationship between $\spanl$ and
$\rnl$, as it is easy to see that a function $f$ is in $\spanl$ if and
only if there exists a relation $R \in \rnl$ such that $f
= \COUNT(R)$. Hence, the reader may wonder why it is necessary to
introduce $\rnl$ and $\rul$, considering further that such classes are
defined in terms of well-known Turing Machine models. The key issue to
consider here is that function complexity classes, such as $\sharpp$
and $\spanl$, are not appropriate to state results about the
enumeration and uniform generation problems. For instance, it would
not be correct to state that every function in $\spanl$ admits a
PPLVUG, as the definition of $\spanl$ does not provide a unique notion
of solution for an input, which is the object to be generated in this
case. In this respect, we introduce $\rnl$ and $\rul$ to have a
unified framework to study the counting, enumeration and
uniform generation problems. The definition of such classes should not
be considered as a contribution of this paper. In fact, they
should only be seen as our way of following the guidance
of \cite{jerrum1986random}, which, as mentioned before, urges the use
of relations to formalize the notion of solution for an input of a problem.
}

	\section{Applications of the Main Results}
	\label{applications}

Before providing the proofs of Theorems~\ref{theo-rnl} and~\ref{theo-rul},
we give 
some 
implications of these results. In particular, we show
how $\nl$- and $\ul$-transducers can be used to obtain positive results on query evaluation in areas like information extraction, graph databases, and binary decision diagrams. 

\subsection{Information extraction}

In~\cite{fagin2015document}, the framework of document spanners was proposed as a formalization of ruled-based information extraction. In this framework, the main data objects are documents and spans. 
Formally, given a finite alphabet $\Sigma$, a document is a string $d = a_1 \ldots a_n$ and a span is pair $s = \mspan{i,j}$ with $1 \leq i \leq j \leq n+1$.
A span represents a continuous region of the document
$d$, whose content is the substring of $d$ from positions $i$ to
$j-1$. Given a finite set of variables $\xset$, a mapping $\mu$ is a function  from 
$\xset$ 
to the spans of $d$.  

Variable set automata (VA) are one of the main formalisms to specify sets of mappings over a document. Here, we use the notion of extended VA (eVA) from~\cite{florenzano2018constant} to state our main results. 
We  only recall the main definitions, and we refer the reader to
\cite{florenzano2018constant,fagin2015document} for more intuition and further details.
An eVA is a tuple $\cA = (Q, q_0, F, \delta)$ such that $Q$ is a finite set of states, $q_0$ is the initial state, and $F$ is the final set of states. Further, $\delta$ is the transition relation consisting of letter transitions $(q,a,q')$, or variable-set transitions $(q, S, q')$, where $S \subseteq \{\Open{x}, \Close{x} \mid x \in \xset\}$ and $S \neq \emptyset$. The symbols $\Open{x}$ and $\Close{x}$ are called markers, and they are used to denote that variable $x$ is \review{opened or closed by $\cA$}, respectively.
A run $\rho$ over a document $d = a_1  \,\cdots\, a_n$ is a sequence of the form:
$q_0 \ \trans{X_1} \ p_0 \ \trans{a_1} \ q_1 \ \trans{X_2} \ p_1 \ \trans{a_2} \ \ldots \ \ \trans{a_n} \ q_n \ \trans{X_{n+1}} \ p_n$
where each $X_i$ is a (possible empty) set of markers, $(p_i,
a_{i+1}, q_{i+1}) \in \delta$, and $(q_i, X_{i+1}$, $p_i) \in \delta$
whenever $X_{i+1} \neq \emptyset$, and $q_i = p_i$ otherwise (that is, when $X_{i+1} = \emptyset$). 
We say that a run $\rho$ is valid if for every $x \in \xset$ there exists exactly one pair $\mspan{i, j}$ such that $\Open{x}\, \in X_i$ and $\Close{x}\, \in
X_j$.  A valid run 
$\rho$ naturally defines a mapping $\mu^{\rho}$ that maps $x$ to the only span $[i,
j\rangle$ such that $\Open{x}\, \in X_i$ and
$\Close{x}\, \in X_j$.  We say that $\rho$ is
accepting if $p_n \in F$.  Finally, the semantics $\sem{\cA}(d)$ of $\cA$ over
$d$  is defined as the set of all mappings
$\mu^{\rho}$ where $\rho$ is a valid and accepting run of $\cA$ over~$d$.

In~\cite{Freydenberger17,maturana2018document}, it was shown that the decision problem related to query evaluation, namely, given an eVA $\cA$ and a document $d$ deciding whether $\sem{\cA}(d) \neq \emptyset$, is $\np$-hard. For this reason, in~\cite{florenzano2018constant} a subclass of eVA is considered in order to recover polynomial-time evaluation. An eVA $\cA$ is called functional if every accepting run is valid. Intuitively, a functional eVA does not need to check validity of the run given that it is already known that every run that reaches a final state will be~valid. 

For the query evaluation problem of functional eVA (i.e. to compute $\sem{\cA}(d)$), one can naturally associate the following relation:
\[\evaleva \ = \ \{ ((\cA,d), \mu) \mid \text{$\cA$ is a functional eVA,} \text{ $d$ is a document, and $\mu \in \sem{\cA}(d)$} \} \]
It is not difficult to show that $\evaleva$ is in $\rnl$. Hence, by Theorem~\ref{theo-rnl} we get the following results.
\begin{corollary}\label{cor:rnl-spanners}
\begin{sloppypar}
	$\ENUM(\evaleva)$ can be enumerated with polynomial delay, $\COUNT(\evaleva)$ admits an FPRAS, and $\GEN(\evaleva)$ admits a PPLVUG.
\end{sloppypar}
\end{corollary}
In~\cite{florenzano2018constant}, it was shown that every functional RGX or functional VA (not necessarily extended) can be converted in polynomial time into an functional eVA. Therefore, Corollary~\ref{cor:rnl-spanners} also holds for these more general classes.
Notice that in~\cite{freydenberger2018joining},
a \review{polynomial delay enumeration} algorithm for  
$\sem{\cA}(d)$ \review{was provided}.
Thus, only the results about $\COUNT(\evaleva)$ and  $\GEN(\evaleva)$ can be considered as new.
 
Regarding efficient enumeration and exact counting, a \review{constant delay} algorithm with polynomial preprocessing \review{was given in~\cite{florenzano2018constant}} for the class of deterministic functional eVA. Here, we can easily extend these results for a more general class, that we called unambiguous functional eVA. Formally, we say that an eVA is unambiguous if for every two valid and accepting runs $\rho_1$ and $\rho_2$, it holds that $\mu^{\rho_1} \neq \mu^{\rho_2}$. In other words, each output of an unambiguous eVA \review{is witnessed} by exactly one run. As in the case of $\evaleva$, we can define the relation $\evalueva$, by restricting the input to unambiguous functional eVA. By using $\ul$-transducers and Theorem~\ref{theo-rul}, we can then extend the results in~\cite{florenzano2018constant} for the unambiguous case. 
\begin{corollary}\label{cor:rnl-spanners-u}
\begin{sloppypar}
	$\ENUM(\evalueva)$ can be solved with constant delay, there exists a polynomial-time algorithm for $\COUNT(\evalueva)$, and there exists a polynomial-time randomized algorithm for $\GEN(\evalueva)$.
\end{sloppypar}
\end{corollary}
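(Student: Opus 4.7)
The plan is to reduce the corollary to Theorem~\ref{theo-rul} by showing that $\evalueva \in \rul$. Once this is established, the three conclusions follow immediately: constant-delay enumeration, polynomial-time exact counting, and a polynomial-time randomized uniform generator.

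First I would describe an $\nl$-transducer $M$ that on input $(\cA,d)$, with $\cA = (Q,q_0,F,\delta)$ a functional unambiguous eVA and $d = a_1 \cdots a_n$, simulates an accepting run of $\cA$ on $d$ step by step. On its work tape, $M$ maintains only the current state of $\cA$ together with the current document index, so $O(\log(|\cA|+|d|))$ cells suffice. At each index $i$ from $0$ to $n$, $M$ first nondeterministically either applies a variable-set transition $(q,S,q') \in \delta$ with $S \neq \emptyset$ (emitting the pair $(i,S)$, with the markers of $S$ listed in a fixed canonical order, to the output tape) or keeps the state unchanged; then, if $i < n$, $M$ picks a letter transition on $a_{i+1}$ and increments $i$. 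The machine halts in an accepting state iff, after processing $a_n$ and the optional last variable-set transition, the current state belongs to $F$. Since $\cA$ is functional, every accepting run is valid, so the emitted sequence is a canonical encoding of the mapping $\mu^{\rho}$ induced by the simulated run $\rho$, and the set of outputs of $M$ on $(\cA,d)$ is in bijection with $\sem{\cA}(d)$; hence $\cR(M)=\evalueva$.

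Next I would check that $M$ is unambiguous. Every accepting computation of $M$ is determined by its sequence of nondeterministic choices, which is in bijection with the accepting runs of $\cA$ on $d$. Suppose two accepting computations produce the same output. The output encodes exactly the variable-set transitions of the corresponding runs, so these two runs agree on all variable-set transitions and hence define the same mapping $\mu$. The unambiguity of $\cA$ then forces the two valid accepting runs to be equal, so the computations coincide.

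The main obstacle is choosing an output encoding that both (a) fixes enough nondeterministic information to make the transducer unambiguous, and (b) is an honest encoding of the mapping rather than of the run itself. Recording the variable-marker events in the order of document position, with a canonical tie-breaker among markers taken in the same variable-set transition, does both jobs: the mapping $\mu$ can be read off by scanning the output and collecting, for each variable $x$, the positions at which $\Open{x}$ and $\Close{x}$ appear; and the encoding is uniquely determined by $\mu$, because by unambiguity of $\cA$ there is only one run realising $\mu$. Once $\evalueva \in \rul$ is established, invoking Theorem~\ref{theo-rul} finishes the proof.
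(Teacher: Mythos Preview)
Your proposal is correct and follows the same approach as the paper: the paper simply states that the result is obtained ``by using $\ul$-transducers and Theorem~\ref{theo-rul}'', and you have supplied the details of the $\ul$-transducer construction and the unambiguity argument that the paper leaves implicit. Your handling of the key point---that the output encoding is determined by the mapping $\mu$, so unambiguity of $\cA$ forces unambiguity of $M$---is exactly what is needed.
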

Notice that this result gives a \review{constant delay} algorithm with polynomial preprocessing for the class of unambiguous functional eVA. Instead, the algorithm in~\cite{florenzano2018constant} has linear preprocessing over 
documents, restricted to the case of deterministic eVA. This leaves open whether there exists a \review{constant delay} algorithm with linear preprocessing over 
documents for the unambiguous case. 

\subsection{Query evaluation in graph databases}

Enumerating, counting, and generating paths are relevant tasks for query evaluation in graph databases \cite{angles2017foundations}.
Given a finite set $\Sigma$ of labels, a 
graph database $G$ is a pair $(V, E)$ where $V$ is a finite set of vertices and $E \subseteq V \times \Sigma \times V$ is a finite set of labeled edges.
Here, 
\review{vertices} represent pieces of data and edges 
specify relations between 
them~\cite{angles2017foundations}. 
One of the core query languages for posing queries on graph databases
are regular path queries (RPQ). 
An RPQ is a triple $(x, R, y)$ where $x,y$ are variables and $R$ is a regular expression over $\Sigma$. As usual, we denote by $\cL(R)$ all the strings over $\Sigma$ that conform to $R$.  Given an RPQ $Q= (x, R, y)$, a graph database $G = (V,E)$, and \review{vertices} $u, v \in V$,  one would like to retrieve, count, or uniformly generate all paths\footnote{Notice that the standard semantics for RPQs is to retrieve pair of \review{vertices}. Here we consider a less standard semantics based on paths which is also relevant for graph databases~\cite{ACP12,LM13,angles2017foundations}.} in $G$ going from $u$ to $v$ \review{that satisfy $Q$}.
Formally, a path from $u$ to $v$ in $G$ is a sequence of vertices and 
labels of the form
$
\pi \ = \   v_0, p_1, v_1, p_2, \ldots, p_n, v_n
$,
such that $(v_i, p_{i+1}, v_{i+1}) \in E$, $u = v_0$, and $v=v_n$. 
A path $\pi$ is said to satisfy
$Q = (x, R, y)$ if the string $p_1 p_2 \,\cdots\, p_n \in \cL(R)$. The length of $\pi$ is defined as $|\pi| = n$. Clearly, between $u$ and $v$ there can be an infinite number of paths \review{that satisfy $Q$}. For this reason, one usually wants to retrieve all paths between $u$ and $v$ of at most a certain length $n$, namely, one usually considers the set
$
\sem{Q}_n(G,u,v)
$ of all paths $\pi$ from $u$ to $v$ in $G$ such that $\pi$ satisfies $Q$ and $|\pi| = n$. This naturally defines the following relation representing the problem of 
\review{evaluating an RPQ} over a graph database:
$$
\evalRPQ \ = \ \{ ((Q, 0^n,G, u,v), \pi) \mid \pi \in \sem{Q}_n(G,u,v)\}.
$$
Using this relation, fundamental problems 
for 
RPQs such as
enumerating, counting, or uniform generating paths can be naturally represented. 
It is not difficult to show that $\evalRPQ$ is in $\rnl$, from which the following corollary can be obtained by using Theorem~\ref{theo-rnl}.
\begin{corollary}\label{cor:rnl-graphs}
\begin{sloppypar}
\review{$\ENUM(\evalRPQ)$ can be enumerated with polynomial delay,
$\COUNT(\evalRPQ)$ admits an FPRAS, and
 $\GEN(\evalRPQ)$ admits a PPLVUG.}
\end{sloppypar}
\end{corollary}
It is important to mention that giving a \review{polynomial delay enumeration} algorithm for $\evalRPQ$ is straightforward, but the existence of an FPRAS and a PPLVUG for $\evalRPQ$ was not known before when queries are part of the input (that is, in combined complexity~\cite{V82}).

\subsection{Binary decision diagrams}

Binary decision diagrams 
are an abstract representation \review{of Boolean functions} which are widely used in computer science and have found many applications in areas like formal verification~\cite{bryant1992symbolic}. A binary decision diagram (BDD) is a directed acyclic graph $D = (V, E)$ where each \review{vertex} $v$ is labeled with a variable $\var(v)$ and has at most two edges going to children $\lo(v)$ and $\hi(v)$. Intuitively, $\lo(v)$ and $\hi(v)$ represent the next \review{vertices} when $\var(v)$ takes values $0$ and $1$, respectively. $D$ contains only two terminal, or sink \review{vertices}, labeled by $0$ or $1$, and one initial \review{vertex} called $v_0$. We assume that every path from $v_0$ to a terminal \review{vertex} does not repeat variables.
Then given an assignment $\sigma$ from the variables in $D$ to $\{0,1\}$, we have that $\sigma$ naturally defines a path from $v_0$ to a terminal \review{vertex} $0$ or $1$. In this way, $D$ defines \review{a Boolean function} that gives a value in $\{0,1\}$ to each assignment $\sigma$; in particular, 
$D(\sigma) \in \{0,1\}$ corresponds to the sink \review{vertex} reached by starting from $v_0$ and following the values in~$\sigma$. 
For Ordered BDDs (OBDDs), we also have a linear order $<$ over the variables in $D$ such that, for every $v_1, v_2 \in V$ with $v_2$ a child of $v_1$, it holds that $\var(v_1) < \var(v_2)$. Notice that not necessarily all variables appear in a path from the initial \review{vertex} $v_0$ to a terminal \review{vertex} $0$ or $1$. Nevertheless, the promise in an OBDD is that variables will appear following the order $<$. 

An OBDD $D$ defines the set of assignments $\sigma$ such that $D(\sigma) = 1$. Then $D$ can be considered as a succinct representation of the set $\{\sigma \mid D(\sigma)=1\}$, and one would like to enumerate, count and uniformly generate assignments given $D$. This motivates the relation:
$$
\evalOBDD \ = \ \{ (D, \sigma) \mid D(\sigma) = 1\}.
$$
Given $(D, \sigma)$ in $\evalOBDD$, there is exactly one path in $D$ that witnesses $D(\sigma) = 1$. Therefore, one can easily show that $\evalOBDD$ is in $\rul$. \review{By Theorem~\ref{theo-rul}, we obtain that}:
\begin{corollary}
\begin{sloppypar}
	$\ENUM(\evalOBDD)$ can be enumerated with constant delay, there exists a polynomial-time algorithm for $\COUNT(\evalOBDD)$, and there exists a polynomial-time randomized algorithm for $\GEN(\evalOBDD)$.
\end{sloppypar}
\end{corollary}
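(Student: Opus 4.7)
The plan is to prove that $\evalOBDD \in \rul$, so that all three conclusions follow immediately from Theorem~\ref{theo-rul}. To do so, I will exhibit a $\ul$-transducer $M$ whose outputs on input $D$ are exactly the satisfying assignments $\sigma$ of $D$, with exactly one accepting run per output.

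The construction I would give is the natural ``top-down traversal'' of the OBDD, augmented to handle free variables. Let $D$ have variables $x_1 < x_2 < \cdots < x_m$ in its prescribed linear order, initial node $v_0$, and sinks $0$ and $1$. The transducer $M$ keeps on its worktape a variable index $i$ and a pointer to a current node $v$ (both of size $O(\log |D|)$), initialized to $i = 1$ and $v = v_0$. For $i = 1, \ldots, m$ it does the following: if $v$ is the $0$-sink, halt and reject; if $v$ is a non-sink with $\var(v) = x_i$, nondeterministically choose $b \in \{0,1\}$, write $b$ on the output tape, and update $v$ to $\lo(v)$ if $b=0$ or to $\hi(v)$ if $b=1$; if $v$ is the $1$-sink or $\var(v) > x_i$ (meaning $x_i$ is free along this path), nondeterministically choose $b \in \{0,1\}$, write $b$, and leave $v$ unchanged. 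After the loop, accept iff $v$ is the $1$-sink. This clearly runs in logarithmic space.

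I would then verify two claims. First, correctness: the outputs of $M$ on input $D$ are precisely the $\sigma \in \{0,1\}^m$ such that $D(\sigma) = 1$. This follows from the definition of $D(\sigma)$: the sink reached by starting at $v_0$ and, at each node $v$, moving to $\lo(v)$ or $\hi(v)$ according to $\sigma(\var(v))$ is independent of the values assigned to variables not appearing on the path, so we can pick those values freely while guaranteeing acceptance. Second, unambiguity: fix any $\sigma$ with $D(\sigma) = 1$. The accepting run of $M$ producing $\sigma$ is completely determined by $\sigma$, because at each step $i$ the written bit must equal $\sigma(x_i)$, and the current node $v$ is a deterministic function of the bits previously written and the structure of $D$. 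Hence exactly one accepting run of $M$ yields $\sigma$, so $M$ is a $\ul$-transducer accepting $\evalOBDD$.

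The argument is essentially mechanical; the only subtlety, and the only real design choice, is in the treatment of free variables. One must iterate through the entire variable set of $D$ in the prescribed order (rather than only the variables appearing on the chosen root-to-sink path), and one must allow the current node $v$ to stay put both when a variable is skipped along the path and after a $1$-sink has been reached. Doing so ensures that every satisfying $\sigma$ is produced by a run, and that distinct satisfying assignments correspond to distinct runs, which together with injectivity of the run-to-output map yields unambiguity. With $\evalOBDD \in \rul$ established, Theorem~\ref{theo-rul} delivers the constant-delay enumeration, polynomial-time exact counting, and polynomial-time randomized uniform generation claimed by the corollary.
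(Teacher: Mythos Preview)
Your proposal is correct and follows exactly the approach the paper takes: argue that $\evalOBDD \in \rul$ and then invoke Theorem~\ref{theo-rul}. The paper only sketches the membership argument in one line (observing that each satisfying $\sigma$ has a unique witnessing path in $D$), whereas you spell out the $\ul$-transducer explicitly and handle the free-variable case carefully; this added detail is sound and fills in what the paper leaves implicit.
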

The above results are well known. Nevertheless, they show how easy and \review{direct it is to} use $\ul$-transducers to realize the good algorithmic properties that a data structure like OBDD has. 

Some non-deterministic variants of BDDs have been studied in the literature~\cite{DBLP:journals/corr/abs-1811-02944}.
In particular, an nOBDD extends an OBDD with vertices $u$ without variables (i.e. $\var(u) = \bot$) and without labels on its children. Thus, an nOBDD is non-deterministic in the sense that given an assignment $\sigma$, there can be several paths that bring $\sigma$ from the initial \review{vertex} $v_0$ to a terminal \review{vertex} with labeled $0$ or $1$. Without lost of generality, nOBDDs are assumed to be consistent in the sense that, for each $\sigma$, all paths of $\sigma$ in $D$ can reach $0$ or $1$, but not both.

As in the case of OBDDs, we can define a relation $\evalnOBDD$ that pairs an nOBDD $D$ with an assignment $\sigma$ that evaluate $D$ to~$1$ (i.e. $D(\sigma) = 1$). Contrary to OBDDs, an nOBDD looses the single witness property, and now an assignment $\sigma$ can have several paths from the initial \review{vertex} to the $1$ terminal \review{vertex}. Thus, it is not clear whether $\evalnOBDD$ is in $\rul$. Still one can easily show that $\evalnOBDD \in \rnl$, from which the following results follow.

\begin{corollary}\label{cor:rnl-obdd}
\begin{sloppypar}
	$\ENUM(\evalnOBDD)$ can be solved with polynomial delay, $\COUNT(\evalnOBDD)$ admits an FPRAS, and
	 $\GEN(\evalnOBDD)$ admits a PPLVUG.
\end{sloppypar}
\end{corollary}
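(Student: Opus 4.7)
The plan is to establish $\evalnOBDD \in \rnl$ and then invoke Theorem~\ref{theo-rnl} to derive all three conclusions in one shot. The previous corollary about $\evalOBDD$ placed the relation in $\rul$ because deterministic OBDDs admit at most one path per assignment; for nOBDDs the $\bot$-labeled vertices can produce many paths witnessing the same satisfying assignment, so we cannot hope to land in $\rul$ directly and must instead build an NL-transducer in the (potentially) ambiguous sense.

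First I would construct an $\nl$-transducer $M$ with $\cR(M) = \evalnOBDD$. Let $x_1 < x_2 < \cdots < x_n$ be the variables of the input nOBDD $D$, listed in the OBDD order. On input $D$, $M$ maintains on its work tape (i) a pointer to a ``current'' vertex of $D$, initially $v_0$, and (ii) an index $i$ tracking the next variable to be written, initially $1$. Both fit in $O(\log|D|)$ bits. The transducer then loops: whenever the current vertex $u$ satisfies $\var(u) = \bot$, it non-deterministically chooses one of the two children and updates the pointer; when the current vertex $v$ has $\var(v) = x_j$ for some $j \geq i$, it first non-deterministically writes one bit for each skipped index $k \in [i, j-1]$, then non-deterministically chooses $b \in \{0,1\}$, writes $b$ to the output tape, and moves the pointer to $\lo(v)$ if $b=0$ or to $\hi(v)$ if $b=1$, finally setting $i := j+1$. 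Upon reaching the $0$-terminal the transducer rejects; upon reaching the $1$-terminal it non-deterministically writes one bit for each of the remaining variables $x_i, \ldots, x_n$ and halts in an accepting state.

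A routine induction on the length of the BDD-path shows that $y \in M(D)$ if and only if $y$ encodes an assignment $\sigma$ with $D(\sigma) = 1$. The forward direction uses the consistency assumption on nOBDDs (which rules out runs where some $\sigma$ reaches both terminal $0$ and terminal $1$), and the backward direction simulates an accepting path of $\sigma$ in $D$ by mirroring the non-deterministic choices at $\bot$-vertices and at the skipped-variable steps. Hence $\cR(M) = \evalnOBDD$, so $\evalnOBDD \in \rnl$, and Theorem~\ref{theo-rnl} immediately delivers polynomial-delay enumeration for $\ENUM(\evalnOBDD)$, an FPRAS for $\COUNT(\evalnOBDD)$, and a PPLVUG for $\GEN(\evalnOBDD)$.

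The only delicate point I foresee is bookkeeping: since both $\bot$-vertices and skipped variables can cause the transducer's position to advance without writing (or to write without advancing the BDD pointer), the counter $i$ and the pointer must be updated carefully so that exactly $n$ bits are produced on every accepting run and each valid $\sigma$ is generated by at least one run. This is purely a matter of case analysis, not a conceptual obstacle, and does not threaten the $O(\log|D|)$ space bound.
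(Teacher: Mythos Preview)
Your proposal is correct and follows exactly the approach the paper takes: the paper simply asserts that one can easily show $\evalnOBDD \in \rnl$ and then invokes Theorem~\ref{theo-rnl}, and your construction of the $\nl$-transducer is the natural way to substantiate that assertion. The only microscopic quibble is that the paper's description of nOBDDs leaves the arity of $\bot$-vertices open (``without labels on its children''), so ``non-deterministically chooses one of the two children'' should perhaps read ``one of its children''; this changes nothing in the argument or the space bound.
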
 
It is important to stress that the existence of an FPRAS and a PPLVUG for $\evalnOBDD$ was not known~before,
and one can easily show this by using $\nl$-transducers and then applying Theorem~\ref{theo-rnl}.

	\section{Completeness, Self-reducibility, and their Implications for the Class $\rul$}
	\label{cde}

The goal of this section is to establish the good algorithmic properties of $\rul$, that is, to prove Theorem \ref{theo-rul}. To this end, we start by introducing a simple notion of reduction for the classes $\rnl$ and $\rul$, which will allow for much simpler proofs.

A natural question to ask is which notions of ``completeness'' and ``reduction'' are appropriate for our framework. Notions of reductions for relations have been proposed before, in particular in the context of search problems~\cite{DGP09}. However, we \review{do not intend to} discuss them here; instead, we use an idea of completeness that is very restricted, but that turns out to be useful in this context.

Let $\mathcal{C}$ be a complexity class of relations and $R,S\in\mathcal{C}$, and recall that $W_R(x)$ is defined as the set \review{of
solutions} for input $x$, that is, $W_R(x) = \{ y \mid (x,y) \in R\}$. We say $R$ is reducible to $S$ if there exists a function $f:\{0,1\}^*\to\{0,1\}^*$, computable in polynomial time, such that for every $x \in \{0,1\}^*$: $W_R(x)=W_S(f(x))$. Also, if $T$ is reducible to $S$ for every $T\in\mathcal{C}$, we say $S$ is complete for $\mathcal{C}$. Notice that this definition is very restricted, since the notion of reduction requires the set \review{of
solutions} to be exactly the same for both relations (it is not sufficient that they have the same size, for example). The \review{benefit of this} kind of reduction is that it preserves all the properties of efficient enumeration, counting and uniform generation that we introduced in Sections \ref{preliminaries} and \ref{nlclass}, as stated in Proposition~\ref{prop_reduction_preserves_properties}.

\begin{proposition}\label{prop_reduction_preserves_properties}
If a relation $R$ can be reduced to a relation $S$, then:
\begin{itemize}
\item If $\ENUM(S)$ can be solved with constant (resp. polynomial) delay, then $\ENUM(R)$ can be solved with constant (resp. polynomial) delay.
\item If there exists a polynomial-time algorithm (resp. an FPRAS) for $\COUNT(S)$, then there exists a polynomial-time algorithm (resp. an FPRAS) for $\COUNT(R)$.
\item If there exists a polynomial-time randomized algorithm (resp. a PPLVUG) for $\GEN(S)$, then there exists a polynomial-time randomized algorithm (resp. a PPLVUG) for $\GEN(R)$.
\end{itemize}
\end{proposition}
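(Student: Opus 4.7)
The plan is to exploit the unusually strong definition of reduction: the equality $W_R(x) = W_S(f(x))$ means the witness sets are literally identical as subsets of $\Sigma^*$, not merely in bijection or of equal cardinality. Hence any algorithm solving $\ENUM(S)$, $\COUNT(S)$ or $\GEN(S)$ on input $f(x)$ automatically produces the correct output for $R$ on input $x$; the entire proof reduces to bookkeeping on running times, checking that the polynomial overhead of evaluating $f$ and the polynomial blow-up $|f(x)|\le p(|x|)$ are absorbed by the respective complexity requirements.

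For the enumeration clauses I would construct the enumerator for $R$ by first computing $f(x)$ and then running the enumerator for $S$ on $f(x)$. In the polynomial-delay case this is immediate, since the delay $p_S(|f(x)|)$ is bounded by a polynomial in $|x|$ and the computation of $f(x)$ fits in the start-up and end-of-enumeration bounds. In the constant-delay case the computation of $f(x)$ is absorbed into the precomputation phase (which remains polynomial in $|x|$ since $|f(x)|$ is polynomial in $|x|$), and crucially the conditions (a)--(c) of the constant-delay definition refer only to $|y|$ where $y\in W_R(x)=W_S(f(x))$, so the same constant $c$ works verbatim.

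For counting, the algorithm for $R$ simply returns what the algorithm for $S$ returns on $f(x)$, and correctness follows from $|W_R(x)|=|W_S(f(x))|$. In the FPRAS case, given $x$ and $\epsilon$, I would invoke the FPRAS for $S$ on $(f(x),\epsilon)$; the event that the output lies within a factor $(1\pm\epsilon)$ of $|W_S(f(x))|$ has probability at least $3/4$ and is identical to the event required for $R$, and the running time $q_S(|f(x)|,1/\epsilon)$ is polynomial in $|x|$ and $1/\epsilon$. For uniform generation the same scheme works in the vanilla polynomial-time randomised case. For the PPLVUG case, set $\cP_R(x,\delta):=\cP_S(f(x),\delta)$ and $\cG_R:=\cG_S$: if $W_R(x)=\emptyset$ then $W_S(f(x))=\emptyset$ and $\cP_S$ returns $\bot$; otherwise with probability $1-\delta$ the returned $\cD$ is good-for-generation, and since $W_S(f(x))=W_R(x)$ a uniform sample from one is a uniform sample from the other. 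The running times $q_S(|f(x)|,\log(1/\delta))$ and $r_S(|\cD|)$ remain polynomial in $|x|$ and $\log(1/\delta)$.

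There is really no hard step here; the only thing to be careful about is verifying that the constant-delay definition is preserved, which works precisely because the bounds are stated in terms of $|y|$ rather than $|x|$ and because the witnesses are the same strings on both sides of the reduction. I would therefore expect the proof to consist of three short, essentially mechanical arguments, one per bullet, each built on the single observation that reducibility in our sense turns an algorithm for $S$ on $f(x)$ into an algorithm for $R$ on $x$ with no change to correctness and only a polynomial-in-$|x|$ overhead.
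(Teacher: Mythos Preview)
Your proposal is correct and takes essentially the same approach as the paper: compute $f(x)$, feed it to the algorithm for $S$, and observe that the polynomial overhead of this preprocessing is absorbed in each of the six complexity notions. If anything, you give more detail than the paper does, since the paper dismisses the FPRAS and PPLVUG subcases as ``completely analogous'' while you spell them out explicitly.
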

\begin{proof}
	We go into some detail, but the idea of the proof is very simple. Because our notion of reduction is so strong, all efficient algorithms for $S$ apply immediately for $R$, provided we add a preprocessing phase \review{where we compute a function reducing from $R$ to $S$}. Since that takes only polynomial time, it preserves the overall complexity of all the types of algorithms we have~discussed. 
	
	Now, with more detail and formality. Since $R$ can be reduced to $S$, there exist a polynomial $p(u)$ and a function $f$ such that $W_S(f(x))=W_R(x)$ for every input string $x$, and $f(x)$ can be computed in time $p(|x|)$.
	First, suppose $\ENUM(S)$ can be solved with constant (resp. polynomial) delay, so there is an algorithm $\mathcal{E}$ that enumerates $W_S(f(x))$ with constant (resp. polynomial) delay and with precomputation phase of time $q(|f(x)|)$ for some polynomial $q$. Now, consider the following procedure for $\ENUM(R)$ on input $x$. First, we compute $f(x)$ in time $p(|x|)$. Then, we run $\mathcal{E}(f(x))$, which enumerates \review{all
        solutions} in $W_S(f(x))$, that is, it enumerates \review{all
        solutions} in $W_R(x)$. So, the precomputation time of the procedure takes time $p(|x|)+q(|f(x)|)\leq p(|x|)+q(p(|x|))$, which is \review{polynomial in $|x|$}. The enumeration phase is the same as for $\mathcal{E}(f(x))$, so it has constant (resp. polynomial) delay. We conclude that $\ENUM(R)$ can be solved with constant (resp. polynomial)~delay.
	
	Now, suppose there exists a polynomial-time algorithm $\A$ for $\COUNT(S)$, let $q$ be the polynomial that characterizes its complexity, and consider the following procedure for $\COUNT(R)$ on input $x$. First, we construct $f(x)$ in time $p(|x|)$. Next, we run $\A(f(x))$, which computes $|W_S(f(x))|$, that is, it computes $|W_R(x)|$. So, the procedure calculates $|W_R(x)|$ and takes time $p(|x|)+q(|f(x)|)\leq p(|x|)+q(p(|x|))$, which is \review{polynomial in $|x|$}. We conclude that $\COUNT(R)$ has a polynomial-time algorithm. The proof for the case of an FPRAS is completely analogous.
	
	Finally, suppose there exists a polynomial-time randomized algorithm $\cG$ for $\GEN(S)$, and let $q$ be the polynomial that characterizes its complexity. Now, consider the following procedure for $\GEN(R)$ on input $x$. First, we construct $f(x)$ in time $p(|x|)$. Next, we run $\cG(f(x))$, which \review{outputs a
        solution} from $W_S(f(x))$, \review{that is, a
        solution} from $W_R(x)$, uniformly at random. So, the procedure generates an element from $W_R(x)$ uniformly at random and takes time $p(|x|)+q(|f(x)|)\leq p(|x|)+q(p(|x|))$, which is \review{polynomial in $|x|$}. We conclude that $\GEN(R)$ has a polynomial-time randomized algorithm. The proof for \review{the case of a PPLVUG} is completely analogous.
\end{proof}

Therefore, by finding a complete relation $S$ for a class $\mathcal{C}$ under the notion of reduction just defined, we can study the aforementioned problems for $S$ knowing that the obtained results will extend to every relation in the class $\mathcal{C}$. In what follows, we identify complete problems for the classes $\rnl$ and $\rul$, and use them first to establish the good algorithmic properties of 
$\rul$. Moreover, we prove that
\review{the identified problems are self-reducible~\cite{jerrum1986random}}, which will be useful for establishing some of the results of this
section as well as for some of the results proved in Section \ref{approximate} for the class $\rnl$.

\subsection{Complete problems for $\rnl$ and $\rul$}

The notion of reduction just defined is useful for us because $\rnl$ and $\rul$ admit natural complete problems under this notion. These complete relations are defined in terms of NFAs and we call them $\nfa$ and $\unfa$. We already introduced $\nfa$ in Section \ref{nlclass}, and we now define $\unfa$ as 
\begin{multline*}
\unfa \ = \ \{ ((\nA,0^k), w) \mid \nA \text{ is an unambiguous NFA}\\ 
\text{with alphabet } \{0,1\}, w \in \{0,1\}^k \text{ and } w \text{ is accepted by } \nA\},
\end{multline*}
where an NFA is said to be unambiguous if there exists exactly one accepting run for every string accepted by~it.

\review{Recall from Section \ref{nlclass} that $\nfa \in \rnl$. Besides, it is easy to see that $\unfa \in \rul$.} To see \review{why these relations are complete} for our classes, consider the following. Take a relation $R$ in $\rnl$ (the case for $\rul$ is the same). We know there is an $\nl$-transducer $M$ that characterizes it. Run now $M$ on some given input $x$. Since \review{$M$ works in logarithmic space}, there is only a polynomial number of different configurations that $M$ can ever be in (\review{polynomial in $|x|$}).
Hence, we can consider the set of possible configurations as the states of an NFA $\nA_x$, which then has only polynomial size. The transitions of $\nA_x$ are determined by the transitions between the configurations of $M$. Moreover, a symbol output by the transducer $M$ is interpreted as a symbol read by the automaton $\nA_x$. In this way, $\nA_x$ accepts exactly the language $W_R(x)$. We formalize this idea in the following result.

\begin{proposition}\label{prop_complete_problems}
\begin{sloppypar}
$\nfa$ is complete for $\rnl$ and $\unfa$ is complete for $\rul$.
\end{sloppypar}
\end{proposition}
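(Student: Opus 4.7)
The plan is to prove both claims together, since the argument for $\unfa$-completeness of $\rul$ will be obtained by verifying that a single construction preserves unambiguity. Membership ($\nfa \in \rnl$ and $\unfa \in \rul$) is by direct simulation: on input $(\nA, 0^k)$, an $\nl$-transducer can maintain a current state of $\nA$ on its work tape in $O(\log|\nA|)$ space, scan the symbols of $0^k$ one by one, and at each step nondeterministically pick a next state of $\nA$ together with the letter that triggered the transition, writing the letter to the output tape; if $\nA$ is unambiguous, the resulting transducer is unambiguous too. The bulk of the proof is therefore the hardness direction: given $R \in \rnl$ (resp.\ $R \in \rul$), exhibit a polynomial-time $f$ with $W_R(x) = W_\nfa(f(x))$ (resp.\ $W_R(x) = W_\unfa(f(x))$).

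Fix $R \in \rnl$ and an $\nl$-transducer $M$ with $\cR(M) = R$, and let $q$ be the polynomial with $|y| = q(|x|)$ for every $(x,y) \in R$. Given $x$, I would build an NFA $\nA_x$ whose states are the configurations of $M$ on input $x$, which, thanks to the logspace bound, form a set of polynomial size; the initial state is the starting configuration and the final states are the accepting configurations. For each one-step transition $c \to c'$ of $M$, if the step writes a symbol $a$ to the output tape, I add the transition $(c, a, c')$; if it writes nothing, I treat it as an $\epsilon$-transition. Since $\snfa$ is defined over NFAs without $\epsilon$-transitions, I remove them using the standard epsilon-closure construction, computing the set $E(c)$ of configurations reachable from $c$ by non-writing steps and replacing each $\epsilon$-edge accordingly. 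The configuration graph and its closure are computable from $(M,x)$ in polynomial time, so the map $f(x) = (\nA_x, 0^{q(|x|)})$ is polynomial-time computable and reduces to $\nfa$ on instances of the required form.

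The key claim is that accepting paths of $\nA_x$ labeled by a word $y \in \Sigma^{q(|x|)}$ are in bijection with accepting runs of $M$ on $x$ that produce output $y$. A run of $M$ producing $y$ decomposes canonically into $|y|$ blocks, each consisting of a (possibly empty) sequence of non-writing steps followed by a single writing step; this decomposition yields a unique accepting path in $\nA_x$ labeled by $y$. Conversely, every accepting path in $\nA_x$ labeled by $y$ lifts to an accepting run of $M$ that outputs $y$. Together with $|y| = q(|x|)$, this gives $W_R(x) = W_\nfa(f(x))$. For the $\rul$ case, I would apply the same construction with $M$ an $\ul$-transducer; then the bijection above forces $\nA_x$ to be unambiguous, since two distinct accepting paths of $\nA_x$ on $y$ would lift to two distinct accepting runs of $M$ producing $y$, contradicting unambiguity of $M$.

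The main technical obstacle I anticipate is verifying rigorously that the $\epsilon$-closure step does not introduce spurious accepting paths that could break the bijection, and in particular that it preserves unambiguity in the $\rul$ case. The cleanest way to handle this is to define $\nA_x$ directly on ``post-writing'' configurations: use as states the configurations reachable immediately after a writing step, together with the initial and final configurations, and place an edge $(c, a, c')$ whenever $M$ can reach $c'$ from $c$ via a (possibly empty) sequence of non-writing steps followed by exactly one writing step that outputs $a$. This formulation makes the correspondence between paths in $\nA_x$ and runs of $M$ transparent, and the polynomial bound on the number of configurations keeps the reduction polynomial-time, completing the proof.
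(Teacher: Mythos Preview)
Your proposal is correct and follows essentially the same route as the paper: build $\nA_x$ from the (polynomially many) configurations of the logspace transducer $M$ on input $x$, with letter-transitions for writing steps and $\varepsilon$-transitions for non-writing steps, then remove $\varepsilon$'s and argue the run-to-path bijection to obtain both $W_R(x)=W_\nfa(f(x))$ and, in the unambiguous case, that $\nA_x$ is a UFA. Your explicit concern about $\varepsilon$-removal possibly spoiling unambiguity, and the ``post-writing configurations'' variant you propose to sidestep it, is a point the paper glosses over (it simply invokes standard $\varepsilon$-removal after arguing unambiguity of the $\varepsilon$-NFA), so your treatment is if anything slightly more careful, but not a genuinely different approach.
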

	We will prove the result only for the case of $\rul$ and $\unfa$, as the other case is completely analogous. The following lemma is the key ingredient in our argument. The proof of this lemma is given in Appendix \ref{app-lemma_ufa_from_relation}.
	\begin{lemma}\label{lemma_ufa_from_relation}
		Let $R$ be a relation in $\rul$.
                Then there exists a polynomial-time algorithm that, given $x \in \{0,1\}^*$, produces an unambiguous NFA $\nA_x$ such that $y \in W_R(x)$ if and only if $y$ is accepted by $\nA_x$. 
	\end{lemma}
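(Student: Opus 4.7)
The plan is to turn the unambiguous $\nl$-transducer $M$ witnessing $R\in\rul$ into an automaton whose states are the configurations of $M$ on input $x$. Fix $M$ and $x$, and let $C_x$ be the set of configurations of $M$ on $x$ (state, input-head position, worktape contents, worktape head position; the write-only output tape is not part of a configuration). Because the worktape has size $O(\log |x|)$, $|C_x|$ is polynomial in $|x|$, and the whole configuration graph of $M$ on $x$ can be enumerated in polynomial time. From it I build an intermediate $\epsilon$-NFA $\nA_x'$ over $\Sigma$ whose states are $C_x$, whose initial state is the initial configuration, whose final states are the halting accepting configurations, and whose transitions mirror those of $M$: a move that does not write on the output tape becomes an $\epsilon$-transition, and a move that writes $a$ on the output tape becomes an $a$-transition. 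By construction, accepting runs of $M$ on $x$ producing $y$ are in bijective correspondence with accepting runs of $\nA_x'$ on $y$, so $L(\nA_x')=W_R(x)$ and $\nA_x'$ is unambiguous.

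Next, I remove the $\epsilon$-transitions by the standard closure construction: for every $c\in C_x$, compute its $\epsilon$-closure $E(c)$ by reachability in the $\epsilon$-subgraph (polynomial time); put a transition $c\xrightarrow{a} c'$ in $\nA_x$ whenever there exists $c''\in E(c)$ with $c''\xrightarrow{a} c'$ in $\nA_x'$; and declare $c$ accepting in $\nA_x$ whenever $E(c)$ contains an accepting configuration. This yields an $\epsilon$-free NFA of polynomial size that recognizes the same language as $\nA_x'$, giving the required equivalence $y\in W_R(x) \iff y$ is accepted by $\nA_x$. Since both the configuration graph and all $\epsilon$-closures are computable by standard polynomial-time reachability, the overall construction runs in time polynomial in $|x|$.

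The main obstacle is to check that $\epsilon$-elimination preserves unambiguity: in general, two distinct sequences of $\epsilon$-moves could collapse into the same symbol transition, so one has to use the unambiguity of $M$. I will argue by contrapositive. Suppose $\nA_x$ has two distinct accepting runs $(c_0,c_1,\ldots,c_k)$ and $(c_0,c_1',\ldots,c_k')$ on some string $y=y_1\cdots y_k$. For each symbol transition $c_{i-1}\xrightarrow{y_i} c_i$ of $\nA_x$, by definition there is some $c''\in E(c_{i-1})$ and an $\epsilon$-path $c_{i-1}\to^*_\epsilon c''\xrightarrow{y_i} c_i$ in $\nA_x'$; choose one such witnessing path, and similarly for the primed run. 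Finally extend from $c_k$ and $c_k'$ by $\epsilon$-transitions to genuine accepting configurations of $\nA_x'$. These two liftings are distinct since the $c_i$'s already differ at some index, and they pull back through the bijection to two distinct accepting runs of $M$ on $x$ that both output $y$, contradicting the unambiguity of $M$. Hence $\nA_x$ is unambiguous, completing the proof.
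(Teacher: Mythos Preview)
Your proof is correct and follows essentially the same approach as the paper: build an $\varepsilon$-NFA on the configuration graph of $M$ on input $x$, observe that accepting runs of $M$ outputting $y$ correspond bijectively to accepting runs of this automaton on $y$, and then eliminate $\varepsilon$-transitions in polynomial time. In fact you are more careful than the paper at one point: the paper simply invokes the standard $\varepsilon$-removal as preserving the language, whereas you explicitly verify that unambiguity is also preserved by lifting two distinct runs of the $\varepsilon$-free automaton back to two distinct runs of $M$---a point the paper glosses over.
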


\begin{proof}[Proof of Proposition \ref{prop_complete_problems}]
	Let $R$ be a relation in $\rul$ and $x$ be a string in $\{0,1\}^*$. We know by Lemma~\ref{lemma_ufa_from_relation} that we can construct in polynomial time an unambiguous NFA $\nA_x$  such that 
	$y \in W_R(x)$ if and only if $y$ is accepted by $\nA_x$. Now, since $R$ is a $p$-relation, there exists a polynomial $q$ such that $|y|=q(|x|)$ for all $y\in W_R(x)$. Thus, 
	we have that all words accepted by $\nA_x$ have the same length $q(|x|)$. We conclude that $W_R(x) = W_\unfa\left(\left(\nA_x, 0^{q(|x|)}\right)\right)$. Since this works for every $R\in\rul$ and every input $x$, by definition of completeness we deduce that $\unfa$ is complete for $\rul$.
\end{proof}

\review{In Section \ref{nlclass}, we show that $\dnf \in \rnl$. Thus, a
fundamental question is whether $\dnf$ is complete for the class
$\rnl$ under the notion of reduction considered in this work. Notice
that if this holds, then we will obtain that $\COUNT(\dnf)$ is
$\spanl$-complete under the notion of polynomial-time parsimonious
reduction (introduced in Section \ref{nlclass}). However,
$\COUNT(\dnf)$ is only known to be $\spanl$-complete under
polynomial-time Turing reductions, and it is unknown whether
$\COUNT(\dnf)$ is complete for $\spanl$ under polynomial-time
parsimonious reductions.
In fact, it is not even known whether $\COUNT(\dnf)$ is
$\spanl$-complete under some notion of reduction that preserves the
existence of an FPRAS, so that the existence of an FPRAS for
$\COUNT(\dnf)$ cannot be used to infer the existence of an FPRAS for $\snfa$. Hence, we leave as an open problem whether $\dnf$ is
complete for $\rnl$ in the sense studied in this~article.}

\subsection{$\nfa$ and $\unfa$ are self-reducible}\label{proof_self_reducibility}

Self-reducibility is a property of many natural relations, and it plays a key role in proving \review{some important results, like the tight relationship between counting and uniform generation 
established in \cite{jerrum1986random}.} There are different ways of formalizing this concept, and they can get rather technical, but the intuition is pretty straightforward. We say that a (decision) problem is self-reducible if it can be solved by referring to smaller instances of the same problem. 
For example, SAT is self-reducible. Given a propositional formula $\phi$, consider its satisfiability problem. We can easily reduce that problem to smaller instances of SAT as follows. Take the first variable of $\phi$ and replace it by $0$ to get a new formula $\phi_0$. Do the same with $1$ to get a new formula $\phi_1$. Notice that $\phi$ is satisfiable if and only if $\phi_0$ or $\phi_1$ is satisfiable. Moreover, both $\phi_0$ and $\phi_1$ have one less variable than $\phi$, so they are smaller instances.

Now, self-reducibility does not imply the existence of a polynomial-time solution for a problem, as SAT well illustrates. It is true that the instances get smaller, until they eventually become trivially easy to solve. But the number of instances is multiplied, so recursively \review{applying self-reducibility can} lead to an exponential number of smaller instances to solve. Rather than a solution method, self-reducibility is thought of as a structural feature of a problem.

Now, definitions (and proofs) of self-reducibility can get very technical, partly because they have to formalize the notion of ``smaller instance''. Hence, they crucially depend on the way that problems are encoded.\footnote{Thus, saying something like ``SAT is self-reducible'' is slightly inaccurate. We need to specify the way in which the problem, inputs and solutions are encoded, before we can assert something like that.} We now state the main result of this subsection.

\begin{proposition}\label{prop_self_reducibility}
$\nfa$ and $\unfa$ are self-reducible.
\end{proposition}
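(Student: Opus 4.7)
The plan is to exhibit a uniform way of decomposing an instance of $\nfa$ (resp.\ $\unfa$) into strictly smaller instances of the same problem, by ``peeling off'' the first input symbol. Because the length parameter is encoded in unary, this will shrink the input by one unit per step, giving a recursion depth polynomial in the input size, which is the notion of self-reducibility needed in the sequel.

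The concrete construction is standard. Fix an instance $(\nA,0^k)$ with $\nA=(Q,I,F,\delta)$ over alphabet $\Sigma$, and assume $k\ge 1$. For each $a\in\Sigma$, let $\nA_a=(Q,I_a,F,\delta)$, where $I_a=\{q'\in Q\mid \exists q\in I,\ (q,a,q')\in\delta\}$. The automaton $\nA_a$ has exactly the same size as $\nA$ and is clearly constructible in logarithmic space, hence in polynomial time. From the definition, $w$ is accepted by $\nA_a$ if and only if $aw$ is accepted by $\nA$, so one gets the disjoint decomposition
\begin{equation*}
W_\nfa\bigl((\nA,0^k)\bigr) \;=\; \bigsqcup_{a\in\Sigma}\, a\cdot W_\nfa\bigl((\nA_a,0^{k-1})\bigr),
\end{equation*}
with the base case $k=0$ reducing to the trivial check $I\cap F\neq\emptyset$. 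Since each subinstance $(\nA_a,0^{k-1})$ is strictly smaller than $(\nA,0^k)$ (the unary length drops by one while the NFA description does not grow), this is precisely the kind of recursive decomposition required by the formal definition of self-reducibility; translating the above into whichever concrete encoding convention is adopted is then routine.

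For $\unfa$ the only additional point is to verify that $\nA_a$ inherits unambiguity from $\nA$. Suppose $\nA$ is unambiguous and assume, towards a contradiction, that some $w\in\Sigma^{k-1}$ has two distinct accepting runs $\rho_1,\rho_2$ in $\nA_a$, starting from states $q'_1,q'_2\in I_a$ respectively. By definition of $I_a$, we may pick (for $i\in\{1,2\}$) some $q_i\in I$ with $(q_i,a,q'_i)\in\delta$; prepending this transition to $\rho_i$ yields an accepting run of $aw$ in $\nA$. If $q'_1\neq q'_2$ these two runs of $\nA$ differ in their second state, and if $q'_1=q'_2$ they differ in the suffix $\rho_1\neq\rho_2$; either way we obtain two distinct accepting runs of $aw$ in $\nA$, contradicting the unambiguity of $\nA$. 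Hence $\nA_a$ is unambiguous, and the same decomposition witnesses self-reducibility of $\unfa$.

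The main obstacle I foresee is not mathematical but definitional: the text itself flags that a rigorous definition of self-reducibility is encoding-sensitive. Thus the real work is to pin down exactly what ``smaller instance'' and ``reduction to subinstances'' mean in the specific formalism the paper wants to invoke later (presumably the one from \cite{jerrum1986random}), and to check that the per-step construction $\nA\mapsto\nA_a$, together with the size measure (e.g.\ $|\nA|+k$), fits all the clauses of that definition (polynomial-time computability of the subinstance, polynomial depth of recursion, and the exact-partition property of witnesses shown above). Once that bookkeeping is fixed, the two propositions follow directly from the construction and the unambiguity argument sketched here.
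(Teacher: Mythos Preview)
Your decomposition is correct and is the natural one; the paper does the same ``peel off the first symbol'' reduction, but the two constructions of the sub-NFA differ. You keep $Q$, $\delta$, $F$ unchanged and simply replace the initial set by $I_a=\delta(I,a)$. The paper instead works under the convention of a \emph{single} initial state and a list-based encoding, and therefore cannot just enlarge the initial set; it defines $Q_w=\{q\mid (q_0,w,q)\in\delta\}$ and \emph{merges} all of $Q_w$ into one fresh initial state $q_0'$, deleting $Q_w$ from the state set and rerouting every transition that touched $Q_w$ through $q_0'$. This extra work is done solely to guarantee clause~(5) of the Schmidt definition they adopt, namely $|\psi(x,w)|\le |x|$: under their encoding the merged automaton has no more states and no more transitions than the original, so the size does not grow. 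Your construction achieves the same bound more cheaply, but only if the initial set is encoded as a fixed-width object (e.g.\ a bit vector over $Q$); under a list encoding $|I_a|$ may exceed $|I|$ and clause~(5) would fail. You already flag exactly this point in your last paragraph, so the gap is acknowledged rather than hidden---but be aware that the paper's proof is precisely the ``bookkeeping'' you defer, and it is not entirely trivial (the rerouting of transitions in $\delta'$ has four cases, and the equivalence of runs requires an inductive argument).

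Your unambiguity argument for $\unfa$ is essentially the same as the paper's (and in fact more explicit): lift two distinct runs in $\nA_a$ to two distinct runs in $\nA$ by prepending an $a$-transition. The paper merely sketches this step.
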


\review{To see the intuition behind this result, consider first a deterministic finite automaton (DFA) $D$ over the alphabet $\{0,1\}$, and suppose it accepts a string $w = 0 \cdot w'$, where $w' \in \{0,1\}^*$. Then assuming that $q_0$ is the initial state of $D$, we know that the accepting run for $w$ moves from $q_0$ to a state $q_1$ by reading symbol $0$, and then it continues processing $w'$ from $q_1$. Now, if we change the initial state to $q_1$ to get a new DFA $D_{0}$, then $D_{0}$ accepts the string $w'$. In other words, if $\mathcal{L}(D)$ is the language accepted by $D$, then we have that:
\begin{eqnarray*}
\mathcal{L}(D) &=&
\{0\cdot w' \mid w'\in\mathcal{L}(D_0)\} \cup
\{1\cdot w' \mid w'\in\mathcal{L}(D_1)\},
\end{eqnarray*}
where DFA $D_1$ is defined in the same way as $D_{0}$. Besides, notice that if the length of the strings to be accepted by $D$ is given as a parameter, as in the case of $\nfa$, then we can assume $D$ does not contain any cycles, and each automaton $D_i$ ($i = 0,1$) can be made smaller than $D$ by removing $q_0$ and updating the transition function of $D$ accordingly. Hence, the above equality shows that the language accepted by $D$ can be defined in terms of the languages accepted by smaller deterministic finite automata. The same idea can be applied to an NFA $N$, although constructing each NFA $N_i$ ($i = 0,1$) is a little more complicated as there can be several transitions from a state that read the same symbol. Intuitively, this shows that $\nfa$ is self-reducible. The precise definition of self-reducibility (with all its technicalities) and the complete proof of Proposition \ref{prop_self_reducibility} can be found in Appendix~\ref{app-prop_self_reducibility}.}

\subsection{Establishing the good algorithmic properties of $\rul$}\label{sec-good-alg-rul}
Theorem \ref{theo-rul} is a consequence of Propositions \ref{prop_reduction_preserves_properties} and \ref{prop_complete_problems}, and the following result.
\begin{proposition}\label{prop-all-rul}
\begin{sloppypar}
$\ENUM(\unfa)$ can be solved with constant delay, there exists a polynomial-time algorithm for $\COUNT(\unfa)$, and there exists a polynomial-time randomized algorithm for $\GEN(\unfa)$.
\end{sloppypar}
\end{proposition}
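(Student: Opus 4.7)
The plan is to exploit the unambiguity of $\nA$ via an explicit ``unrolled'' construction. Given input $(\nA, 0^k)$, consider the layered DAG whose nodes are pairs $(q, i)$ for $q \in Q$ and $i \in \{0, 1, \ldots, k\}$, with an edge from $(q, i)$ to $(q', i+1)$ labeled $a$ whenever $(q, a, q') \in \delta$. Accepting runs of $\nA$ on length-$k$ inputs are in bijection with paths in this DAG from nodes $(q_0, 0)$ with $q_0 \in I$ to nodes $(q_f, k)$ with $q_f \in F$, and because $\nA$ is unambiguous, distinct such paths carry distinct label sequences. Unambiguity is therefore the hook that lets us replace ``counting words'' with ``counting runs'', which is a plain DAG path-counting problem.

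For counting and generation, compute by backward dynamic programming the function $N(q, i)$ giving the number of paths from $(q, i)$ to $F \times \{k\}$, using the recurrence $N(q, i) = \sum_{(q, a, q') \in \delta} N(q', i+1)$ with base $N(q, k) = 1$ if $q \in F$ and $0$ otherwise; this runs in polynomial time. Unambiguity immediately gives $|W_{\unfa}((\nA, 0^k))| = \sum_{q_0 \in I} N(q_0, 0)$, solving $\COUNT(\unfa)$ in polynomial time. The same table supports uniform generation by top-down sampling: first sample $q_0 \in I$ with probability proportional to $N(q_0, 0)$, then iteratively at $(q, i)$ choose an outgoing transition $(q, a, q')$ with probability $N(q', i+1)/N(q, i)$ and output $a$. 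A telescoping calculation shows each word is produced with probability $1/|W_{\unfa}((\nA, 0^k))|$, yielding the required polynomial-time randomized algorithm for $\GEN(\unfa)$.

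For constant-delay enumeration, during preprocessing augment the DP by building, at each node $(q, i)$, a doubly linked list of its \emph{live} outgoing transitions, namely those $(q, a, q')$ with $N(q', i+1) > 0$, along with parent pointers. The enumeration phase performs a depth-first traversal of the pruned DAG, always descending into the first live child and recording the corresponding letter, emitting the current word whenever it reaches a node $(q_f, k)$ with $q_f \in F$, and then backtracking. Each descent and each backtrack step costs $O(1)$ thanks to the linked lists, so the time to produce the first output, the time between any two consecutive outputs, and the time between the last output and termination are each bounded by one full backtrack plus one full descent, i.e.\ at most $2k$ steps. Since every output $y$ has $|y| = k$, this matches the $c \cdot |y|$ bound required by the definition in Section~\ref{preliminaries}, and unambiguity guarantees that no word is repeated.

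The main obstacle I expect is obtaining constant delay rather than merely polynomial delay. A naive DFS that at each visited node scans $\delta$ and tests liveness would pay $\Omega(|\delta|)$ per node, which is not tolerable when $|\delta|$ is much larger than $k$ and the delay must be proportional to $k$ alone. Precomputing the live-children doubly linked lists during the (still polynomial) preprocessing phase is precisely the ingredient that reduces per-step work to $O(1)$ and makes the standard amortized DFS argument deliver constant-delay enumeration.
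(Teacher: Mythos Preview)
Your proposal is correct and follows essentially the same approach as the paper: both unroll the NFA into a layered DAG, count paths by dynamic programming (exploiting unambiguity to equate words with runs), prune dead branches so that a depth-first traversal yields constant-delay enumeration, and derive uniform generation from the path counts. The only cosmetic difference is that the paper phrases generation abstractly via self-reducibility and the Jerrum--Valiant--Vazirani framework, whereas you spell out the equivalent top-down weighted sampling directly.
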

To sum up all the results just mentioned:
$\unfa$ is complete for $\rul$, it has good algorithmic properties, and our notion of reduction (and completeness) preserves all the algorithmic properties we have discussed. In what follows, we prove each of the three results stated in Proposition \ref{prop-all-rul}.

\subsubsection{$\ENUM(\unfa)$ can be solved with constant delay}
\label{sec-enum-cd}

\indent We now provide a sketch of the \review{constant delay} algorithm. The idea is conceptually simple. Remember what we want: to output all strings \review{of a certain length accepted by an unambiguous NFA}, without repetition. We may use a preprocessing phase of polynomial time, but afterwards, there can be at most linear time between one string and the next. 

Now, let $(\nA,0^k)$ be the input, and consider Figure \ref{fig_ufa} with $k=3$ as an example. \review{To do constant delay enumeration}, we do a depth-first traversal of the NFA, starting from the initial state. As we traverse the NFA, we read the symbols from the transitions, and store them in a partial string. When the partial string reaches length $k$, if we happen to be in a final state, we output the string.

\begin{figure}[ht]
	\centering
	\begin{tikzpicture}[scale=0.15]
	\tikzstyle{every node}+=[inner sep=0pt]
	\draw [black] (15.4,-20.8) circle (3);
	\draw (15.4,-20.8) node {$q_0$};
	\draw [black] (29.1,-12.3) circle (3);
	\draw (29.1,-12.3) node {$q_1$};
	\draw [black] (29.1,-29.1) circle (3);
	\draw (29.1,-29.1) node {$q_2$};
	\draw [black] (43.1,-12.3) circle (3);
	\draw (43.1,-12.3) node {$q_3$};
	\draw [black] (43.1,-29.1) circle (3);
	\draw (43.1,-29.1) node {$q_4$};
	\draw [black] (57.7,-12.3) circle (3);
	\draw (57.7,-12.3) node {$q_F$};
	\draw [black] (57.7,-12.3) circle (2.4);
	\draw [black] (57.7,-29.1) circle (3);
	\draw (57.7,-29.1) node {$q_5$};
	\draw [black] (17.95,-19.22) -- (26.55,-13.88);
	\fill [black] (26.55,-13.88) -- (25.61,-13.88) -- (26.13,-14.73);
	\draw (21.31,-16.05) node [above] {$0$};
	\draw [black] (32.1,-12.3) -- (40.1,-12.3);
	\fill [black] (40.1,-12.3) -- (39.3,-11.8) -- (39.3,-12.8);
	\draw (36.1,-11.8) node [above] {$0$};
	\draw [black] (46.1,-12.3) -- (54.7,-12.3);
	\fill [black] (54.7,-12.3) -- (53.9,-11.8) -- (53.9,-12.8);
	\draw (50.4,-11.8) node [above] {$0,1$};
	\draw [black] (17.97,-22.35) -- (26.53,-27.55);
	\fill [black] (26.53,-27.55) -- (26.11,-26.7) -- (25.59,-27.56);
	\draw (21.25,-25.45) node [below] {$1$};
	\draw [black] (31.02,-26.8) -- (41.18,-14.6);
	\fill [black] (41.18,-14.6) -- (40.28,-14.9) -- (41.05,-15.54);
	\draw (35.55,-19.26) node [left] {$0$};
	\draw [black] (45.07,-26.84) -- (55.73,-14.56);
	\fill [black] (55.73,-14.56) -- (54.83,-14.84) -- (55.58,-15.5);
	\draw (50.94,-22.15) node [right] {$1$};
	\draw [black] (46.1,-29.1) -- (54.7,-29.1);
	\fill [black] (54.7,-29.1) -- (53.9,-28.6) -- (53.9,-29.6);
	\draw (50.4,-29.6) node [below] {$0$};
	\draw [black] (32.1,-29.1) -- (40.1,-29.1);
	\fill [black] (40.1,-29.1) -- (39.3,-28.6) -- (39.3,-29.6);
	\draw (36.1,-29.6) node [below] {$1$};
	\draw [black] (8.8,-20.8) -- (12.4,-20.8);
	\fill [black] (12.4,-20.8) -- (11.6,-20.3) -- (11.6,-21.3);
	\end{tikzpicture}
	\caption{Unambiguous NFA $\nA$.}
	\label{fig_ufa}
\end{figure}
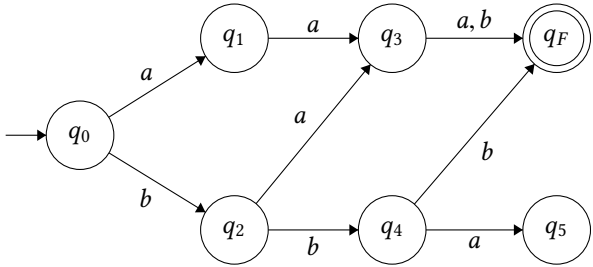

Basically, that is all you have to do, but there are a few technicalities remaining. First of all, we mentioned depth-first traversal even though we are not analyzing a graph, but an NFA. The clarification is simple. We will use the preprocessing phase to get a labeled directed acyclic graph (DAG) \review{$\nAun$} from $\nA$ and $k$, and do the depth-first traversal on \review{$\nAun$}. The DAG \review{$\nAun$} is obtained by first unrolling $\nA$ in the following way:
\begin{enumerate}
    \item Cluster all final states of $\nA$ into a single final state. This is easy to do: create a new state $q_F$, make it the unique final state, and create an $\varepsilon$-transition from all previous final states to the new one.
    \item Remove all $\varepsilon$-transitions (this is a standard procedure for an NFA).
    \item Unroll the NFA $k+1$ times. That is, for each state $q$ create $k+1$ copies $\{(q, i)\}_{i=0}^k$, and for each transition $q \xrightarrow{a} p$ in $A$, create the transitions $\{(q, i)  \xrightarrow{a} (p, i+1)\}_{i=0}^{k-1}$ in the unrolled automaton. Keep a unique initial state $(q_0,0)$ and a unique final state $(q_F,k)$.
    \item Remove all nodes that are not a part of an accepting run from the initial to the final state.
\end{enumerate}

See Figure \ref{fig_dag} for an example of this kind of transformation. It is easy to see that this can be done in polynomial time and that it produces a new NFA (alternatively, a labeled DAG) \review{$\nAun$} that is still unambiguous and accepts the same words of length $k$ as $\nA$. Since there are no $\varepsilon$-transitions, each string of length $k$ accepted by $\nA$ can be interpreted as a path of length $k$ in \review{$\nAun$} from the initial to the final state, and vice versa. Thus, a depth-first traversal of \review{$\nAun$} \review{will go through} all words of length $k$ accepted by $k$, and no more.

\begin{figure}[ht]
	\centering
	\begin{tikzpicture}[scale=0.16]
	\tikzstyle{every node}+=[inner sep=0pt]
	\draw [black] (18.1,-30.9) circle (3);
	\draw (18.1,-30.9) node {{\small $(q_0,0)$}};
	\draw [black] (32.1,-21) circle (3);
	\draw (32.1,-21) node {{\small $(q_1,1)$}};
	\draw [black] (32.1,-39.8) circle (3);
	\draw (32.1,-39.8) node {{\small $(q_2,1)$}};
	\draw [black] (49.6,-21) circle (3);
	\draw (49.6,-21) node {{\small $(q_3,2)$}};
	\draw [black] (49.6,-39.8) circle (3);
	\draw (49.6,-39.8) node {{\small $(q_4,2)$}};
	\draw [black] (65.1,-31.2) circle (3.7);
	\draw (65.1,-31.2) node {{\small $(q_F,3)$}};
	\draw [black] (65.1,-31.2) circle (3.1);
	\draw [black] (20.63,-32.51) -- (29.57,-38.19);
	\fill [black] (29.57,-38.19) -- (29.16,-37.34) -- (28.62,-38.18);
	\draw (24.1,-35.85) node [below] {$1$};
	\draw [black] (35.1,-39.8) -- (46.6,-39.8);
	\fill [black] (46.6,-39.8) -- (45.8,-39.3) -- (45.8,-40.3);
	\draw (40.85,-40.3) node [below] {$1$};
	\draw [black] (61.306,-30.899) arc (-96.27524:-151.64288:13.727);
	\fill [black] (61.31,-30.9) -- (60.57,-30.31) -- (60.46,-31.31);
	\draw (54.14,-29.14) node [below] {$1$};
	\draw [black] (52.17,-38.25) -- (61.73,-32.45);
	\fill [black] (61.73,-32.45) -- (60.79,-32.44) -- (61.31,-33.3);
	\draw (57.95,-35.85) node [below] {$1$};
	\draw [black] (34.14,-37.6) -- (47.56,-23.2);
	\fill [black] (47.56,-23.2) -- (46.64,-23.44) -- (47.38,-24.12);
	\draw (40.32,-28.94) node [left] {$0$};
	\draw [black] (20.55,-29.17) -- (29.65,-22.73);
	\fill [black] (29.65,-22.73) -- (28.71,-22.79) -- (29.29,-23.6);
	\draw (24.15,-25.45) node [above] {$0$};
	\draw [black] (35.1,-21) -- (46.6,-21);
	\fill [black] (46.6,-21) -- (45.8,-20.5) -- (45.8,-21.5);
	\draw (40.85,-20.5) node [above] {$0$};
	\draw [black] (52.594,-20.96) arc (84.38562:27.69626:13.481);
	\fill [black] (63.21,-28.11) -- (63.28,-27.17) -- (62.4,-27.64);
	\draw (59.75,-22.7) node [above] {$0$};
	\end{tikzpicture}
	\caption{Graph \review{$\nAun$} obtained from $\nA$.}
	\label{fig_dag}
\end{figure}

The second technicality concerns the following question: is the enumeration truly repetition-free? It is, for the following reason. Each path of length $k$ from the initial state to the final state is only traversed once (by definition of a depth-first traversal of a graph). Moreover, each one of those paths corresponds to a different string since $A$ and \review{$\nAun$} are both unambiguous automata. 

Finally, \review{does the enumeration phase really have constant delay? That is, does it take time $O(k)$ between one solution and the next?
The answer is yes.} Notice that it takes time $O(k)$ to traverse from the initial state to the final state, and from one final state visit to the next, because the traversal is depth-first. Also, recall that we removed (in step (4) above) all nodes that were not part of an accepting run from the initial to the final state. Thus, there is no time wasted: each traversal to the final state produces a new string that we can output.

\subsubsection{There exists a polynomial-time algorithm for $\COUNT(\unfa)$}

\review{Consider the graph \review{$\nAun$} as defined in Section \ref{sec-enum-cd}}. As we already pointed out, the number of paths of length $k$  from the initial to the final state is exactly what we want to compute: the number of strings of length $k$ accepted  by $\nA$ (the unambiguity assumption is crucial here). Since \review{$\nAun$} is a DAG, we know the number of paths between two given nodes can be computed exactly in polynomial time by dynamic programming. Hence, there exists a polynomial-time algorithm for $\COUNT(\unfa)$.

\subsubsection{There exists a polynomial-time randomized algorithm for $\GEN(\unfa)$}

{
Consider an input $(\nA, 0^k)$ for the problem $\GEN(\unfa)$. Moreover, as for the case of $\COUNT(\unfa)$, consider the graph \review{$\nAun$} defined in Section \ref{sec-enum-cd}, which can also be seen as an automaton. Assume that $(q_0,0)$ is the initial state of \review{$\nAun$}, and that $\{(q_1,1)$, $\ldots$, $(q_\ell,1)\}$ is the set of states in \review{$\nAun$} reachable from $(q_0,0)$ by following an edge with label 0. Let $N_0$ be the number of strings of length $k$ that start with the symbol 0 and are accepted by $\nA$. Then we can compute $N_0$ in polynomial time by using the counting algorithm mentioned in the previous section, starting from each one of the states in  $\{(q_1,1)$, $\ldots$, $(q_\ell,1)\}$. Notice that this algorithm works properly as $\nA$ is an unambiguous NFA. In the same way, we can compute in polynomial time the number $N_1$ of strings of length $k$ that start with the symbol 1 and are accepted by $\nA$. Given $N_0$ and $N_1$, the first symbol $w_1$ of the string $w = w_1 \cdots w_k$ to be generated is chosen according to the probabilities:
$$
\pr(w_1 = 0) = \frac{N_0}{N_0+N_1} \quad \text{ and } \quad \pr(w_1 = 1) = \frac{N_1}{N_0+N_1}.
$$
Then the algorithm continues in the same way choosing $w_2$, $\ldots$, $w_k$. It is easy to prove that this algorithm generates uniformly, at random, a string accepted by $\nA$ of length $k$.

Notice that the previous idea is essentially the same as the one in \cite{jerrum1986random}, that is, we use the fact that the relation $\unfa$ is self-reducible and its counting problem can be solved efficiently.
However, a clarifying note should be included here.
We claim a polynomial-time randomized algorithm for $\GEN(\unfa)$, while an almost-uniform generator is claimed
in \cite{jerrum1986random}. Our result is stronger for two reasons. First of all, we have a stronger counting result (exact polynomial-time algorithm instead of an FPRAS) to use as the basis of our uniform generation algorithm. Second, the computational model considered in \cite{jerrum1986random} 
(the Probabilistic Turing Machine) is a bit different from the one
considered in this work.
It cannot, for example, simulate a Bernoulli experiment with a success
probability of exactly $\frac{1}{3}$. Essentially, it makes it
impossible to get an exact uniform generation algorithm. We are less
strict with our computational model, so we are able to get a
polynomial-time randomized algorithm for $\GEN(\unfa)$.
}

%
%

	\section{$\snfa$ Admits a Fully Polynomial-Time Randomized Approximation Scheme, and its  Implications to the Class $\rnl$}
	\label{approximate}


The goal of this section is to provide a proof of Theorem \ref{theo-rnl}, which considers the class $\rnl$ defined in terms of $\nl$-transducers. Given that we showed in Proposition \ref{prop_complete_problems} that $\nfa$ is complete for $\rnl$, we have by Propositions \ref{prop_reduction_preserves_properties} that Theorem \ref{theo-rnl} is a consequence of the following result.
\begin{theorem}\label{theo-all-rnl}
$\ENUM(\nfa)$ can be solved with polynomial delay, $\COUNT(\nfa)$ admits an FPRAS, and $\GEN(\nfa)$ admits a PPLVUG.
\end{theorem}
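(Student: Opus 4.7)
The plan is to handle the three statements on a shared combinatorial skeleton: the layered, trimmed unrolled DAG $\nAun$ built from $(\nA,0^k)$ by creating a copy $(q,i)$ of each $q \in \nQ$ at every layer $i \in [0,k]$, putting transitions $(q,i) \trans{a} (q',i+1)$ whenever $(q,a,q') \in \ndelta$, fixing $(q_0,0)$ as the unique start and $(\nF,k)$ as the unique accepting node, and then deleting every node that does not lie on a start-to-accept path. This construction takes polynomial time. For every surviving node $(q,i)$ I will write $L_q^i \subseteq \Sigma^{k-i}$ for the set of labels of continuations from $(q,i)$ to $(\nF,k)$, so that $|W_\nfa((\nA,0^k))| = |L_{q_0}^0|$ is the quantity to approximate, enumerate, and sample from.

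For the enumeration part, I would traverse $\nAun$ prefix by prefix, keeping alongside the current prefix $u \in \Sigma^i$ the set $S_u \subseteq \nQ$ of states reachable from $(q_0,0)$ on $u$. A prefix $u$ is extensible by $a$ iff some $(q,i) \in S_u$ has an outgoing $a$-edge that survived trimming, a test which is polynomial in $|\nA|$ and $k$. Because every surviving branch reaches an accepting node, a new word is produced after polynomial work, yielding polynomial delay; distinct prefixes yield distinct words, so there are no repetitions.

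The technical heart is the FPRAS for $\COUNT(\nfa)$. I plan to compute bottom-up, for $i = k,k-1,\ldots,0$ and every $q$, a multiplicative $(1\pm\epsilon_i)$-approximation $\tilde N_q^i$ of $|L_q^i|$ together with an almost-uniform sampler $\mathcal{S}_q^i$ for $L_q^i$. The recursion
\begin{eqnarray*}
L_q^i & = & \bigcup_{a \in \Sigma}\, a \cdot \Bigl(\bigcup_{q' : (q,a,q') \in \ndelta} L_{q'}^{i+1}\Bigr)
\end{eqnarray*}
is a disjoint union over $a$ but a genuine union over successors, and the latter is where the $\sharpp$-hardness hides. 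I would estimate each inner union by a Karp-Luby-style importance sampler: draw $q'$ with probability proportional to $\tilde N_{q'}^{i+1}$, generate $w \sim \mathcal{S}_{q'}^{i+1}$, and output the reciprocal of the number of $a$-successors $q''$ of $q$ with $w \in L_{q''}^{i+1}$, a quantity decidable in polynomial time via reachability in $\nAun$. Because the cover has size at most $|\nQ|$ the reciprocal lies in $[1/|\nQ|,\,1]$, so a Chernoff bound gives polynomially many samples. The sampler $\mathcal{S}_q^i$ then draws $(a,q')$ proportionally to the updated estimates, recurses via $\mathcal{S}_{q'}^{i+1}$, and rejects with probability proportional to the multiplicity, thereby undoing the overcount.

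The main obstacle I anticipate is the circular coupling of approximate counting and almost-uniform sampling, together with error propagation across the $k\cdot|\nQ|$ estimates. The plan resolves this by doing both in lockstep from the deepest layer up and by distributing the slack as $\epsilon_i = \epsilon/\poly(|\nA|,k)$, so that the cumulative multiplicative drift stays within $1\pm\epsilon$; establishing concentration and controlling variance of the union estimator at each level will be where the bulk of the technical work lies. Once the FPRAS is in hand, the PPLVUG for $\GEN(\nfa)$ follows by letting $\cP(x,\delta)$ run the preprocessing with precision $1/\poly(|\nA|,k)$ and failure parameter $\delta$ and pack the whole table $(\tilde N_q^i)_{q,i}$ as the advice $\cD$, and letting $\cG(\cD)$ run $\mathcal{S}_{q_0}^0$ followed by a final rejection against a precomputed upper bound on the ratio of proposal to true uniform mass, returning $\fail$ when rejected and a truly uniform sample otherwise; tuning the bound so that the rejection probability is at most $1/2$ meets the PPLVUG conditions.
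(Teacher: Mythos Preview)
Your enumeration argument is fine and essentially coincides with the paper's (which invokes self-reducibility of $\nfa$ together with a polynomial-time existence test). The gap is in the FPRAS.

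Your sampler $\mathcal{S}_q^i$ picks a successor \emph{state} $q'$ with probability proportional to the \emph{estimate} $\tilde N_{q'}^{i+1}$ and then rejects with probability proportional to the multiplicity. Multiplicity rejection corrects the overcount coming from the union, but it does \emph{not} correct the bias from replacing $|L_{q'}^{i+1}|$ by $\tilde N_{q'}^{i+1}$: after rejection, two strings $w,w'$ are output with probabilities whose ratio can be as large as $\max_{q'}\tilde N_{q'}^{i+1}/|L_{q'}^{i+1}|$ over $\min_{q'}\tilde N_{q'}^{i+1}/|L_{q'}^{i+1}|$. So $\mathcal{S}_q^i$ is only almost-uniform, with bias $B_i$ governed by the layer-$(i{+}1)$ counting error $E_{i+1}$. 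That bias then contaminates your Karp--Luby estimator at layer $i$, giving a recursion of the shape $E_i+B_i\gtrsim c\,(E_{i+1}+B_{i+1})$ with $c>1$, i.e.\ geometric growth across $k$ layers. Distributing slack as $\epsilon_i=\epsilon/\poly(|\nA|,k)$ controls an additive drift but not a multiplicative one; to kill $c^k$ you would need exponentially small $\epsilon_i$ and hence exponentially many samples. This is precisely the barrier that left \cite{KSM95,gore1997quasi} at quasi-polynomial time. Your final PPLVUG rejection ``against a precomputed upper bound on the ratio of proposal to true uniform mass'' would fix this \emph{if} you could compute the proposal probability of the generated string, but with state-by-state sampling plus per-layer multiplicity rejection that probability is not available to you.

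The paper breaks the circularity with two ideas your plan is missing. First, it does not recurse through samplers; it \emph{stores} at every layer $\alpha$ a polynomial-size multiset $S(q^\alpha)\subseteq\nL(q^\alpha)$ of \emph{truly uniform} samples, and uses these sketches, via NFA membership tests, to estimate $|\nL(P^\alpha)|$ for \emph{arbitrary} subsets $P\subseteq Q$ through an ``intersection-rate'' variant of Karp--Luby. Second, the sampler never chooses a successor state: it chooses the next \emph{bit} $b\in\{0,1\}$, maintains the full set $P^\beta$ of states consistent with the suffix built so far, uses the sketches to compute $N(P^\beta_b)$, tracks the proposal probability exactly as a product of known ratios $N(P^\beta_b)/N(P^\beta)$, and performs a single Jerrum--Valiant--Vazirani rejection at the end. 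Because the rejection sees the exact proposal probability, the output is perfectly uniform, so the sampling bias is \emph{zero} at every layer; only the counting error compounds, and it does so as $(1\pm k^{-2})^\alpha$, which is tamed by taking $k=\lceil nm/\epsilon\rceil$. The PPLVUG then falls out of this same sampler.
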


{
The existence problem for $\nfa$ has as input an NFA $\nA$ and a value $k$ given in unary (as the string $0^k$), and the question to answer is whether $W_{\nfa}((\nA,0^k)) \neq \emptyset$ (that is, whether there are any solutions for $(\nA,0^k)$ according to the relation $\nfa$). It is easy to prove such a task can be solved in polynomial time, as the nonemptiness problem for NFA can be solved in polynomial time. Moreover, we proved in 
Section \ref{proof_self_reducibility} that $\nfa$ is a self-reducible relation.
With all that, a polynomial delay algorithm for $\ENUM(\nfa)$ can be derived from the folklore result that such an enumeration algorithm exists for a self-reducible relation, if the associated existence problem for this relation can be solved in polynomial time (a precise statement of this result can be found in Lemma 4.10
in \cite{schmidt2009enumeration}). In this section, we focus on the remaining part of the proof of Theorem \ref{theo-all-rnl}. More specifically,  we provide an algorithm that approximately counts the number of words of a given length accepted by an NFA, where this length is given in unary. This constitutes an FPRAS for $\COUNT(\nfa)$, as formally stated in the following theorem:}
\begin{theorem}\label{theo-snfa-fpras}
$\snfa$ (and, thus, $\COUNT(\nfa)$) admits a fully polynomial-time randomized approximation scheme.
\end{theorem}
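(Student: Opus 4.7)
\medskip
\noindent\textbf{Proof proposal.}
The plan is to reduce the counting of length-$k$ words accepted by an NFA $\nA$ to a collection of \emph{union-of-sets} counting problems that can be handled by a Karp--Luby-style estimator, and then to solve them inductively layer by layer using the fact that both approximate counting and approximate sampling can be propagated together. First, I would apply the unrolling construction described in Section~\ref{cde} to obtain a layered DAG $G_{\nA,k}$ whose nodes are pairs $(q,i)$ with $i\in[0,k]$, together with a unique source $(q_0,0)$ and a unique sink $(q_F,k)$. For each node $v=(q,i)$, let $S_v\subseteq\Sigma^{k-i}$ denote the set of labels of paths from $v$ to $(q_F,k)$; the goal is to approximate $|S_{(q_0,0)}|$, since this equals the number of length-$k$ strings accepted by $\nA$.

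Given the decomposition $S_v=\bigcup_{a\in\Sigma}\bigcup_{u\in\delta(v,a)} a\cdot S_u$, the key idea is to compute, inductively from layer $k$ down to layer $0$, two objects at each node $v$: an approximate count $\widetilde{N}(v)$ of $|S_v|$, and an approximately-uniform sampler $\widetilde{\cG}_v$ for the distribution on $S_v$. Base case ($i=k$) is trivial. For the inductive step, suppose approximations at layer $i+1$ are available. To estimate $|S_v|$ one applies the Karp--Luby union estimator to the family $\{a\cdot S_u : a\in\Sigma,\ u\in\delta(v,a)\}$: draw a pair $(a,u)$ with probability proportional to $\widetilde{N}(u)$, use $\widetilde{\cG}_u$ to draw an element $w$, and weight the contribution of $aw$ by $1/|\{(a',u'): u'\in\delta(v,a'),\ a'=a,\ w\in S_{u'}\}|$. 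Membership queries $w\in S_{u'}$ are decidable in polynomial time since $u'$ is a state of the unrolled NFA, so the estimator is implementable. To build $\widetilde{\cG}_v$, first pick $(a,u)$ with probability proportional to $\widetilde{N}(u)$, then draw $w$ from $\widetilde{\cG}_u$, and finally perform rejection sampling in which $aw$ is accepted with probability $1/|\{(a',u'): w\in S_{u'}\}|$; conditioned on acceptance, this produces a uniform sample from $S_v$.

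The correctness guarantees propagate inductively: by a standard Chernoff bound the Karp--Luby step at each node can achieve any polynomial precision $1+\gamma$ and any inverse-polynomial failure probability using $\text{poly}(|\nA|,k,1/\gamma)$ samples, provided the inputs $\widetilde{N}(u)$ and the samplers $\widetilde{\cG}_u$ are themselves $(1+\gamma')$-accurate for a slightly smaller $\gamma'$. Since the DAG has $O(k)$ layers and only polynomially many nodes per layer, setting $\gamma'=\varepsilon/\Theta(k\cdot|\nA|)$ and using a union bound yields an overall relative error of $\varepsilon$ with success probability $\ge 3/4$, while the total work remains polynomial in $|\nA|$, $k$, and $1/\varepsilon$. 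This is precisely an FPRAS for $\snfa$.

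The main obstacle I expect is controlling how the errors from approximate counts \emph{and} approximate samplers compound across layers, since the Karp--Luby variance bound and the rejection probabilities both depend on having the $\widetilde{N}(u)$ close to the true $|S_u|$; a careless analysis would force the precision at layer $i$ to depend on the precision at layer $i+1$ in a way that accumulates exponentially. The fix, which is what the rework alluded to in the Proviso almost certainly exposes more cleanly, is to prove a single invariant maintained across layers, asserting joint multiplicative closeness of the counts and total-variation closeness of the samplers, then show that one layer of Karp--Luby with polynomially many samples strictly restores the invariant up to a $(1+O(\varepsilon/k))$ loss. Everything else---completeness of $\nfa$ for $\rnl$ via Proposition~\ref{prop_complete_problems} and preservation under reductions via Proposition~\ref{prop_reduction_preserves_properties}---is then immediate, and Theorem~\ref{theo-snfa-fpras} follows.
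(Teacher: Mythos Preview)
Your high-level architecture---unroll the automaton, maintain a count estimate and a sampler at every node, and propagate both through the layers via a union-of-sets estimator---is exactly the skeleton of the paper's proof. The two real differences are the direction of the induction (you work with suffix sets $S_v$ from layer $k$ down, the paper works with prefix sets $\nL(q^\alpha)$ from layer $0$ up) and the form of the union estimator (you invoke Karp--Luby directly, the paper stores a polynomial-size sample multiset $S(q^\alpha)$ and plugs it into the ``intersection-rate'' formula~$(\ddagger)$). Neither of these is the essential point.

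The genuine gap is in your sampler construction, and it is precisely the obstacle you flag at the end without resolving. Your rejection rule---accept $aw$ with probability $1/|\{(a',u'):w\in S_{u'}\}|$---produces a uniform sample from $S_v$ \emph{only if} the weights $\widetilde{N}(u)$ are exact and the child samplers $\widetilde{\cG}_u$ are exactly uniform. With approximate inputs, the output distribution is merely close in total variation, and that TV error is at least as large as the TV error one layer down. Tracking a joint ``count-closeness plus TV-closeness'' invariant across $k$ layers is exactly what you would then have to do, and you give no argument that one layer of Karp--Luby restores the invariant rather than degrading it; more samples reduce variance but not the bias inherited from a non-uniform child sampler. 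This is not a detail that falls out of a Chernoff bound.

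The paper's key technical move, which you are missing, is to make the per-layer samples \emph{exactly} uniform via the Jerrum--Valiant--Vazirani rejection trick: when sampling a string bit by bit, record the exact probability $\phi$ with which that particular string was generated (this is known, since it is a product of the ratios $N(P^\beta_b)/N(P^\beta)$ actually used), and then accept with probability $\varphi_0/\phi$ for a fixed $\varphi_0$ chosen so that this ratio never exceeds $1$. Conditioned on acceptance, the output is perfectly uniform over $\nL(q^\alpha)$, regardless of the errors in the counts. Because the samples are truly uniform, the Hoeffding step that controls the empirical intersection rates (the event $\mathcal{E}(\alpha)$) is clean, and the only error that propagates is the multiplicative error in the counts, which compounds as $(1\pm k^{-2})^\alpha$ and stays under control. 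Replace your coverage-based rejection by this probability-tracking rejection and your outline becomes essentially the paper's proof.
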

\review{The algorithm mentioned in this theorem} works by simultaneously counting and doing uniform generation of
solutions.
\review{Then} its existence 
not only  gives us an FPRAS for $\COUNT(\nfa)$, but also a PPLVUG for $\GEN(\nfa)$, as formally stated in the following theorem:
\begin{theorem}\label{theo-snfa-pplvug}
$\GEN(\nfa)$ admits a preprocessing polynomial-time Las Vegas uniform generator.
\end{theorem}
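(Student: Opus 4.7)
The plan is to turn the FPRAS from Theorem~\ref{theo-snfa-fpras} into a PPLVUG by precomputing a coherent table of approximate counts in the preprocessing phase and then using them to drive a self-reducibility-based sampler, followed by a rejection step that upgrades almost-uniform generation to truly uniform generation. Concretely, given input $x=(\nA,0^k)$ and $\delta\in(0,1)$, the preprocessing $\cP(x,\delta)$ first tests whether $W_\nfa(x)=\emptyset$ by a polynomial-time reachability check in $\nA$, returning $\bot$ in that case. Otherwise $\cP$ invokes the FPRAS, with accuracy $\epsilon=\Theta(1/k)$ and individual success probability boosted to $1-\delta/(|Q|(k+1))$, once for each state $v$ of $\nA$ and each length $\ell\in[0,k]$, producing estimates $\hat{N}_{v,\ell}$ of the true counts $N_{v,\ell}$ of strings of length $\ell$ accepted from $v$. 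A union bound then guarantees that with probability at least $1-\delta$, the full table $\cD=(\nA,k,\{\hat{N}_{v,\ell}\})$ is \emph{good-for-generation}, meaning $\hat{N}_{v,\ell}\in[(1-\epsilon)N_{v,\ell},(1+\epsilon)N_{v,\ell}]$ for every $v,\ell$ with $N_{v,\ell}>0$.

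The generator $\cG(\cD)$ draws a word $w\in\Sigma^k$ by a top-down walk on $\nA$: at current state $v$ with $\ell$ symbols remaining, it picks a transition $(v,a,u)$ with probability proportional to $\hat{N}_{u,\ell-1}$. This defines a fixed distribution $\tilde{p}$ on $W_\nfa(x)$, and for any $w$ the walk produces, $\tilde{p}(w)$ can be evaluated exactly in $O(k)$ arithmetic operations against the estimates recorded in $\cD$. A telescoping argument on the $k$ levels of the walk shows that when $\cD$ is good-for-generation, every $w\in W_\nfa(x)$ satisfies $\tilde{p}(w)\in[((1-\epsilon)/(1+\epsilon))^k/N,\,((1+\epsilon)/(1-\epsilon))^k/N]$, with $N=N_{v_0,k}$. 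Choosing $\alpha=(1-\epsilon)^{k+1}/((1+\epsilon)^k\hat{N})$, which is computable from $\cD$ alone, the generator accepts the sampled $w$ with probability $\alpha/\tilde{p}(w)\le 1$ and otherwise returns $\fail$. Conditioned on acceptance, each $w\in W_\nfa(x)$ is output with probability exactly $\alpha$, which normalizes to $1/N$; the per-trial acceptance probability is $N\alpha\ge((1-\epsilon)/(1+\epsilon))^{k+1}$, which for $\epsilon=\Theta(1/k)$ is bounded below by $1/2$, matching condition~(\ref{pplvug-2-1}).

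The main technical obstacle is arranging the FPRAS of Theorem~\ref{theo-snfa-fpras} so that all $|Q|(k+1)$ entries $\hat{N}_{v,\ell}$ are committed once and for all during $\cP$, yielding a single fixed distribution $\tilde{p}$ rather than a freshly re-randomized one per sampled bit; without this commitment the rejection step is not well-defined and the uniformity argument breaks. This is handled by running the FPRAS separately on each pair $(v,\ell)$ with success probability boosted to $1-\delta/(|Q|(k+1))$, which by standard median-boosting costs only an extra $\log(|Q|k/\delta)$ factor in running time, so $\cP$ remains polynomial in $|x|$ and $\log(1/\delta)$ as required by the PPLVUG definition. The remaining verification is the algebraic calibration of $\epsilon=\Theta(1/k)$ so that the telescoping bound simultaneously enforces $\alpha/\tilde{p}(w)\le 1$ and per-trial acceptance probability at least $1/2$; this reduces to checking $((1-\epsilon)/(1+\epsilon))^{k+1}\ge 1/2$, which holds for $\epsilon$ below some $c/k$. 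Finally, $\cG(\cD)$ runs in time $O(k)$ plus the cost of evaluating $\tilde{p}(w)$ and the rejection test, hence polynomial in $|\cD|$, establishing all the conditions of a PPLVUG.
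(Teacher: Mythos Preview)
Your approach has a genuine gap at the core sampling step. The top-down walk you describe samples a \emph{run} of the NFA, not a word: at each step you commit to a single successor state $u$. For a DFA this is harmless, but for an NFA a single word $w$ may have many accepting runs, and your walk is biased toward such words. Concretely, with exact counts the normalizing constant at $(v,\ell)$ is $Z_{v,\ell}=\sum_{(v,a,u)\in\Delta}N_{u,\ell-1}$, which in general strictly exceeds $N_{v,\ell}$ (it counts runs, not words). Hence the product $\prod_i N_{v_i,k-i}/Z_{v_{i-1},k-i+1}$ does \emph{not} telescope to $1/N$, and your claimed bound $\tilde p(w)\in[((1-\epsilon)/(1+\epsilon))^k/N,\,((1+\epsilon)/(1-\epsilon))^k/N]$ is false. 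A two-word example makes this quantitative: take transitions $q_0\xrightarrow{a}q_i\xrightarrow{b}q_F$ for $i=1,\dots,m$ together with $q_0\xrightarrow{c}q_{m+1}\xrightarrow{d}q_F$, so that the language is $\{ab,cd\}$; even with perfect counts your walk outputs $ab$ with probability $m/(m+1)$ and $cd$ with probability $1/(m+1)$. The rejection step cannot repair this: if you reject using $\alpha/\tilde p(\rho)$ for the run $\rho$ actually traversed, you obtain the uniform distribution on runs, not on words; if instead you compute the true $\tilde p(w)$ (which is a sum over all runs for $w$, not evaluable in $O(k)$ operations as you claim, though computable in polynomial time by dynamic programming), the requirement $\alpha\le\min_w\tilde p(w)$ forces the acceptance probability $N\alpha$ to be at most $2/(m+1)$, which is not bounded away from zero.

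This is precisely the obstacle that makes $\snfa$ hard and that the paper's algorithm is designed to overcome. The paper's \textbf{Sample} procedure grows the word bit by bit while tracking the \emph{set} $P^\beta\subseteq Q$ of currently reachable states, and it branches on the next bit $b$ with probability $N(P^\beta_b)/N(P^\beta)$, where $N(P^\beta_b)$ estimates $|\nL(P^\beta_b)|$, the number of \emph{words} (not runs) labeling a path into $P^\beta$ that ends with $b$. Computing $N(P^\beta_b)$ for arbitrary subsets $P\subseteq Q$ is exactly where the sample sketches $S(q^\alpha)$ and the intersection-rate estimates of Section~\ref{subsec:sets} are needed; a black-box FPRAS for the per-state counts $N_{v,\ell}$ alone does not suffice. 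Once those set-estimates are available, the telescoping in Proposition~\ref{prop:unifsamplecondition} is valid because each word corresponds to a unique sequence of bits (hence a unique path through the sampler), and the rejection step then yields a truly uniform sample with constant acceptance probability. The preprocessing for the PPLVUG is therefore the full sketch $\{N(p^\beta),S(p^\beta)\}_{p,\beta}$, not just a table of per-state FPRAS values.
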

In the rest of this section, we prove Theorems \ref{theo-snfa-fpras} and \ref{theo-snfa-pplvug}. More specifically, we start by providing in Section \ref{sec-approximate-technique} an overview of the algorithmic techniques used in the proof of Theorem \ref{theo-snfa-fpras}. Then we present in 
Section~\ref{sec-approximate-template} the template for the FPRAS for $\snfa$, whose main components are given in Sections~\ref{subsec:sets} and \ref{subsec:sampling}. A complete version of the FPRAS for $\snfa$ is finally given in Section \ref{subsec:algo}, where its correctness and polynomial-time complexity are established.  Moreover, the proof of Theorem~\ref{theo-snfa-pplvug} is also given in Section \ref{subsec:algo}.

\subsection{An overview of the algorithmic techniques} \label{sec-approximate-technique} 


We start by providing a high-level overview of our FPRAS for the $\snfa$ problem. To this end, we first set the necessary terminology to refer to this counting problem. 

\begin{sloppypar}
A non-deterministic finite automaton (NFA) $\nA$ over the alphabet \review{$\{0,1\}$} is given as a tuple $(\nQ, \{0,1\}, \Delta, \nI, \nF)$, where $\nQ$ is a finite set of states, $\Delta \subseteq \nQ \times \{0,1\} \times \nQ$ is the transition relation, $\nI \subseteq \nQ$ is a set of initial states and $\nF \subseteq \nQ$ is a set of final states. The language of the strings in $\{0,1\}^*$ that are accepted by $\nA$ is denoted by $\nL(\nA)$. Moreover, given a natural number $n$, the language $\nL_n(\nA)$ is defined as $\nL(\nA) \cap \{0,1\}^n$. With this terminology we define the counting problem $\snfa$ as follows. \review{The input of $\snfa$ is an NFA $\nA$ with $m$ states over the alphabet $\{0,1\}$  and a natural number $n$, and the task is to return $|\nL_n(\nA)|$. Here, $n$ is given in unary (that is, $n$ is given as the string $0^n$)}
\footnote{As mentioned before, it is known that $\snfa$ belongs~to~$\shp$. Notice that the fact that $n$ is given in unary is necessary to show this property. If $n$ is given as a binary number, then the value $|\nL_n(\nA)|$ can be double exponential in the size $O(\log n)$ of this input, since $|\nL_n(\nA)|$ can be equal to $2^n$. Hence, $\snfa$ cannot be in $\shp$ if the input $n$ is given as a binary number, as if a function $f : \{0,1\}^* \to \N$ is in $\shp$, then there exists a polynomial $p(u)$ such that for every $w \in \{0,1\}^*$, it holds that $f(w) \leq 2^{p(|w|)}$.}.
\end{sloppypar}

To illustrate the difficulty of $\snfa$, we first consider the simpler problem of counting the number of strings $|\cL_n(A)|$ of length $n$ contained in the language $\cL(A)$ accepted by a \textit{deterministic finite automaton} (DFA) $A$. Note that if $w \in \cL_n(A)$, there is exactly one accepting path in the DFA for $w$. So to count $|\cL_n(A)|$, one can simply compute the total number of paths of length $n$ in the DFA, which can be done in polynomial time by a dynamic program. However, if $\cL_n(A)$ is instead the language accepted by an NFA, then $w \in \cL_n(A)$ can have exponentially many accepting paths, and so counting paths does not lead to a good estimate of $|\cL_n(A)|$ for an NFA.

One natural approach to overcome the aforementioned issue is to design an algorithm to estimate the \textit{ambiguity} of the NFA. For instance, the following procedure produces an unbiased estimator of  $|\cL_n(A)|$. First, sample a random path of length $n$ in the NFA, and let $w$ be the string accepted on that path. Second, count the number of accepting paths $P_w$ that $w$ has in the NFA, and also count the total number of paths $P$ of length $n$ in the NFA. Repeat this process $N$ times, and report the average value of $P/P_w$. The resulting estimator is indeed unbiased. However, the number of paths $P_w,P_{w'}$ can differ by an exponential factor for different strings $w,w'$, thus the variance of this estimator is exponential. Therefore, this algorithm requires exponentially many samples to obtain a good estimate. Several other similar estimators exist (see e.g. \cite{KSM95}), which all unfortunately do not lead to polynomial time algorithms for the general $\snfa$ problem.

The basic approach of our FPRAS is to incrementally estimate, for each state $q$ in the NFA, the number of distinct strings $w$ for which there is a path of length $\alpha$ from the starting states to $q$ labeled by $w$. Call this set of strings $\nL(q^\alpha)$. Our high level approach is similar to dynamic programming. Namely, to estimate $|\nL(q^\alpha)|$, we first estimate $|\nL(p^{\alpha-1})|$ for each state $p$ such that there is a transition from $(p,a,q)$ in the NFA, where $a \in \{0,1\}$. However, one cannot simply declare
\begin{eqnarray*}
	|\nL(q^\alpha)| & = & \sum_{p \,:\, (p ,a, q) \in \Delta} |\nL(p^{\alpha-1})|, 
\end{eqnarray*}
because a single string $w$ can be in many of the sets $\nL(p^{\alpha-1})$, which would result in over-counting. Therefore, we must also estimate the \textit{intersections} of the sets $\nL(p^{\alpha-1})$. This is challenging, as these sets themselves can be exponentially large, so we cannot afford to write them down. Moreover, there are \review{$2^m$} possible sets which can arise as the intersection of sets of the form $\nL(q^\alpha)$ for $q \in Q$, thus we cannot store an estimate of each.
 The main insight of our FPRAS is to \textit{sketch} the intermediate states  $\nL(p^{\alpha-1})$ of the dynamic program, by replacing the set $\nL(p^{\alpha-1})$ with a small (polynomial-sized) uniformly sampled set $S(p^{\alpha-1}) \subseteq \nL(p^{\alpha-1})$. Here, the sketch $S(p^{\alpha-1})$ acts as a compact representation of the (possibly) larger set $\nL(p^{\alpha-1})$. For instance, to see how such a sketch could be useful, if there were exactly two preceding states $(p_1,a, q)$ and $(p_2,a, q)$, to estimate the relative size of the intersection $|\nL(p_1^{\alpha-1}) \cap \nL(p_2^{\alpha-1})|/|\nL(p_1^{\alpha-1})|$, it will suffice to use the approximation $\tilde{I} = |S(p_1^{\alpha-1}) \cap \mathcal{L}(p_2^{\alpha-1})|/|S(p_1^{\alpha-1})|$. Notice that the quantity $ |S(p_1^{\alpha-1}) \cap \mathcal{L}(p_2^{\alpha-1})|$ can be computed in time polynomial in $|S(p_1^{\alpha-1})|$, by checking for each $w \in S(p_1^{\alpha-1})$ if $w$ is contained in $ \mathcal{L}(p_2^{\alpha-1})$, which can be accomplished in polynomial time by a membership query for NFAs.  If $N(p_1^{\alpha-1}),N(p_1^{\alpha-1})$ are our estimates of $|\nL(p_1^{\alpha-1})|,|\nL(p_2^{\alpha-1})|$, then we can therefore obtain an estimate of $|\nL(q^{\alpha})|$ by $N(q^{\alpha}) = N(p_1^{\alpha-1}) +  N(p_2^{\alpha-1}) - \tilde{I}\cdot N(p_1^{\alpha-1})$, avoiding the issue of overcounting the intersection.


The main technical hurdle that remains is to determine how to uniformly sample a string $w$ from a set $\nL(q^\alpha)$ to construct our sketches $S(p^{\alpha})$. This is accomplished by sampling the string $w$ bit by bit. We first partition $\nL(q^\alpha)$ into the set of strings with last bit equal to $0$ and $1$. We then estimate the size of both partitions, and choose a partition with probability proportional to its size. Finally, we store the bit corresponding to the sampled partition, append it to a \textit{suffix} $w'$ of the string $w$ and then recurse onto the next bit. In essence, we sample a string $w$ by growing a suffix of $w$. 

To estimate the size of the partitions, we use our sketches $S(p^\beta)$ of $\nL(p^\beta)$ for all $\beta \leq \alpha$ and states $p$.  Unfortunately, because of the error in estimating the sets $|\nL(p^\beta)|$, there will be some error in the distribution of our sampler. To correct this, and avoid and exponential propagation of this error, we \review{use} a rejection sampling technique of Jerrum, Valiant, and Vazirani \cite{jerrum1986random}, which normalizes the distribution and results in a perfectly uniform sample. This allows for our construction of the sketches $S(q^\alpha)$, and also gives an algorithm for \review{the} \textit{uniform generation} of strings of length $n$ from an~NFA.

\subsection{The algorithm template} \label{sec-approximate-template}


The input of $\snfa$ is an NFA $\nA = (\nQ, \{0,1\}, \Delta, \nI, \nF)$ 
with $m$ states, a string $0^n$ that represents a natural number $n$ given in unary, and an error $\neps \in (0,1)$. The problem then is to return a value $N$ such that $N$ is a $(1 \pm \neps)$-approximation of $|\nL_n(\nA)|$, that is,
\begin{eqnarray*}
(1 - \neps)|\nL_n(\nA)| \ \leq \ N \ \leq \ (1 + \neps)|\nL_n(\nA)|.
\end{eqnarray*} 
Besides, such an approximation should be returned in time polynomial in $m$, $n$ and $\frac{1}{\neps}$.

Our algorithm for approximating $|\nL_n(\nA)|$ first involves the construction of a labelled directed acyclic graph from the NFA $\nA$.
We call this 
graph $\nAun$, as it is obtained by unrolling $n$ times the NFA $\nA$.
Specifically, for every state $q \in Q$ create $n+1$ copies $q^0, q^1,\ldots,q^n$ of $q$, and include them as \review{vertices} of $\nAun$. Moreover, for every transition $(p,b,q)$ in $\Delta$,
create the edge $(p^\alpha, b, q^{\alpha+1})$ in $\nAun$, for every $\alpha \in [0,n-1]$. We refer to the set $Q^\alpha = \{q^\alpha \mid q \in Q\}$ as the $\alpha$-th layer of $\nAun$. Furthermore, for every set $P \subseteq Q$, we denote by $P^\alpha$ the copy of $P$ in the $\alpha$-th layer of $\nAun$. This means that $\nI^0$ refers to the initial states of $\nA$ at the first layer, and $F^n$ refers to the final states of $\nA$ at the last layer.
For the sake of presentation, 
we will use the terms \textit{vertex} and \textit{state} interchangeably to refer to the vertices of $\nAun$. Moreover, 
from now on we assume that $\nAun$ is pruned, that is, for every $q \in Q$ and every $\alpha \in [0, n]$, there exists a path from some vertex of $\nI^0$ to $q^\alpha$. In other words, all states in $\nAun$ are connected to some initial state. The pruning of $\nAun$ can be done in a pre-processing step in polynomial-time in $nm$,
without changing the overall time of the algorithm. We remark that in the remainder of the section, whenever we state that we run a procedure for $p^\alpha$ with $p \in Q$ and $\alpha$ some layer, it is implicitly assumed that all pruned states $p^\alpha$ have already been removed. Thus, for the remainder, we will not consider the pruned states at any point, since they cannot be used to derive words of length $n$ in the language.

Given a state $q$ and a layer $\alpha$, we define $\nL(q^\alpha)$ as the set of all strings $w$ such that there exists a path labeled with $w$ from some vertex in $\nI^0$ to $q^\alpha$. Notice that $|w| = \alpha$ for every $w \in \nL(q^\alpha)$, and also that $\nL(q^\alpha) \neq \emptyset$ since $\nAun$ is pruned. 
We extend this notation to every set $P \subseteq Q$, namely, $\nL(P^\alpha) = \bigcup_{q \in P} \nL(q^\alpha)$. 
The sets of strings $\nL(P^\alpha)$ will be crucial for our algorithm. Indeed, finding an approximation for $|\nL_n(\nA)|$ is reduced to finding an estimate for $|\nL(F^n)|$, where $F^n$ represents the set of final states of $A$ 
at the last layer.   

\review{
The components of 
our approximation algorithm are as follows. Fix the value $\kc = \lceil \frac{nm}{\neps}\rceil$ and assume that $n \geq 2$ and $m \geq 2$ (if $n \leq 1$ or $m \leq1$, then the problem can be easily solved in polynomial time). Then for each layer $\alpha$ and each state $q$ with $q^\alpha$ in $\nAun$, store a number $N(q^\alpha)$ and a set $S(q^\alpha) \subseteq \nL(q^\alpha)$ such that:
\begin{itemize}
	\item $N(q^\alpha)$ is a $(1\pm \kc^{-2})^\alpha$-approximation of $|\nL(q^\alpha)|$, and 
	\item $S(q^\alpha)$ is a uniform sample from $\nL(q^\alpha)$ of size $2\kc^7$. 
\end{itemize}

For the first requirement, we mean that 
\begin{eqnarray*}
(1 - \kc^{-2})^\alpha |\nL(q^\alpha)| \ \leq \ N(q^\alpha) \ \leq \ (1+ \kc^{-2})^\alpha|\nL(q^\alpha)|.
\end{eqnarray*} 
In particular, if $\alpha = 0$, we should have that $N(q^\alpha) = |\nL(q^\alpha)|$.
For the last requirement, we mean that each $w  \in S(q^\alpha)$ is a uniform and independent sample from $\nL(q^\alpha)$. 
Given this condition on the samples,
it is possible that we will obtain duplicates 
of a given $w \in \nL(q^\alpha)$. Besides, if $|\nL(q^\alpha)| < 2\kc^7$, then we know that $S(q^\alpha)$ has to contain duplicate elements. 
Therefore, we allow $S(q^\alpha)$ to be a multiset (meaning that the strings $w$ in $S(q^\alpha)$ are not necessarily distinct). The number $N(q^\alpha)$ and the set $S(q^\alpha)$ can be understood as a ``sketch'' of $\nL(q^\alpha)$ that will be used to compute other estimates for $\nAun$. 

}

The algorithm proceeds like a dynamic programming algorithm, computing $N(q^\alpha)$ and $S(q^\alpha)$ for every state $q^\alpha$ in $\nAun$ in a breadth-first search ordering. We first compute $N(q^0),S(q^0)$ for all states $q^0$ at layer $0$. 
Then, given $\bigcup_{\beta=0}^{\alpha-1} \bigcup_{p \in Q} \{N(p^\beta),S(p^\beta)\}$, we compute $N(q^\alpha),S(q^\alpha)$ for each vertex $q^\alpha$. So the value $N(q^\alpha)$ and the set $S(q^\alpha)$ are computed layer by layer. The final estimate for $|\nL(F^n)|$ is $N(F^n)$.
We summarize this algorithmic template in Algorithm \ref{fig:FPRAStemplate}.
\begin{figure}[t]
	\begin{Frame}[\textbf{Algorithm \ref{fig:FPRAStemplate}}: \ Algorithmic Template for our FPRAS  	  ]
		\label{fig:FPRAStemplate}
		\begin{enumerate}
			\item Construct the labelled directed acyclic graph $\nAun$ from an input NFA $\nA$ and string $0^n$, where $A = (\nQ, \{0,1\}, \Delta, \nI, \nF)$.
			\item For layers $\alpha=0,1,\dots,n$ and states $q \in Q$:
			\begin{enumerate}
				\item Compute $N(q^\alpha)$ given $\bigcup_{\beta=0}^{\alpha-1} \bigcup_{p \in Q} \{N(p^\beta),S(p^\beta)\}$. For $\alpha=0$, the value $N(q^\alpha)$ is computed without any additional information.
				\item Call a subroutine to sample polynomially many uniform elements from $\nL(q^\alpha)$ using the value $N(q^\alpha)$ and the elements $\bigcup_{\beta=0}^{\alpha-1} \bigcup_{p \in Q} \{N(p^\beta),S(p^\beta)\}$. 
				\item Let $S(q^\alpha) \subseteq \nL(q^\alpha)$ be the multiset of uniform samples obtained.
			\end{enumerate}
			\item Return $N(F^n)$ given $\bigcup_{\beta=0}^{n} \bigcup_{p \in Q} \{N(p^\beta),S(p^\beta)\}$.
		\end{enumerate}
	\end{Frame}
\end{figure}
\review{
For the rest of this section, we show how to instantiate the template of our algorithm.
For a layer~$\alpha$,  we show in Section~\ref{subsec:sets} how to compute the estimate $N(q^\alpha)$ given estimates $N(q^\beta)$ and sets $S(q^\beta)$ for all $\beta < \alpha$. In fact, for this we need to assume a strong condition (introduced in the next section), which states that the samples in our set $S(q^\alpha)$ satisfy good concentration properties. 
Next, given  $N(q^\alpha)$ and the prior estimates $N(q^\beta)$ and sets $S(q^\beta)$, we demonstrate in Section~\ref{subsec:sampling}
how to generate a uniform sample from the set $\nL(q^\alpha)$, proving how to compute $S(q^\alpha)$. 
In particular, again, we will show that the strong condition used as an induction hypothesis holds for the sets $S(q^\alpha)$ with exponentially large probability over $\kc$ (Section~\ref{subsec:bounding}). 
In the last section we put all pieces together and show the correctness of the algorithm. 
}

\subsection{Computing an estimate for a set of vertices} \label{subsec:sets}


Recall that the input of the problem is an NFA $\nA = (\nQ, \{0,1\}, \Delta, \nI, \nF)$ with $m$ states and a string $0^n$, 
and that we assume that $m \geq 2$ and $n \geq 2$.
\review{
Then fix a layer $\alpha$, and define  a sketch data structure such that $\nsketch[\alpha] := \{N(p^\beta),S(p^\beta)\}_{p\in Q, \beta \leq \alpha}$. Moreover, assume that $\nsketch[\alpha]$ has already been computed. In particular, $N(p^\beta)$ is a $(1\pm \kc^{-2})^\beta$-approximation of $|\nL(p^\beta)|$, and 
$S(p^\beta)$ is a uniform sample from $\nL(p^\beta)$ of size $2\kc^7$ for each $\beta \leq \alpha$. }
The goal of this section is twofold; we first show how to compute an estimate of $|\nL(P^\alpha)|$ for every $P \subseteq Q$, which is denoted by $N(P^\alpha)$, and then 
we show how to compute an estimate for $N(q^{\alpha+1})$. 
\review{These values $N(P^\alpha)$ will play a crucial role for computing not only $N(q^{\alpha+1})$, but also the set of uniform samples $S(q^{\alpha+1})$ and
the final estimate $N(F^n)$ for $|\nL(F^n)|$ (see Sections~\ref{subsec:sampling} and~\ref{subsec:bounding}).}

Let $P$ be a non-empty subset of $Q$, and suppose that we want to find an estimate $N(P^\alpha)$ for $|\nL(P^\alpha)|$. If $\nA$ is deterministic, then the sets $\{\nL(p^\alpha) \mid p \in P\}$ are disjoint, and then we can easily compute the size of $|\nL(P^\alpha)|$ as $\sum_{p \in P} |\nL(p^\alpha)|$. Unfortunately, given that $\nA$ can be non-deterministic, this sum will over-approximate the size of $|\nL(p^\alpha)|$, and we need to find a way to deal with the intersections of the sets $\{\nL(p^\alpha) \mid p \in P\}$. For this, fix a total order $\prec$ over the set $P$, and consider the following way to compute $|\nL(P^\alpha)|$:
\begin{align}\label{eq-Palpha}\tag{$\dagger$}
|\nL(P^\alpha)| \ = \ \sum_{p\in P} |\nL(p^\alpha)| \cdot \frac{|\nL(p^\alpha) \setminus \bigcup_{q \in P \,:\, q \prec p} \nL(q^\alpha)|}{|\nL(p^\alpha)|}
\end{align}
With the ratio $|\nL(p^\alpha) \setminus \bigcup_{q \in P \,:\, q \prec p} \nL(q^\alpha)| / |\nL(p^\alpha)|$, we removed from $\nL(p^\alpha)$ its intersection with all sets $\nL(q^\alpha)$ such that $q \prec p$.
In fact, one can easily check that $|\nL(P^\alpha)| = \sum_{p\in P} |\nL(p^\alpha) \setminus \bigcup_{q \in P\,:\, q \prec p} \nL(q^\alpha)|$ and, thus, equation~\eqref{eq-Palpha} trivially holds. We call the above ratio the \emph{intersection rate} of $p^\alpha$ in $P$ given $\prec$ (or just the intersection rate of $p^\alpha$).

Inspired by equation~\eqref{eq-Palpha}, we can estimate $|\nL(P^\alpha)|$ by using $N(p^\alpha)$ to estimate $|\nL(p^\alpha)|$ and 
$S(p^\alpha)$ to estimate the intersection rate of $p^\alpha$. More precisely, we define the estimate $N(P^\alpha)$ for $|\nL(P^\alpha)|$ as follows:
\begin{align}\tag{$\ddagger$}
N(P^\alpha) \ = \ \sum_{p\in P} N(p^\alpha) \cdot \frac{|S(p^\alpha) \setminus \bigcup_{q \in P\,:\, q \prec p} \nL(q^\alpha)|}{|S(p^\alpha)|} \label{eq:remove-intersection}
\end{align}
It is important to note that $N(P^\alpha)$ can be computed in polynomial time in the size of $\nsketch[\alpha]$. Indeed, the set $S(p^\alpha) \setminus \bigcup_{q \in P \,:\, q \prec p} \nL(q^\alpha)$ can be computed by iterating over each string $w \in S(p^\alpha)$ and checking whether $w \in \nL(\{q^\alpha \mid q \in P$ and $q \prec p\})$.
Given that verifying if a string is in 
$\nL(\{q^\alpha \mid q \in P$ and $q \prec p\})$ can be done in polynomial time, computing $N(P^\alpha)$ takes polynomial time as well. We call the ratio $|S(p^\alpha) \setminus \bigcup_{q \in P\,:\, q \prec p} \nL(q^\alpha)|/|S(p^\alpha)|$ the \emph{estimate of the intersection rate~of~$p^\alpha$}.  

To show that $N(P^\alpha)$ is a good estimate for $|\nL(P^\alpha)|$, we need that the estimate of the intersection rate is a good approximation of the real intersection rate in each layer. By a good approximation, we mean that the following condition holds at level $\alpha$:
$$
\mathcal{E}(\alpha) \ \ := \ \  \forall q \in Q \ \forall P \subseteq Q. \quad  
\bigg| 
\frac{|\nL(q^\alpha) \setminus \bigcup_{p \in P} \nL(p^\alpha)|}{|\nL(q^\alpha)|} 
-
\frac{|S(q^\alpha) \setminus \bigcup_{p \in P} \nL(p^\alpha)|}{|S(q^\alpha)|}
\bigg| \ < \ \frac{1}{\kc^{3}}
$$
This condition is crucial for the next results and most of our analysis in this and next section will assume that this condition holds. 
Towards the end, in Section~\ref{subsec:algo} we will show that, by Hoeffding's inequality, the condition $\mathcal{E}(\alpha)$ holds for all layers $\alpha$ with exponentially high probability \review{over $\kc$}. 
\review{
Next, we prove that, if condition $\mathcal{E}(\alpha)$ holds, then
$N(P^\alpha)$ is a good estimate for $|\nL(P^\alpha)|$.}
\review{
\begin{proposition}\label{prop:estimate-sets}
	Assume that $\mathcal{E}(\alpha)$ holds and $N(p^\alpha)$ is a $(1\pm \kc^{-2})^\alpha$-approximation of $|\nL(p^\alpha)|$ for every $p \in Q$. Then $N(P^\alpha)$ is a $(1\pm \kc^{-2})^{\alpha+1}$-approximation of $|\nL(P^\alpha)|$ for every $P \subseteq Q$.
\end{proposition}}
\begin{proof}
	\review{Given that condition $\mathcal{E}(\alpha)$ holds, we know that for each $p \in P$}: 
	\begin{align*}
	&\frac{|\nL(p^\alpha) \setminus \bigcup_{q \in P \,:\, q \prec p} \nL(q^\alpha)|}{|\nL(p^\alpha)|}- \kc^{-3} \ <\\
	&\hspace{100pt} \frac{|S(p^\alpha) \setminus \bigcup_{q \in P \,:\, q \prec p} \nL(q^\alpha)|}{|S(p^\alpha)|} \ <\\ 
	&\hspace{200pt} \frac{|\nL(p^\alpha) \setminus \bigcup_{q \in P \,:\, q \prec p} \nL(q^\alpha)|}{|\nL(p^\alpha)|} + \kc^{-3}
	\end{align*}
	Moreover, given that $N(p^\alpha)$ is a $(1\pm \kc^{-2})^\alpha$-approximation of $|\nL(p^\alpha)|$, it holds that:
	\begin{align*}
		(1 - \kc^{-2})^{\alpha} |\nL(p^\alpha)| \ \leq \ N(p^\alpha) \ \leq \ (1 + \kc^{-2})^{\alpha} |\nL(p^\alpha)|.
	\end{align*}
	Putting these two bounds together, we obtain the following bounds from the definition of $N(P^\alpha)$ in equation \eqref{eq:remove-intersection}:
	\begin{multline*}
	(1 - \kc^{-2})^{\alpha}  \sum_{p \in P}   \bigg(|\nL(p^\alpha) \setminus \bigcup_{q \in P \,:\, q \prec p} \nL(q^\alpha)| -\kc^{-3} |\nL(p^\alpha)|\bigg) \ < \ N(P^\alpha) \ <\\
	(1 + \kc^{-2})^{\alpha} \sum_{p \in P} \bigg( |\nL(p^\alpha) \setminus \bigcup_{q \in P \,:\, q \prec p} \nL(q^\alpha)| + \kc^{-3} |\nL(p^\alpha)| \bigg).	
	\end{multline*}
	Recall from the discussion of the intersection rate that $|\nL(P^\alpha)| = \sum_{p\in P} |\nL(p^\alpha) \setminus \bigcup_{q \in P \,:\, q \prec p} \nL(q^\alpha)|$. 
	Moreover, given that $\nL(p^\alpha) \subseteq \nL(P^\alpha)|$ and $|P| \leq m \leq \kc$, we have that $\sum_{p\in P} |\nL(p^\alpha)| \leq \sum_{p\in P} |\nL(P^\alpha)| = |P| \cdot |\nL(P^\alpha)| \leq \kc \cdot |\nL(P^\alpha)|$.
	Replacing both statements in the previous inequality, we obtain
	\begin{align*}
	(1 - \kc^{-2})^{\alpha} ( |\nL(P^\alpha)| -\kc^{-3} \cdot \kc |\nL(P^\alpha)|) \ < \ N(P^\alpha) \ < \ (1 + \kc^{-2})^{\alpha}( |\nL(P^\alpha)| +\kc^{-3} \cdot \kc |\nL(P^\alpha)|),
	\end{align*}
	which is equivalent to
	\begin{align*}
	(1 - \kc^{-2})^{\alpha+1} |\nL(P^\alpha)| \ < \ N(P^\alpha) \ < \ (1 + \kc^{-2})^{\alpha+1} |\nL(P^\alpha)|.
	\end{align*}
	This concludes the proof of the proposition.
\end{proof}
With the estimates of $|\nL(P^\alpha)|$ for every $P\subseteq Q$ at the $\alpha$-th layer, we are ready to give a good estimate for the size $|\nL(q^{\alpha+1})|$ of \review{a} single vertex in the next layer $\alpha+1$. Let $q^{\alpha+1}$ be an arbitrary vertex at layer $\alpha+1$. For $b \in\{0,1\}$, define the set of vertices $R_b = \{p^\alpha \in Q^\alpha \mid (p^\alpha, b, q^{\alpha+1})$ is an edge in $\nAun\}$, namely, the set of all vertices in the $\alpha$-th layer from which $q^{\alpha+1}$ can be reached  
by reading symbol $b$. Notice that sets $R_0$ and $R_1$ partition $\cL(q^{\alpha+1})$ in the following sense: 
\begin{align} \label{eq-qalpha-part}
\cL(q^{\alpha+1}) \ \ = \ \ \cL(R_0)\cdot\{0\} \ \uplus\ \cL(R_1)\cdot\{1\},
\end{align}
where given two sets $S_1, S_2$ of strings, $S_1 \cdot S_2$ is defined as the set consisting of the concatenation of each string of $S_1$ with each string of $S_2$ (in particular, $\cL(R_b)\cdot\{b\} =  \{ w \in \{0,1\}^* \mid w = v \cdot b$ with $v \in \cL(R_b)\}$ for $b \in \{0,1\}$). 
Equation \eqref{eq-qalpha-part} implies that $|\cL(q^{\alpha+1})| = |\cL(R_0)| + |\cL(R_1)|$. Notice that, if we assume $\mathcal{E}(\alpha)$ holds, then by Proposition~\ref{prop:estimate-sets} we have that $N(R_b)$ is a $(1\pm \kc^{-2})^{\alpha+1}$-approximation of $|\nL(R_b)|$ for $b \in \{0,1\}$, from which we obtain that 
$N(q^{\alpha+1}) = N(R_0) + N(R_1)$ is a $(1\pm \kc^{-2})^{\alpha+1}$-approximation of $|\nL(q^{\alpha+1})|$.
Therefore, we can derive an estimate $N(q^{\alpha+1})$ for $|\cL(q^{\alpha+1})|$ by using previous estimates $\{p^{\alpha}\}_{p\in Q}$.

\review{
Note that the computation of $N(q^{\alpha+1})$ is deterministic, by assuming that $\mathcal{E}(\beta)$ holds for all $\beta \leq \alpha$. Specifically, the estimates $N(q^{0})$ are exact for the initial layer. Next, for each layer $\alpha$ we assume that $\mathcal{E}(\alpha)$ holds and we can compute $N(q^{\alpha+1})$ by using  $\{N(p^{\alpha})\}_{p \in Q}$ (in fact, by using $\{N(P^{\alpha})\}_{P \subseteq Q}$). Then, we asumme that $\mathcal{E}(\alpha+1)$ holds and so on. Therefore, by filling the sets $\{S(p^{\beta})\}_{p\in Q,\beta \leq \alpha}$ with uniform samples and assuming that $\mathcal{E}(\beta)$ holds for all $\beta \leq \alpha$ we can compute each estimate $N(q^{\alpha+1})$. Moreover, we can guarantee that it is a $(1\pm \kc^{-2})^{\alpha+1}$-approximation of $|\nL(q^{\alpha+1})|$.
We summarize this fact in the following proposition.
}
\review{
\begin{proposition}\label{prop:estimate-sets2}
\begin{sloppypar}
Assume that $\mathcal{E}(\beta)$ holds for all $\beta \leq \alpha$. Then $N(p^{\alpha+1})$ is a $(1\pm \kc^{-2})^{\alpha+1}$-approximation of $|\nL(p^{\alpha+1})|$ for every $p \in Q$.
\end{sloppypar}
\end{proposition}
}

\review{
After all, at some point we will reach the last layer $n$ and we would like to compute the $(1\pm \neps)$-approximation for $|\nL_n(\nA)|$. 
For this, we can use $N(F^n)$ for estimating $|\nL_n(\nA)|$, which reach the ultimate goal of our algorithm.
\begin{proposition}\label{prop:final-estimate}
	If $\mathcal{E}(\beta)$ holds for all $\beta \leq n$, then $N(F^n)$ is a $(1\pm \neps)$-approximation for $|\nL_n(\nA)|$. 
\end{proposition}
\begin{proof}
	Assume that $N(F^n)$ is a $(1\pm \kc^{-2})^{n+1}$-approximation of $|\nL(F^n)| = |\mathcal{L}_n(\nA)|$, that is,
	\begin{eqnarray*}
	&(1 - \kc^{-2})^{n+1}|\mathcal{L}_n(\nA)| \ \leq \ N(F^n) \ \leq \ (1 + \kc^{-2})^{n+1}|\mathcal{L}_n(\nA)|.&
	\end{eqnarray*}
	But we have that:
	\begin{eqnarray*}
		(1 + \kc^{-2})^{n+1} & \leq & \bigg(1 + \bigg(\frac{\neps}{mn}\bigg)^{2}\bigg)^{n+1}\\
		& = & \bigg[\bigg(1 + \bigg(\frac{1}{(\frac{nm}{\neps})^{2}}\bigg)\bigg)^{(\frac{nm}{\neps})^{2}}\bigg]^\frac{(n+1)\neps^{2}}{n^{2} m^{2}} \\
		& \leq & e^{\frac{\neps^{2}}{{m}^{2}}}\\
		& \leq & 1 + 2 \frac{\neps^{2}}{{m}^{2}} \quad \text{since $e^x \leq (1+2x)$ for $x \in [0,1]$}\\
		& = & 1 + \neps \cdot \frac{2 \neps}{{m}^{2}}\\
		& \leq & 1 + \neps \quad \quad \text{since $m \geq 2$ and $\neps \in (0,1)$}
	\end{eqnarray*}
	and we also have that:
	\begin{eqnarray*}
		(1  - \kc^{-2})^{n+1} & \geq & \bigg(1 - \bigg(\frac{\neps}{mn}\bigg)^{2}\bigg)^{n+1}\\
		& = & \bigg[\bigg(1 - \bigg(\frac{1}{(\frac{nm}{\neps})^{2}}\bigg)\bigg)^{(\frac{nm}{\neps})^{2}}\bigg]^\frac{(n+1)\neps^{2}}{n^{2} m^{2}} \\
		& \geq & (e^{-2})^\frac{\neps^2}{{m}^{2}} \quad \text{since $\bigg(1-\frac{1}{x}\bigg)^x \geq e^{-2}$ for $x \geq 2$}\\
		& \geq & 1 - \frac{2 \neps^{2}}{{m}^{2}} \quad \ \ \text{since $e^{-x} \geq 1-x$ for $x \geq 0$}\\
		& = & 1 - \neps \cdot \frac{2 \neps}{{m}^{2}}\\
		& \geq & 1 - \neps \quad \quad \ \ \text{since $m \geq 2$ and $\neps \in (0,1)$}.
	\end{eqnarray*}
	Thus, we conclude that:
	\begin{equation*}
	(1 - \neps)|\mathcal{L}_n(\nA)| \ \leq \ N(F^n) \ \leq \ (1 + \neps)|\mathcal{L}_n(\nA)|.
	\end{equation*} 	
\end{proof}


In the following section, we show how to compute the set $S(q^{\alpha+1})$ using $\nsketch[\alpha]$, namely, how to generate a uniform sample from $\cL(q^{\alpha+1})$. Specifically, we show that assuming $\mathcal{E}(\beta)$ holds for all $\beta \leq \alpha$, we can obtain uniform samples from the sets $\nL(q^{\alpha+1})$ such that property $\mathcal{E}(\alpha+1)$ will hold with high probability. 
}

\subsection{Uniform sampling from a vertex} \label{subsec:sampling}


To carry out our main approximation algorithm, we must implement the algorithm template given in Algorithm \ref{fig:FPRAStemplate}, whose input is assumed to be an NFA $\nA = (\nQ, \{0,1\}, \Delta, \nI, \nF)$ with $m$ states and a string $0^n$, where $m \geq 2$ and $n \geq 2$. In the previous section, we implemented Step 2~(a) of this algorithm and, thus, the goal of this section is to implement the sampling subroutine in Step 2~(b).
This procedure is based on a sample technique proposed in \cite{jerrum1986random}, but modified to suit our setting. 

\review{
Take a state $q \in Q$ and layer $\alpha \leq n$, and assume that for all layers $\beta < \alpha$ the condition $\mathcal{E}(\beta)$ holds. Notice that by Proposition~\ref{prop:estimate-sets2}, once we have $\mathcal{E}(\beta)$ and estimates for all levels $\beta < \alpha$, we immediately get the estimates $N(p^\alpha)$ for the level $\alpha$ as well.
}

The procedure to sample a uniform element of the set  $\nL(q^{\alpha})$ is as follows. We initialize a string $w^\alpha$ to be the empty string. Then we construct a sequence of strings $w^\alpha$, $w^{\alpha - 1}$, $\ldots$, $w^1$, $w^0$, where each element $w^\beta$ is of the form $b_\beta \cdot w^{\beta+1}$ with $b_\beta \in \{0,1\}$, and we define the result of the sample procedure to be $w^0$. In other words, we sample a string $w^0$ of $\nL(q^{\alpha})$ by building a \textit{suffix} of the sample, bit by bit. To ensure that $w^0$ is an element of $\nL(q^{\alpha})$ chosen uniformly, we also consider a sequence of sets $P^\alpha$, $P^{\alpha-1}$, $\ldots$, $P^1$, $P^0$ constructed as follows. The first set is  $P^\alpha = \{q^\alpha\}$.
Then we consider the set of vertices at layer $\alpha-1$ that can reach the set $P^\alpha$ by reading letter $b$, namely, for $b \in \{0,1\}$ define:
$$
P^\alpha_b \ = \  \{p^{\alpha-1} \in Q^{\alpha-1} \mid \text{there exists } r^\alpha \in P^\alpha \text{ such that } 
(p^{\alpha-1}, b, r^{\alpha}) \text{ is an edge in } \nAun\}.
$$
Notice that, although we use the superscript $\alpha$, the set $P^\alpha_b$ is a subset of vertices in the $(\alpha-1)$-th layer.
\review{Similar to the previous section}, the sets $P^\alpha_0$ and $P^\alpha_1$ induce a partition \review{of the} set $\cL(P^{\alpha})$ in the following sense:
$$
\cL(P^{\alpha})\  \ = \ \ \cL(P^\alpha_0)\cdot\{0\} \ \uplus \ \cL(P^\alpha_1)\cdot\{1\}
$$
Therefore, our sampling algorithm \review{estimates} the size $N(P^\alpha_b)$ of $\cL(P^\alpha_b)$ for $b \in \{0,1\}$, and \review{chooses} one of $P^\alpha_0$, $P^\alpha_1$ with probability proportional to its size, namely, $N(P^\alpha_0)/(N(P^\alpha_0) + N(P^\alpha_1))$ and $N(P^\alpha_1)/(N(P^\alpha_0) + N(P^\alpha_1))$.
Say we choose $P^\alpha_b$. Then we define $b_{\alpha-1} = \review{b}$, append the bit $b_{\alpha-1}$ as a prefix of $w^\alpha$ to obtain $w^{\alpha - 1} = b_{\alpha-1} \cdot w^\alpha$, define $P^{\alpha - 1}$ as $P^\alpha_b$, and continue with the recursion on $w^{\alpha-1}$ and $P^{\alpha - 1}$. Hence, we have that $P^{\beta}$ is the set of vertices such that there exists a path labeled by $w^\beta$ that connects some state of $P^{\beta}$ with $q^\alpha$.
Notice that $\nL(P^\beta) \neq \emptyset$ for every layer $\beta$ (and, thus, $P^\beta \neq \emptyset$). 
Indeed, given that $\nAun$ is pruned, we know that $\cL(P^\alpha) = \cL(\{q^\alpha\}) \neq \emptyset$.  By induction, if for some level $\beta$ we have that $P^{\beta}_0 = \emptyset$ (similar when $P^{\beta}_1 = \emptyset$), then $\nL(P^{\beta}_{1}) \cdot \{1\}  = \nL(P^{\beta})$ and the next level $\beta-1$ will be chosen with probability $1$. In particular, $\nL(P^{\beta}_{1}) = \nL(P^{\beta-1}) \neq \emptyset$.

Since there could be an error in estimating the sizes of the partitions, it may be the case that some items were chosen with slightly larger probability than others. To remedy this and obtain a perfectly uniform sampler, at every step of the algorithm we store the probability with which we chose a partition. Thus at the end, we have computed exactly the probability $\phi$ with which we sampled the string $w$. We can then reject this sample with probability proportional to $\phi$, which gives a perfect sampler. As long as no string is too much more likely than another to be sampled, the probability of rejection will be a constant, and we can simply run our sampler $O(\log(\frac{1}{\mu}))$-times to get a sample with probability $1-\mu$ for every $\mu > 0$. 

\begin{sloppypar}
This procedure then is given in Algorithm \ref{fig:sample}. We call it with the initial parameters \textbf{Sample}$(\alpha, \{q^\alpha\},\, \nemptyword,\, \varphi_0)$, where $\nemptyword$ is the empty string, corresponding to the goal of sampling a uniform element of $\nL(P^\alpha) = \nL(q^\alpha)$. Here, $\varphi_0$ is a value that we will later choose. 
Notice that at every step 
of Algorithm \ref{fig:sample}, we have that $|\nL(P^\beta)|$ is precisely the number of strings in $\nL(q^\alpha)$ which have the suffix $w^\beta$, as $\nL(P^\beta)$ is the set of strings $x$ such that $x \cdot w^\beta \in \nL(q^\alpha)$.
Observe then that the set $P^\beta$ depends on the random string $w^\beta$, so in fact we could write $P^\beta_{w^\beta}$ instead of $P^\beta$. For notational simplicity we omit the subscript, and it is then \review{understood} that $P^\beta$ is a function of $w^\beta$. 
\end{sloppypar}

\begin{figure}[t]
	\begin{Frame}[\textbf{Algorithm \ref{fig:sample}: \ Sample$(\beta,P^\beta,w^\beta,\varphi)$}]
		\label{fig:sample}
		\begin{enumerate}
			\item If $\beta = 0$, then with probability $\varphi$ return  $w^0$, otherwise return \fail. 
			\item Else, compute the set 
			$P^\alpha_b \ = \  \{p^{\alpha-1} \in Q^{\alpha-1} \mid$ there exists $r^\alpha \in P^\alpha$ such that $(p^{\alpha-1}, b, r^{\alpha})$ is an edge in $\nAun\}$ for every $b \in \{0,1\}$.
			\item  Choose a partition $b \in \{0,1\}$ with probability ${\displaystyle p_b = \frac{N(P^\beta_b)}{N(P^\beta_0) + N(P^\beta_1)}}$.
			\item Set $P^{\beta-1} = P^\beta_b$, and $w^{\beta-1} = b \cdot w^\beta$. 
			\item Return \textbf{Sample}$(\beta-1, P^{\beta-1} ,  w^{\beta-1}, \frac{\varphi}{p_b})$.
		\end{enumerate}
	\end{Frame}
\end{figure}

To get some intuition of Algorithm~\ref{fig:sample}, assume for the moment that we can compute each $p_b$ exactly, namely, $p_b=|\nL(P^\beta_b)|/|\nL(P^\beta)|$. Now the probability of choosing a given element $x \in \nL(q^\alpha)$ can be computed as follows. Ignoring for a moment the possibility of returning \fail, we have that $w^0$ is the string returned by \textbf{Sample}$(\alpha, \{q^\alpha\},\, \nemptyword,\, \varphi_0)$. Thus, \review{the probability we choose} $x$ is:
\begin{eqnarray*}
	\pr(w^0 = x)  \ =  \ \frac{|\nL(P^{\alpha-1})|}{|\nL(P^{\alpha})|} \cdot \frac{|\nL(P^{\alpha-2})|}{|\nL(P^{\alpha-1})|} \cdot \frac{|\nL(P^{\alpha-3})|}{|\nL(P^{\alpha-2})|} \cdot \cdots \cdot \frac{|\nL(P^{1})|}{|\nL(P^{2})|} \cdot \frac{1}{|\nL(P^{1})|}
	\ =  \ \frac{1}{|\nL(P^{\alpha})|}.
\end{eqnarray*}
Now at the point of return, we also have that $\varphi =  \varphi_0/\pr(w^0 = x)$. Thus, if $\varphi_0/\pr(w^0 = x) \leq 1$, then the probability that $x$ is output is simply $\varphi_0$. The following is then easily seen: 
\begin{fact}\label{fact-sample-exact}
Assume that each probability $p_b$ in Algorithm \ref{fig:sample} satisfies that
\begin{align*}
p_b \ = \ \frac{|\nL(P^\beta_b)|}{|\nL(P^\beta)|}.	
\end{align*}
If $0 < \varphi_0 \leq \frac{1}{|\nL(P^{\alpha})|}$ and $w^0 \neq \fail$ is the output of Algorithm \ref{fig:sample}, then for every $x \in \nL(P^{\alpha})$, it holds that
	$\pr(w^0 = x) = \varphi_0$.
	Moreover, the algorithm outputs $w^0 = \fail$ with probability \mbox{$1 - |\nL(P^{\alpha})|\varphi_0$}.  
\end{fact}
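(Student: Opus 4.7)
The plan is to prove both claims through a single direct calculation by unrolling the recursion in Algorithm~\ref{fig:sample}. Fix any target $x = x_1 x_2 \cdots x_\alpha \in \nL(P^{\alpha})$. I would first argue that for the algorithm to output $x$, the sequence of choices made across the recursive calls must be exactly $b_\beta = x_\beta$ for $\beta = \alpha, \alpha-1, \ldots, 1$, because the chosen bit is prepended to $w^\beta$ at every step. So each string in $\nL(P^\alpha)$ corresponds to a unique computation path of the algorithm.

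Next, using the assumption that $p_b = |\nL(P^\beta_b)|/|\nL(P^\beta)|$ and the independence of the choices at different levels, the probability that this particular computation path is taken equals
\[
\prod_{\beta=1}^{\alpha} \frac{|\nL(P^{\beta-1})|}{|\nL(P^\beta)|},
\]
which telescopes to $|\nL(P^0)|/|\nL(P^\alpha)|$. The pruning of $\nAun$ guarantees that every remaining level-$0$ state lies in $\nI^0$, and, since $P^0$ is obtained by following back the choices $x_1,\ldots,x_\alpha$ from $q^\alpha$, the existence of $x \in \nL(P^\alpha)$ forces $P^0 \cap \nI^0 \neq \emptyset$; hence $\nL(P^0) = \{\lambda\}$ and the product equals $1/|\nL(P^\alpha)|$. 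Meanwhile, the accumulator $\varphi$ is divided by $p_{x_\beta}$ at each level, so its final value at step $\beta = 0$ is $\varphi_0 \cdot \prod_{\beta=1}^{\alpha}(1/p_{x_\beta}) = \varphi_0 \cdot |\nL(P^\alpha)|$. The hypothesis $\varphi_0 \leq 1/|\nL(P^\alpha)|$ ensures this quantity is at most $1$, so step~1 of the algorithm accepts with precisely this probability. Multiplying together, I obtain
\[
\pr(w^0 = x) \ = \ \frac{1}{|\nL(P^\alpha)|} \cdot \varphi_0 \cdot |\nL(P^\alpha)| \ = \ \varphi_0,
\]
which yields the first claim. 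The second claim then follows by the law of total probability: summing $\pr(w^0 = x) = \varphi_0$ over the $|\nL(P^\alpha)|$ elements of $\nL(P^\alpha)$ gives $|\nL(P^\alpha)|\varphi_0$, so the remaining probability mass $1 - |\nL(P^\alpha)|\varphi_0$ lies on the output $\fail$.

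The proof is essentially a bookkeeping exercise, and the only subtle point I anticipate is justifying that $|\nL(P^0)| = 1$ along any computation path corresponding to a genuine element of $\nL(P^\alpha)$; this is where the pruning assumption plays its role, ensuring no level-$0$ state outside $\nI^0$ survives in $\nAun$. Everything else is a routine telescoping of products and a clean application of the law of total probability, with no probabilistic concentration or approximation arguments needed, since the fact concerns the idealized algorithm in which the probabilities $p_b$ are computed exactly.
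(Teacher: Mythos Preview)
Your proposal is correct and follows essentially the same approach as the paper: unroll the recursion, telescope the product of exact probabilities to $1/|\nL(P^\alpha)|$, observe that the final $\varphi$ equals $\varphi_0 \cdot |\nL(P^\alpha)| \leq 1$, and multiply to obtain $\pr(w^0 = x) = \varphi_0$, with the $\fail$ probability following by complementation. Your explicit justification that $|\nL(P^0)| = 1$ via the pruning assumption is a detail the paper leaves implicit (it simply writes the final factor of the telescoping product as $1/|\nL(P^1)|$), so in that respect your write-up is slightly more careful.
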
	
This shows that, conditioned on not failing, the above is a uniform sampler. 
Repeating the procedure  $\ell \cdot (|\nL(P^{\alpha})|\varphi_0)^{-1}$ times, we get a sample with probability $1 - e^{-\ell}$ since:
\begin{eqnarray*}
	(1 - |\nL(P^{\alpha})|\varphi_0)^{\ell \cdot (|\nL(P^{\alpha})|\varphi_0)^{-1}} & \leq & (e^{-|\nL(P^{\alpha})|\varphi_0})^{\ell \cdot (|\nL(P^{\alpha})|\varphi_0)^{-1}} \ \ \ \ \ \ \ \  \ \ \ \  \text{\review{(by using $(1 - x)  \leq e^{-x}$)}}\\
	& = & e^{-|\nL(P^{\alpha})|\varphi_0 \cdot \ell \cdot (|\nL(P^{\alpha})|\varphi_0)^{-1}} \ =  \ e^{-\ell}.
\end{eqnarray*}
However, Fact \ref{fact-sample-exact} was obtained under the strong assumption that each probability $p_b$ can be computed exactly. Hence, in what follows we focus on showing that 
with high probability the same result holds if we approximate $p_b$ with $N(P^\beta_b)/(N(P^\beta_0) + N(P^\beta_1))$ (instead of assuming that~$p_b = |\nL(P^\beta_b)|/|\nL(P^\beta)|$).

\review{
\begin{proposition}\label{prop:unifsamplecondition}
\begin{sloppypar}
	Assume that condition $\mathcal{E}(\beta)$ holds for every layer $\beta < \alpha$. 
 If $w \neq \fail$ is the output of \mbox{\rm \textbf{Sample}}$(\alpha, \{q^\alpha\}, \nemptyword, \frac{e^{-5}}{N(q^\alpha)})$,  then for every $x \in \nL(q^\alpha)$:
	\begin{eqnarray*}
		\pr(w = x) & = & \frac{e^{-5}}{N(q^\alpha)}.
	\end{eqnarray*}
	Moreover, the algorithm outputs $\fail$ with probability at most $1 - e^{-9}$. Thus, conditioned on not failing,  {\rm \textbf{Sample}}$(\alpha, \{q^\alpha\}, \nemptyword, \frac{e^{-5}}{N(q^\alpha)})$ returns a uniform element  $x \in \nL(q^\alpha)$.
\end{sloppypar}	
\end{proposition}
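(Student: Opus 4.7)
The plan is to show that, just as in Fact~\ref{fact-sample-exact}, the update rule $P^{\beta-1} = P^\beta_{b_\beta}$ inside \textbf{Sample} induces a telescoping that produces a uniform distribution, even though each branching probability $p_b = N(P^\beta_b)/N(P^\beta)$ is only approximate. Fix any $x = c_1 c_2 \cdots c_\alpha \in \nL(q^\alpha)$. There is a unique sequence of set-choices $P^\alpha = \{q^\alpha\},\, P^{\alpha-1},\, \ldots,\, P^0$ along which the recursive calls build $w^0 = x$: at the layer-$\beta$ call the bit chosen must be $b = c_\beta$. Using $P^{\beta-1} = P^\beta_{c_\beta}$, the probability of reaching the base case along this trajectory telescopes as
\begin{align*}
\prod_{\beta=1}^{\alpha} \frac{N(P^\beta_{c_\beta})}{N(P^\beta)} \ = \ \prod_{\beta=1}^{\alpha} \frac{N(P^{\beta-1})}{N(P^\beta)} \ = \ \frac{N(P^0)}{N(q^\alpha)},
\end{align*}
and the accumulated acceptance weight reaching step~1 is $\varphi_{\text{final}} = \varphi_0 \cdot N(q^\alpha)/N(P^0)$.

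\textbf{The key step: $N(P^0) = 1$ exactly.} Because $\nAun$ is pruned, every surviving vertex at layer $0$ is an initial state, so $\nL(p^0) = \{\nemptyword\}$ for each such $p$. Property (C1) at layer $0$ gives $N(p^0) = |\nL(p^0)| = 1$ exactly (since $(1\pm k^{-2})^0 = 1$), and every multiset $S(p^0)$ consists only of copies of $\nemptyword$. Plugging these observations into equation~\eqref{eq:remove-intersection}, the first $p$ in the order $\prec$ contributes intersection rate $1$, while every later $p \in P^0$ contributes $0$, because $\nemptyword \in \nL(q^0)$ for some earlier $q \prec p$. Hence $N(P^0) = 1$ exactly. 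Pruning together with the fact that $b$ is drawn with probability proportional to $N(P^\beta_b)$ (so $b$ is picked only when $P^\beta_b \neq \emptyset$) ensures that $P^0 \neq \emptyset$ along any realized trajectory.

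\textbf{Conclusion and failure bound.} Setting $\varphi_0 = e^{-5}/N(q^\alpha)$ gives $\varphi_{\text{final}} = e^{-5} \in (0,1]$, so step~1 of \textbf{Sample} accepts with probability exactly $e^{-5}$. Therefore, for every $x \in \nL(q^\alpha)$,
\begin{align*}
\pr(w = x) \ = \ \frac{1}{N(q^\alpha)} \cdot e^{-5} \ = \ \frac{e^{-5}}{N(q^\alpha)},
\end{align*}
which is independent of $x$, establishing uniformity. Summing over $x$, the success probability is $|\nL(q^\alpha)| \cdot e^{-5}/N(q^\alpha)$. By (C1), $N(q^\alpha) \leq (1+k^{-2})^\alpha |\nL(q^\alpha)| \leq e^{n/k^2}|\nL(q^\alpha)|$, and since $k \geq nm/\eps$ with $n,m \geq 2$ and $\eps \in (0,1)$, we have $n/k^2 \leq 1/(nm^2) \leq 1/8$. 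Thus the success probability is at least $e^{-5 - 1/8} > e^{-9}$, and the failure probability is at most $1 - e^{-9}$.

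\textbf{Main obstacle.} The delicate point is pinning down that $N(P^0) = 1$ \emph{exactly}, rather than only approximately. Any nonzero error at layer $0$ would make $\varphi_{\text{final}}$ depend on the trajectory $x$ through $N(P^0_x)$, thereby destroying uniformity; the telescoping works precisely because all intermediate approximation error is absorbed into the single $x$-independent factor $N(q^\alpha)$, while the boundary value $N(P^0)$ is computed exactly thanks to the combination of pruning, the triviality of layer-$0$ languages, and the exactness of (C1) at level $0$.
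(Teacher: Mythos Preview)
Your telescoping argument has a genuine gap. You write
\[
\prod_{\beta=1}^{\alpha} \frac{N(P^\beta_{c_\beta})}{N(P^\beta)} \ = \ \prod_{\beta=1}^{\alpha} \frac{N(P^{\beta-1})}{N(P^\beta)} \ = \ \frac{N(P^0)}{N(q^\alpha)},
\]
but for this to telescope you need the numerator at step $\beta$ to equal the denominator at step $\beta-1$. Concretely, $N(P^\beta_{c_\beta})$ is computed via equation~\eqref{eq:remove-intersection} using the sketches $N(p^{\beta-1}),S(p^{\beta-1})$ at layer $\beta-1$, whereas the denominator $N(P^{\beta-1})$ used in the next recursive call is $N(P^{\beta-1}_0)+N(P^{\beta-1}_1)$ (this is what makes $p_0+p_1=1$), and each summand is computed via~\eqref{eq:remove-intersection} from the sketches at layer $\beta-2$. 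Although $P^{\beta-1}=P^\beta_{c_\beta}$ as \emph{sets}, these two numbers are computed from different data and are in general not equal. So the product does not telescope, $\varphi_{\text{final}}$ is \emph{not} equal to $e^{-5}$ independently of $x$, and your argument that ``$N(P^0)=1$ exactly'' --- while correct in itself --- does not salvage it.

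This matters for two reasons. First, you never verify that $\varphi_{\text{final}}\le 1$, which is required for step~1 of \textbf{Sample} to be a well-defined rejection; in the paper this is established by bounding each factor via Proposition~\ref{prop:estimate-sets} (using $\mathcal{E}(\beta)$ and (C1)) and telescoping at the level of the \emph{true} sizes $|\nL(\cdot)|$, where the identity $|\nL(P^\beta_0)|+|\nL(P^\beta_1)|=|\nL(P^\beta)|$ does hold exactly. Second, the reason uniformity holds is not that $\varphi_{\text{final}}$ is constant but the rejection-sampling cancellation built into the algorithm: by construction $\varphi_{\text{final}}=\varphi_0/\prod_\beta p_{c_\beta}$, so $\pr(w=x)=\varphi_{\text{final}}\cdot\prod_\beta p_{c_\beta}=\varphi_0=e^{-5}/N(q^\alpha)$, regardless of the trajectory. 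Once this is in place, your summation argument for the failure bound (success $\ge |\nL(q^\alpha)|\cdot e^{-5}/N(q^\alpha)\ge e^{-5-1/8}$) is fine and in fact slightly sharper than the paper's $e^{-9}$ --- but it only becomes valid after you have shown $\varphi_{\text{final}}\in(0,1]$ via the approximation bounds.
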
}
\begin{proof}
	First, we show that every recursive call to \textbf{Sample} satisfies that $\varphi \in (0,1)$. Since $\phi_0 = \frac{e^{-5}}{N(q^\alpha)} > 0$ and with each call $\phi$ does not decrease (because it is divided by a probability), we know that $\phi>0$ at each subsequent call. It remains to show that $\phi<1$ for every recursive call to the \textbf{Sample} procedure. Since $\varphi$ does not decrease after each recursive call, it suffices to show this for the final value of $\varphi$. Notice that at the call $\beta$, with $\beta$ from $n$ to $1$, we have that $\phi$ is divided by a~factor 
	\begin{eqnarray*}
        \frac{N(P^{\beta}_{w[\beta]})}{N(P^{\beta}_0)+N(P^{\beta}_1)},
	\end{eqnarray*}
where $w[\beta]$ is the $\beta$-th letter of $w$. So in the final call, $\varphi$ has the value:
	\begin{eqnarray*}
		\varphi &= & \bigg(\prod_{\beta=1}^{\alpha} 
		\frac{N(P^{\beta}_{w[\beta]})}{N(P^{\beta}_0)+N(P^{\beta}_1)} \bigg)^{-1}
		 \cdot
		\varphi_0 \\ 
		&  = &  \bigg( \prod_{\beta=1}^{\alpha} 
		\frac{N(P^{\beta}_0) + N(P^{\beta}_1)}{N(P^{\beta}_{w[\beta]})} \bigg)
		\cdot
		\frac{e^{-5}}{N(q^\alpha)}
	\end{eqnarray*}
	Given that condition $\mathcal{E}(\beta)$ holds for every layer $\beta < \alpha$, by Proposition~\ref{prop:estimate-sets} we know that $N(P^{\beta}_b)$ is a $(1\pm \kc^{-2})^{\beta}$-approximation of $|\nL(P^{\beta}_b)|$ for all $b \in \{0,1\}$ and $\beta \in [1, \alpha]$ (recall that $P^{\beta}_b$ is a subset of states at layer $\beta-1$).
	It follows that at the final recursive call to \textbf{Sample}, we have that: %
	\begin{align*}
	\varphi \ \ = \ \ &\bigg( \prod_{\beta=1}^{\alpha} 
	\frac{N(P^{\beta}_0) + N(P^{\beta}_1)}{N(P^{\beta}_{w[\beta]})} \bigg)
	\cdot
	\frac{e^{-5}}{N(q^\alpha)} \\
	\leq \ \ &\bigg( \prod_{\beta=1}^{\alpha} 
	\frac{(1+\kc^{-2})^\beta\cdot(|\nL(P^{\beta}_0)| + |\nL(P^{\beta}_1)|)}{(1-\kc^{-2})^\beta\cdot|\nL(P^{\beta}_{w[\beta]})|} \bigg)
	\cdot
	\frac{e^{-5}}{N(q^\alpha)}  \\
	= \ \ &
	\bigg( \prod_{\beta=1}^{\alpha} 
	\frac{(1+\kc^{-2})^\beta}{(1-\kc^{-2})^\beta} \bigg) \cdot
	\bigg( \prod_{\beta=1}^{\alpha} 
	\frac{(|\nL(P^{\beta}_0)| + |\nL(P^{\beta}_1)|)}{|\nL(P^{\beta}_{w[\beta]})|} \bigg)
	\cdot
	\frac{e^{-5}}{N(q^\alpha)}
	\end{align*}		
	Recall that $|\nL(P^{\beta}_0)| + |\nL(P^{\beta}_1)| =  |\nL(P^{\beta+1}_{w[\beta+1]})|$ for every $\beta \in [1,\alpha-1]$. Also note that $|\nL(P^1_{w[1]})| = 1$, since $\nL(P^1)$ is the set of strings $x \in \{0,1\}$ such that $x \cdot w^1 \in \nL({q^\alpha})$, and $\nL(P^1_{w[1]})$ is the subset of $\nL(P^1)$ with the last bit equal to $w[1]$ (of which there is just one).	Thus,  given that $|\nL(q^\alpha)| = |\nL(P^\alpha)| = |\nL(P^\alpha_0)| + |\nL(P^\alpha_1)|$, we have that:
	\begin{eqnarray*}
	\prod_{\beta=1}^{\alpha} 
	\frac{(|\nL(P^{\beta}_0)| + |\nL(P^{\beta}_1)|)}{|\nL(P^{\beta}_{w[\beta]})|} \ = \ |\nL(q^\alpha)|, 
	\end{eqnarray*}
	and so 
	\begin{eqnarray*}
		\varphi & \leq & \bigg( \prod_{\beta=1}^{\alpha} 
		\frac{(1+\kc^{-2})^\beta}{(1-\kc^{-2})^\beta} \bigg) \cdot
		|\nL(P^\alpha)|
		\cdot
		\frac{e^{-5}}{N(q^\alpha)} \\
		& = & \bigg(\frac{1+\kc^{-2}}{1-\kc^{-2}}\bigg)^{\frac{\alpha (\alpha + 1)}{2}} \cdot |\nL(q^\alpha)| \cdot
		\frac{e^{-5}}{N(q^\alpha)} \\
		& \leq & \bigg(\frac{1+\kc^{-2}}{1-\kc^{-2}}\bigg)^{\alpha^2} \cdot |\nL(q^\alpha)| \cdot
		\frac{e^{-5}}{N(q^\alpha)}\\
			& \leq & \bigg(\frac{1+\kc^{-2}}{1-\kc^{-2}}\bigg)^{n^2} \cdot |\nL(q^\alpha)| \cdot
		\frac{e^{-5}}{N(q^\alpha)}
	\end{eqnarray*}
\review{
	Furthermore, 
	$N(q^\alpha)$ is a $(1 \pm \kc^{-2})^\alpha$-approximation of $|\nL(q^\alpha)|$ by Proposition~\ref{prop:estimate-sets2}. Then we know that $N(q^\alpha) \geq (1 - \kc^{-2})^\alpha |\nL(q^\alpha)| \geq  (1 - \kc^{-2})^{n^2} |\nL(q^\alpha)|$ and, therefore,}
	\begin{eqnarray*}
		\varphi & \leq & \bigg(\frac{1+\kc^{-2}}{1-\kc^{-2}}\bigg)^{n^2} \cdot |\nL(q^\alpha)| \cdot
		\frac{e^{-5}}{N(q^\alpha)} \\
		& \leq & \bigg(\frac{1+\kc^{-2}}{1-\kc^{-2}}\bigg)^{n^2} \cdot |\nL(q^\alpha)| \cdot
		\frac{e^{-5}}{(1 - \kc^{-2})^{n^2} |\nL(q^\alpha)|} \\
		& = & \frac{(1+\kc^{-2})^{n^2}}{(1-\kc^{-2})^{n^2} \cdot (1-\kc^{-2})^{n^2} } \cdot e^{-5} \ < \ \frac{e}{e^{-2}\cdot e^{-2}} \cdot e^{-5} \ = \ 1
	\end{eqnarray*}
	where the last inequality holds because $\kc = \lceil\frac{nm}{\neps}\rceil \geq n \geq 2$, $(1+\ell^{-1})^\ell < e$ and $(1-\ell^{-1})^\ell \geq e^{-2}$ for every $\ell \geq 2$.
	Hence, we know that under the assumptions stated for this proposition, on each call and, in particular, on  the last call, we have that $\phi \leq 1$. 	
	
	As a second step in the proof of the proposition, we show that the algorithm outputs $\fail$ with probability at most $1 - e^{-9}$. Notice that 
	 this probability 
	 is only due to Step~(1) in Algorithm \ref{fig:sample}. That is, the probability we output fail is at most $(1-\phi)$, where $\phi$ is as computed  in the previous part of the proof. Thus, to show that the failure probability is at most $1-e^{-9}$, we compute a lower bound for $\varphi$ in a similar way \review{as} we computed an upper bound for it:
	\begin{align*}
 	\varphi \ = \ \ &\bigg( \prod_{\beta=1}^{\alpha} 
	\frac{N(P^{\beta}_0) + N(P^{\beta}_1)}{N(P^{\beta}_{w[\beta]})} \bigg)
	\cdot
	\frac{e^{-5}}{N(q^\alpha)} \ \geq \\
	&\bigg( \prod_{\beta=1}^{\alpha} 
	\frac{(1-\kc^{-2})^\beta\cdot(|\nL(P^{\beta}_0)| + |\nL(P^{\beta}_1)|)}{(1+\kc^{-2})^\beta\cdot|\nL(P^{\beta}_{w[\beta]})|} \bigg)
	\cdot
	\frac{e^{-5}}{N(q^\alpha)}  \ = \\
         &\bigg( \prod_{\beta=1}^{\alpha} 
	\frac{(1-\kc^{-2})^\beta}{(1+\kc^{-2})^\beta} \bigg) \cdot
	\bigg( \prod_{\beta=1}^{\alpha} 
	\frac{(|\nL(P^{\beta}_0)| + |\nL(P^{\beta}_1)|)}{|\nL(P^{\beta}_{w[\beta]})|} \bigg)
	\cdot
	\frac{e^{-5}}{N(q^\alpha)} \ \geq\\
	& 
	 \bigg(\frac{1-\kc^{-2}}{1+\kc^{-2}}\bigg)^{n^2} \cdot |\nL(q^\alpha)| \cdot
	\frac{e^{-5}}{(1+\kc^{-2})^{n^2} \cdot |\nL(q^\alpha)|} \ = \\ 
	&\frac{(1-\kc^{-2})^{n^2}}{(1+\kc^{-2})^{n^2} \cdot (1+\kc^{-2})^{n^2} } \cdot e^{-5} \ \geq  \ \frac{e^{-2}}{e\cdot e} \cdot e^{-5} \ = \ e^{-9} 
	\end{align*}
	\review{
	Note that in the fourth step we use the fact that 
	$N(q^\alpha)$ is a $(1 \pm \kc^{-2})^\alpha$-approximation of $|\nL(q^\alpha)|$ by 
	Proposition~\ref{prop:estimate-sets2} (indeed, implied by the assumption that $\mathcal{E}(\beta)$ holds for each $\beta < \alpha$), so that 
	$N(q^\alpha) \leq (1 + \kc^{-2})^\alpha |\nL(q^\alpha)| \leq  (1 + \kc^{-2})^{n^2} |\nL(q^\alpha)|$.
	}
	
	As the final step of the proof, we need to show that if the output of the algorithm is $w \neq \fail$, then 
	$\pr(w = x)  = e^{-5}/N(q^\alpha)$ for every $x\in \nL(q^\alpha)$.
	Now, the probability of 
	$w$ being a particular $x\in \nL(q^\alpha)$ is given by the following expression:
	\begin{eqnarray*}
	\pr(w = x) & = & \pr(w^0 = x \wedge \text{ the last call to \textbf{Sample} does not fail}) \\
	& = & \pr(\text{last call to \textbf{Sample} does not fail} \mid w^0 = x)\cdot \pr(w^0 = x) \\
	&= & \bigg( \bigg(\prod_{\beta=1}^{\alpha} 
	\frac{N(P^{\beta}_{w[\beta]})}{N(P^{\beta}_0)+N(P^{\beta}_1)} \bigg)^{-1}  \cdot\frac{e^{-5}}{N(q^\alpha)}\bigg)
	\cdot
	\bigg(\prod_{\beta=1}^{\alpha} 
	\frac{N(P^{\beta}_{w[\beta]})}{N(P^{\beta}_0)+N(P^{\beta}_1)} \bigg)\\
	&= & \frac{e^{-5}}{N({q^\alpha})},
	\end{eqnarray*}
	as desired. 
	This concludes the proof of the proposition.
\end{proof}
\review{
We would like to remark that, in order for Proposition~\ref{prop:unifsamplecondition} to be correct, we need that $\mathcal{E}(\beta)$ holds for every layer $\beta < \alpha$. Indeed, the sampling procedure uses values $N(P^\beta_b)/(N(P^\beta_0)+N(P^\beta_1))$ for approximating the real probabilities $|\nL(P^\beta_b)|/|\nL(P^\beta)|$. For this, we need that each value $N(P^\beta)$ is a good estimate for $|\nL(P^\beta)|$ and this is implied by Proposition~\ref{prop:estimate-sets2} if $\mathcal{E}(\beta)$ holds for every layer $\beta < \alpha$. In the next section, we prove that indeed this happen with exponentially high probability.
}

\subsection{Bounding the probability of breaking the main assumption} \label{subsec:bounding}

\review{
As it was previously discussed, the computation of the sketch composed by the estimates $N(q^\alpha)$ and sets $S(q^\alpha)$ is subject that conditions $\mathcal{E}(\alpha)$ hold for all layers $\alpha \leq n$. Therefore, this section is aimed to bound the probability that $\mathcal{E}(\alpha)$ is false for some layer $\alpha$ and show that, indeed, this probability is exponentially low. 

First, assume that we are back to a layer $\alpha$, condition $\mathcal{E}(\beta)$ holds for all layers $\beta < \alpha$, and we want to check the probability that $\mathcal{E}(\alpha)$ holds for $\alpha$. 
In other words, we want to bound $\pr(\mathcal{E}(\alpha) \mid  \bigwedge_{\beta=0}^{\alpha-1} \mathcal{E}(\beta)  )$,
for which we need
Hoeffding's inequality.	
\begin{proposition}[Hoeffding's inequality \cite{hoeffding1963probability}]
	Let $X_1,\dots,X_t$ be independent random variables bounded by the interval $[0,1]$ such that $\ex{X_i} = \mu$. Then for every $\ndelta >0$, it holds that
	\begin{eqnarray*}
		\pr\bigg(\bigg|\frac{1}{t}\sum_{i=1}^t X_i -  \mu\bigg| \geq \ndelta \bigg) & \leq & 2e^{-2 t \ndelta ^2}.
	\end{eqnarray*}
\end{proposition}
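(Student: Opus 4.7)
The plan is to follow the standard Chernoff--Hoeffding approach based on moment generating functions. Let $S = \sum_{i=1}^t X_i$ and $Y_i = X_i - \mu$, so the $Y_i$ are independent, mean-zero, and bounded in $[-\mu, 1-\mu] \subseteq [a_i,b_i]$ with $b_i - a_i = 1$. First I would prove the one-sided bound $\pr(S - t\mu \geq t\ndelta) \leq e^{-2t\ndelta^2}$; the matching lower-tail bound follows by applying the same argument to $-Y_i$, and then a union bound over the two events yields the factor of $2$ in the statement.

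To prove the one-sided bound, the plan is the Chernoff trick: for any $s > 0$, Markov's inequality applied to the nonnegative random variable $e^{s(S - t\mu)}$ gives
\begin{equation*}
\pr\bigl(S - t\mu \geq t\ndelta\bigr) \ = \ \pr\bigl(e^{s(S-t\mu)} \geq e^{st\ndelta}\bigr) \ \leq \ e^{-st\ndelta}\, \esper[e^{s \sum_i Y_i}].
\end{equation*}
Independence of the $X_i$ (hence of the $Y_i$) lets me factor $\esper[e^{s\sum_i Y_i}] = \prod_{i=1}^t \esper[e^{s Y_i}]$. The key lemma to invoke here is Hoeffding's lemma: for any zero-mean random variable $Y$ supported in an interval of length $\ell$, $\esper[e^{sY}] \leq e^{s^2 \ell^2 / 8}$. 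Applying this with $\ell = 1$ to each $Y_i$, the bound becomes $e^{-st\ndelta} \cdot e^{t s^2/8}$. Optimizing the exponent $-st\ndelta + ts^2/8$ in $s$ gives $s = 4\ndelta$, yielding $e^{-2 t \ndelta^2}$, as desired.

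The main obstacle is Hoeffding's lemma itself. To prove it, I would fix $Y \in [a,b]$ with $\esper[Y]=0$ and use convexity of $x \mapsto e^{sx}$ on $[a,b]$ to write $e^{sY} \leq \tfrac{b-Y}{b-a} e^{sa} + \tfrac{Y-a}{b-a} e^{sb}$. Taking expectations and using $\esper[Y]=0$ gives $\esper[e^{sY}] \leq \tfrac{b}{b-a} e^{sa} - \tfrac{a}{b-a} e^{sb}$. Setting $p = -a/(b-a)$ and $u = s(b-a)$, the right-hand side can be written as $e^{\phi(u)}$ where $\phi(u) = -pu + \log(1 - p + p e^u)$. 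A direct computation shows $\phi(0) = \phi'(0) = 0$ and $\phi''(u) \leq 1/4$ for all $u$ (using the inequality $x(1-x) \leq 1/4$ on the probability arising in $\phi''$), so by Taylor's theorem $\phi(u) \leq u^2/8 = s^2(b-a)^2/8$. This is the only nontrivial analytic step; everything else is either Markov's inequality, independence, or an optimization in one variable.
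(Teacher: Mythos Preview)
Your proof is correct and is the standard Chernoff--Hoeffding argument. The paper does not actually prove this proposition: it is stated as the ``well-known Hoeffding's inequality'' with a citation to \cite{hoeffding1963probability} and then used as a black box, so there is no paper proof to compare against.
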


For the first layer $\alpha = 0$, the condition $\mathcal{E}(0)$ certainly holds. 
Now if we are at any layer $\alpha$ and $\bigwedge_{\beta=0}^{\alpha-1} \mathcal{E}(\beta)$ holds, then we know by Proposition~\ref{prop:unifsamplecondition} that for each $q \in Q$, it is possible to fill $S(q^\alpha)$ with $2\kc^7$ uniform samples of $\nL(q^\alpha)$. Consider the case of any subset $P \subseteq Q$, and let $S(q^\alpha) = \{w_1, \ldots, w_t\}$ be the uniform sample of $\nL(q^\alpha)$ of size $t = 2\kc^7$. For each $w_i$, consider the random variable $X_i$ such that $X_i = 1$ if $w_i \in (\cL(q^\alpha) \setminus \bigcup_{p \in P} \nL(p^\alpha))$, and $0$ otherwise. Then we have that:
\begin{eqnarray*}
\ex{X_i} & = & \frac{|\nL(q^\alpha) \setminus \bigcup_{p \in P} \nL(p^\alpha)|}{|\nL(q^\alpha)|} \\
\sum_{i=1}^t X_i & = & |S(q^\alpha) \setminus \bigcup_{p \in P} \nL(p^\alpha)|, \ \text{ and } \\
t & = & |S(q^\alpha)|.
\end{eqnarray*}
Therefore, by Hoeffding's inequality we infer that:
\begin{equation*}
\pr\bigg(
\bigg| 
\frac{|\nL(q^\alpha) \setminus \bigcup_{p \in P} \nL(p^\alpha)|}{|\nL(q^\alpha)|} 
-
\frac{|S(q^\alpha) \setminus \bigcup_{p \in P} \nL(p^\alpha)|}{|S(q^\alpha)|}
\bigg| \ \geq \ \frac{1}{\kc^{3}} \ \ \bigg|  \ \ \bigwedge_{\beta=0}^{\alpha-1} \mathcal{E}(\beta) \bigg) \ \leq\ 
2e^{-4\kc}
\end{equation*}
Note that in the previous inequality the condition $\bigwedge_{\beta=0}^{\alpha-1} \mathcal{E}(\beta)$ does not change the assumptions of Hoeffding's inequality.
We can bound $\pr( \neg \mathcal{E}(\alpha) \mid \bigwedge_{\beta=0}^{\alpha-1} \mathcal{E}(\beta) )$ by taking the union bound over all states $q$ and all possible subsets $P \subseteq Q$:
\begin{multline*}
\pr\bigg(
\exists{q \in Q} \  \ \exists P \subseteq Q \ \
\bigg| 
\frac{|\nL(q^\alpha) \setminus \bigcup_{p \in P} \nL(p^\alpha)|}{|\nL(q^\alpha)|} 
-
\frac{|S(q^\alpha) \setminus \bigcup_{p \in P} \nL(p^\alpha)|}{|S(q^\alpha)|}
\bigg| \ \geq \ \frac{1}{\kc^{3}}  \ \ \bigg|  \ \ \bigwedge_{\beta=0}^{\alpha-1} \mathcal{E}(\beta) \bigg) \ \leq\\
m2^m \cdot 2e^{-4\kc}\  \leq \ e^{2nm} \cdot e^{-4\kc} \ \leq \ e^{-2\kc}.
\end{multline*}	
We conclude that, at layer $\alpha$, the probability $\pr(\mathcal{E}(\alpha) \mid \bigwedge_{\beta=0}^{\alpha-1} \mathcal{E}(\beta) ) \geq 1 - e^{-2\kc}$.

To extend these chain of implications over all layers, we can use that:
\begin{eqnarray*}
	\pr(\mathcal{E}(0)  \wedge \cdots \wedge \mathcal{E}(n) ) & = & \prod_{\alpha=1}^n \pr(\mathcal{E}(\alpha) \mid  \bigwedge_{\beta=0}^{\alpha-1} \mathcal{E}(\beta) )\\
	& \geq &  \prod_{\alpha=1}^n (1 - e^{-2\kc}) \ = \  (1 - e^{-2\kc})^n.
\end{eqnarray*}
Moreover, we have that:
\begin{eqnarray*}
	(1 - e^{-2\kc})^n & = & 1 + \sum_{j=1}^n \binom{n}{j}(-1)^j e^{- 2\kc \cdot j}\\
	& \geq & 1 - \sum_{j=1}^n \binom{n}{j} e^{- 2\kc \cdot j}\\
	& \geq & 1 - e^{-2\kc} \cdot \sum_{j=1}^n \binom{n}{j}\\
	& \geq & 1 - e^{-2\kc} \cdot 2^n\\
	& \geq & 1 - e^{-2\kc} \cdot e^{\kc}  = 1-e^{-\kc}
\end{eqnarray*}
We conclude
by stating the main purpose of this section in the following proposition.  
\begin{proposition}\label{prop:bounding-prob}
The probability that $\mathcal{E}(\alpha)$ holds for all layers $\alpha \leq n$ is bounded below by $1-e^{-\kc}$.
\end{proposition}
}

\subsection{The main algorithm, its correctness and its complexity} \label{subsec:algo}



\begin{figure*}
	\begin{Frame}[\textbf{Algorithm \ref{fig:mainalgo}: FPRAS to estimate $|\nL_n(\nA)|$ for an NFA $\nA = (\nQ, \{0,1\}, \Delta, \nI, \nF)$ with $m \geq 2$ states, integer $n \geq 2$ given in unary and error $\neps \in (0,1)$}]
		\label{fig:mainalgo}
		\begin{enumerate}
			\item If $\nL_n(\nA) = \emptyset$, then return 0.
			\item Else, construct the directed acyclic graph $\nAun$ from $\nA$, and set $\kc = \lceil \frac{nm}{\neps} \rceil$.
			\item For each vertex $q^\alpha$ of $\nAun$, if there is no path from a vertex in $I^0$ to $q^\alpha$, then remove $q^\alpha$ from $\nAun$.
			\item \label{alg-main-fl} For each $q^0 \in I^0$, set $N(q^0) = 1$ and $S(q^0) = \{\lambda\}$. 
			\item For layers $\alpha = 1,2,\dots,n$ and for each vertex $q^\alpha$ in $\nAun$:
			\begin{enumerate}
				\item Let $R_b = \{p^{\alpha-1} \in Q^{\alpha-1} \mid (p^{\alpha-1}, b, q^{\alpha})$ is an edge in $\nAun\}$ for $b = 0,1$.
				\item Set $N(q^\alpha) = N(R_0) + N(R_1)$.
				
				\item Set $S(q^\alpha) = \emptyset$. Then while $|S(q^\alpha)| < 2\kc^7$:
				\begin{enumerate}
					\item Run \textbf{Sample}$(\alpha, \{q^\alpha\},\, \nemptyword,\, \frac{e^{-5}}{N(q^\alpha)})$ until it returns a string $w \neq \fail$, and at most $\Theta(\log(\kc))$ times
					\item If $w = \fail$, then terminate the algorithm and output $0$ as the estimate (failure event). 
					\item Otherwise, a sample $w \in \{0,1\}^\alpha$ was returned, and set $S(q^\alpha) = S(q^\alpha) \cup \{w\}$ (recall $S(q^\alpha)$ allows duplicates).
				\end{enumerate}
				
			\end{enumerate}
			\item Return $N(F^n)$ as an estimate for $|\mathcal{L}_n(\nA)|$.
		\end{enumerate}
	\end{Frame}
\end{figure*}

In Algorithm~\ref{fig:mainalgo}, we give all the steps of the FPRAS that has been discussed in the previous subsections. This algorithm follows the same structure of Algorithm \ref{fig:FPRAStemplate}, but now the computation of $N(q^\alpha)$ and $S(q^\alpha)$ is fully described. 
The algorithm proceeds as mentioned before, layer by layer, computing the estimates $N(q^\alpha)$ and the sample sets $S(q^\alpha)$. 
For each state $q^0$ at the initial layer $\alpha = 0$,  the pair $N(q^\alpha)$, $S(q^\alpha)$ is computed without considering any additional information (Step~(\ref{alg-main-fl}) of Algorithm \ref{fig:mainalgo}). 
Then for each layer $\alpha > 0$ and each vertex $q^\alpha$, we compute $N(q^\alpha) = N(R_0) + N(R_1)$, as was discussed in Section~\ref{subsec:sets}. After $N(q^\alpha)$ is computed, the set $S(q^\alpha)$ is filled with $2\kc^7$ uniform and independent samples. For this, \textbf{Sample}$(\alpha, \{q^\alpha\},\, \nemptyword,\, \frac{e^{-5}}{N(q^\alpha)})$ is run at most $\Theta(\log(\kc))$ times or until a string $w \neq \textbf{fail}$ is output.
If $w = \fail$, then the algorithm terminates and outputs $0$. Otherwise, the sample is added to $S(q^\alpha)$, and the computation continues. 
Finally, the algorithm computes and returns $N(F^n)$ as it was discussed in Section~\ref{subsec:sets}.

\review{
Given that the value $\kc$ is polynomial in $m$, $n$ and $\frac{1}{\neps}$, it is clear that Algorithm~\ref{fig:mainalgo} works in time polynomial in $m$, $n$ and $\frac{1}{\neps}$. 
Hence, it only remains to show that Algorithm~\ref{fig:mainalgo} is correct, namely, that $N(F^n)$ is a $(1\pm \neps)$-approximation of $|\mathcal{L}_n(\nA)|$ with probability greater than $\frac{3}{4}$. First, assume that the algorithm returns a good estimate, that is, condition
$\mathcal{E}(\alpha)$ holds for all layers $\alpha \leq n$ during the run of the algorithm and $w$ is never equal to $\fail$ after Step (i) of the algorithm. Then by Proposition~\ref{prop:final-estimate}, we obtain that  $N(F^n)$ is a $(1\pm \neps)$-approximation of $|\mathcal{L}_n(\nA)|$ as desired. 

To conclude the proof of the correctness of Algorithm~\ref{fig:mainalgo}, we need to 
bound the probability that the algorithm does not give a good estimate, namely, that either condition $\mathcal{E}(\alpha)$ is false for some layer $\alpha$ or the sampling algorithm fails 
$c(\kc)$ times at Step (i), where $c(\kc)$ is the number of repetitions performed in this step. 
Therefore, it remains to show that this probability of giving a wrong output is at most $\frac{1}{4}$ considering a value for $c(\kc) \in \Theta(\log(\kc))$.
Let $\mathcal{E}_{\textbf{fail}}(\alpha, q, j)$ be the event that 
the call \textbf{Sample}$(\alpha, \{q^\alpha\},\, \nemptyword,\, e^{-5}/N(q^\alpha))$
fails 
$c(\kc)$ consecutive times at layer $\alpha$, state $q^\alpha$, and the $j$-th sample of $S(q^\alpha)$, where $j \in [1, 2\kc^7]$. 
We know by Proposition~\ref{prop:unifsamplecondition} that the probability that \textbf{Sample} fails is 
at most $1-e^{-9}$ and, therefore, $\pr(\mathcal{E}_{\textbf{fail}}(\alpha, q, j)) \leq (1-e^{-9})^{c(\kc)}$. 
Furthermore, by Proposition~\ref{prop:bounding-prob} we already know that $\pr( \neg \mathcal{E}(0)   \vee  \cdots \vee \neg \mathcal{E}(n) ) \leq e^{-\kc}$.  
By the union bound, we conclude that the probability that the algorithm gives the wrong output is:
\begin{eqnarray*}
	\pr\Big(\neg  \bigvee_{\alpha=1}^n \bigvee_{q \in Q} \bigvee_{j=1}^{2\kc^7} \mathcal{E}_{\textbf{fail}}(\alpha, q, j) \vee \neg \mathcal{E}(0) \vee  \cdots \vee \neg \mathcal{E}(n)   \Big)
	 & \leq & \sum_{\alpha=1}^n \sum_{q \in Q} \sum_{j=1}^{2\kc^7} (1-e^{-9})^{c(\kc)} + e^{-\kc} \\
	 & \leq &  2 n m \kc^7 (1-e^{-9})^{c(\kc)} + e^{-2} \\
	 & \leq &  2 \kc^8 (1-e^{-9})^{c(\kc)} + e^{-2}.
 \end{eqnarray*}
Finally, if we take 
\begin{eqnarray*}
c(\kc) & = & \Big\lceil \frac{2+\log(4)+8\log(\kc)}{\log((1-e^{-9})^{-1})} \Big\rceil,
\end{eqnarray*} 
we obtain that $2 \kc^8 (1-e^{-9})^{c(\kc)} \leq \frac{1}{2} \cdot e^{-2}$. Therefore, we have that the probability that the algorithm returns a wrong estimate is at most $\frac{3}{2} \cdot e^{-2} < \frac{1}{4}$. As $c(\kc)$ is $\Theta(\log(\kc))$, this concludes the proof of the correctness of~Algorithm~\ref{fig:mainalgo} as stated in Theorem \ref{theo-snfa-fpras}.
}

\begin{sloppypar}
To finish this section, we need to prove Theorem \ref{theo-snfa-pplvug}, that is, we need to show that $\GEN(\nfa)$ admits a preprocessing polynomial-time Las Vegas uniform generator (PPLVUG).
Notice that our algorithm 
results in a sampler satisfying the conditions of Theorem \ref{theo-snfa-pplvug}.  
\review{
Specifically, given an NFA $\nA$ and a natural number $n$, Algorithm~\ref{fig:mainalgo} builds a structure $(\nAun, \{N(p^\alpha), S(p^\alpha)\}_{p \in Q,\alpha \leq n})$. Furthermore, conditioned on all the above events (i.e., $\mathcal{E}(0)$, $\ldots$, $\mathcal{E}(n)$, and $\mathcal{E}_{\textbf{fail}}$) we can use $(\nAun, \{N(p^\alpha), S(p^\alpha))\}_{p \in Q,\alpha \leq n})$ for getting a uniform sample from $\mathcal{L}_n(\nA)$ by using Algorithm~\ref{fig:sample}, and this algorithm fails with probability $1-e^{-9}$.
In other words, Algorithm~\ref{fig:mainalgo} and Algorithm~\ref{fig:sample} are the randomized algorithms $\mathcal{P}$ and $\mathcal{G}$, respectively, and  $(\nAun, \{N(p^\alpha), S(p^\alpha))\}_{p \in Q,\alpha \leq n})$ is the string $\cD$ that is good-for-generation (i.e., the advice) from the PPLVUG's definition.}
The probability of success of the above events can be amplified to $1-\delta$ for any $\delta > 0$ by scaling $\kc$ up by a factor of $\log(1/\delta)$.
The overall runtime is now polynomial in $(n,m,\log(1/\delta))$ as needed. Note that for the purposes of a uniform sampler, the parameter $\eps$ can be set to a constant, as it does not appear in the definition of the sampler in  Theorem \ref{theo-snfa-pplvug}. Now by Proposition \ref{prop:unifsamplecondition}, we can obtain truly uniform samples from the set $\mathcal{L}_n(\nA)$ \review{in time polynomial in} $(n,m,\log(1/\delta))$ time. The probability that $\fail$ is returned by the sampler is at most $1-e^{-9}$, which can be amplified to at most~$1/2$ \review{by repeating it a constant number of times} to satisfy the condition required in Theorem~\ref{theo-snfa-pplvug}, which completes the proof of this theorem. 
\end{sloppypar}

	\section{Concluding Remarks}
	\label{conclusions}

We consider this work as a first step towards the definition of
classes of problems with good properties in terms of enumeration,
counting and uniform generation of solutions. In this sense, there is
plenty of room for extensions and \review{improvements, and many problems need
to be studied further. First, for each one of the classes
$\rnl$ and $\rul$, we have identified a single problem that is
complete for it. An important question then is whether such classes
admit other natural and well-studied complete problems; for instance, we leave as an open problem whether $\dnf$ is complete for $\rnl$ under the notion of reduction introduced in Section~\ref{cde}. Second,}
the different components of the FPRAS for $\snfa$ were designed to
facilitate its proof of correctness. As such, we already know of some
optimizations that significantly reduce its runtime, and we plan
to develop more such optimizations so to make this FPRAS usable in
practice.
\review{Finally, an interesting area to explore is to extend the results of this article to the class of context-free languages.
In particular, it will be interesting to understand if relations based on context-free languages have good properties regarding enumeration, counting, and uniform generation. Here, it is natural to ask whether the problem $\#\text{CFG}$ (i.e., to count the number of words of a given length accepted by a context-free grammar) admits an FPRAS or not.}

	\bibliographystyle{ACM-Reference-Format}
	\bibliography{references}

	\appendix
	
	\section{Proofs of Intermediate Results}
	\label{app-lemmas}

\subsection{Proof of Lemma \ref{lemma_ufa_from_relation}}
\label{app-lemma_ufa_from_relation}
		Let $x$ be any element in $\review{\{0,1\}}^*$. Since $R$ is in $\rul$, we know there exists a $\ul$-transducer $M$ such that $W_R(x)=M(x)$. Without loss of generality, we can assume that $M$ has only one accepting state, so it can be written as a tuple $M=(Q,\Gamma,\B,\review{\{0,1\}},\delta,q_0,\{q_F\})$. If it has more than one accepting state, say a set $F$ of accepting states, we can define a new transducer $M'$ that is identical to $M$ with one difference. It has only one final state $q_F$ and whenever it reaches a state in $F$, it makes one last transition to $q_F$ and stops. It is clear that $M(x)=M'(x)$ so we do not lose any generality with this assumption.
		
		Let $n=|x|$, $f(n)$ be the function that bounds the number of cells in the work tape that can be used, and assume that $f(n)$ is $O(\log(n))$. Consider now an execution of $M$ on input $x$. Since the input tape never changes (its content is always $x$), we can completely characterize the configuration of the machine at any given moment as a tuple $(q,i,j,w)\in Q\times \{1,\ldots,n\}\times \{1,\ldots,f(n)\}\times \Gamma^{f(n)}$ where
		\begin{itemize}
			\item $q$ stores the state the machine is in.
			\item $i$ indicates the position of the head on the input tape.
			\item $j$ indicates the position of the head on the work tape.
			\item $w$ stores the contents of the work tape.
		\end{itemize}
		With the previous notation, the initial configuration of $M$ on input $x$ is represented by $c_I=(q_0,1,1,\B^{f(n)})$, that is, $M$ is in its initial state, the heads are at the first position of their respective tapes, and the work tape is empty (that is, it only contains the blank symbol $\B$). The accepting configuration is represented by a tuple of the form $c_F=(q_F,i_F,j_F,w_F)$. Notice that without loss of generality, \review{we can assume} the accepting configuration to be unique, by changing $M$ so that it runs for a little longer in order to reach it. If $C_x$ is the set of possible configuration tuples then we have that
		\begin{align*}
		|C_x| &\leq |Q|\cdot n\cdot f(n)\cdot |\Gamma|^{f(n)} \\
		&= |Q| \cdot n\cdot f(n)\cdot |\Gamma|^{O(\log(n))} \\
		&= |Q| \cdot n\cdot f(n)\cdot O(n^\ell), \text{ where } \ell \text{ is a constant}\\
		&= O(n^{\ell+1}\log(n)),
		\end{align*}
		which is polynomial in $|x|$. Recall the notation for NFAs introduced in Section \ref{sec-approximate-technique}. We now define the NFA $\nA_x = (C_x, \review{\{0,1\}}, \Delta_x, c_I, \{c_F\})$ where $C_x$, $c_I$ and $c_F$ are defined as above and the transition relation $\Delta_x$ is constructed in the following way:
		
		\begin{itemize}
			\item Let $c,d\in C_x$. Consider any possible run of $M$ on input $x$. Suppose there is a valid transition, during that run, that goes from $c$ to $d$ while outputting symbol $\gamma\in\Gamma$. Then, $(c,\gamma,d)$ is in $\Delta_x$.
			\item Let $c,d\in C_x$. Consider any possible run of $M$ on input $x$. Suppose there is a valid transition, during that run, that goes from $c$ to $d$ while making no output. Then, $(c,\epsilon,d)$ is in $\Delta_x$.
		\end{itemize}
		
		We already showed that $C_x$ has polynomial size in $|x|$, and it clearly can be constructed explicitly in polynomial time. The same is true for $\Delta_x$. Given a pair of configurations $c,d\in C_x$, it can be checked in polynomial time 
		whether there is a possible transition from $c$ to $d$ during an execution of $M$ on input $x$ (it suffices to check $\delta$, the transition relation for $M$). And there are just $|C_x|^2$ such pairs of configurations that we need to check, so the whole construction of $\nA_x$ can be done in polynomial time. It only rests to show that $W_R(x)=\mathcal{L}(\nA_x)$ and that $\nA_x$ is unambiguous.
		
		Let $y\in W_R(x)$. That means there is an accepting \review{run} of $M$ on input $x$ that yields $y$ as output. Equivalently, there is a sequence of configurations $\{c_k\}_{k=0}^m$ \review{and a sequence $\{w_k\}_{k=0}^m$ of symbols} such that:
		
		\begin{itemize}
			\item $c_0=c_I$.
			\item $c_m=c_F$.
			\item For each $k\in\{0,\ldots,m-1\}$, the transition from $c_k$ to $c_{k+1}$ is valid on input $x$ given the 	transition relation $\delta$ of $M$.
			\item For each $k\in\{0,\ldots,m-1\}$, we have that $w_k$ is equal to the symbol output when going from 			configuration $c_k$ to $c_{k+1}$ if a symbol was output. Otherwise, $w_k=\varepsilon$.
			\item \review{$y=w_0\cdot w_1\cdot \ldots \cdot w_m$}.
		\end{itemize}
		
		By definition, \review{this means} that $y$ is accepted by $\nA_x$. That is, $y\in \mathcal{L}(\nA_x)$ and so we can conclude that $W_R(x)\subseteq \mathcal{L}(\nA_x)$. Since all the previous implications are clearly \review{equivalences}, we can also conclude that $\mathcal{L}(\nA_x)\subseteq W_R(x)$. Hence $W_R(x) = \mathcal{L}(\nA_x)$ as needed. What the previous argument is saying is that every accepting run of $M$ that outputs a string $y$ has a unique corresponding accepting run of $\nA_x$ on input $y$. That implies that $\nA_x$ is unambiguous. Otherwise, there would be some $y\in \mathcal{L}(\nA_x)$ such that two different runs of $\nA_x$ accept $y$. But that would mean that there are two different runs of $M$ on input $x$ that output $y$, which cannot occur, since $M$ is a $\ul$-transducer.
		
		Finally, notice that $\nA_x$ is actually not an NFA (under the definition given in Section \ref{nlclass}), since we explicitly allowed for the possibility of $\varepsilon$-transitions. But recall that the $\varepsilon$-transitions of any NFA can be removed in polynomial time without changing the accepted language, which is a standard result from automata theory. This concludes the proof of the lemma.

\subsection{Proof of Proposition \ref{prop_self_reducibility}}
\label{app-prop_self_reducibility}
We focus on the case of $\nfa$ (it extends easily to $\unfa$). To show this result, we need to include a little more detail in our definition of $\nfa$, to consider some corner cases. First of all, we have to consider the cases where the string in unary is empty. That is, the case where $k=0$ in input $(N,0^k)$. This just amounts to the following: if the starting state is a final state, we   consider that the automaton does accept the empty string. So, if $k=0$, and $N$ is an NFA that has all the properties stated in the definition of $\nfa$, plus its starting state is the accepting state, then $((N,0^k), \varepsilon)\in\nfa$. Also, we need to consider the cases where $N$ does not have all the properties stated in the definition of $\nfa$. In those cases, we consider that $(N,0^k)$, for any $k$, does not \review{have any solutions}. Also, and this gets more technical, we consider that any input that has an invalid encoding does \review{not have any solutions} either. We will not be completely precise about which encoding should be used (although during the proof we will mention some important points regarding that). But we will ask that the \review{correctness} of the encoding can be checked in polynomial time (this is a mild requirement as any reasonable encoding will allow for it). And it is important to have in mind that for some technical concepts like \change{self-reducibility}, the encoding of the problem is critical.

We use the notion of self-reducibility stated in \cite{schmidt2009enumeration}, 
adapted to our situation, since \cite{schmidt2009enumeration} uses a slightly different framework to define an enumeration problem. We say a relation $R \subseteq \review{\{0,1\}}^*\times~\review{\{0,1\}}^*$ is self reducible if there exist polynomial-time computable functions $\psi : \review{\{0,1\}}^* \times \review{\{0,1\}}^* \to \review{\{0,1\}}^*$, $\sigma : \review{\{0,1\}}^* \to \N$ and $\ell: \review{\{0,1\}}^* \to \mathbb N$ such that for every $x,y,w\in\review{\{0,1\}}^*$:
\begin{enumerate}
	\itemsep0.5em
	\item if $(x,y)\in R$, then $|y|=\ell(x)$,
	\item if $\ell(x)=0$, it can be tested in polynomial time in $|x|$, whether the empty string \review{is a solution} for $x$.
	\item $\sigma(x) \in O(\log|x|)$,
	\item $\ell(x)>0$ if and only if $\sigma(x)>0$,
	\item $|\psi(x,w)| \leq |x|$,
	\item $\ell(\psi(x,w)) = \max \{\ell(x) - |w|, 0\}$, and
	\item $W_R(x) = \displaystyle \bigcup_{w\in\review{\{0,1\}}^{\sigma(x)}} \{\review{w\cdot y} \>|\> y\in W_R(\psi(x,w))\}$.
\end{enumerate}
\review{The last condition intuitively says that all solutions for a given input can be constructed from the solutions of (a polynomial number of) smaller instances.} It can be equivalently stated in the following way, which is how we will use it:
\begin{enumerate}
\setcounter{enumi}{7}
	\item if $y = y_1y_2\dots y_m$, it holds that $(x,y) \in R \mbox{ if and only if } (\psi(x,y_1\dots y_{\sigma(x)}), y_{\sigma(x)+1} \dots y_m) \in R$.
\end{enumerate}

As we already stated, the empty string \review{is a solution} only when the input is correctly encoded and the initial and final states of the automaton coincide. So condition (2) from \change{above} is satisfied regardless of our definition of $\ell$. We will focus from now on on the other six conditions. Let $\mathcal{N} = \{N \>|\> N \text{ is an NFA with a unique final state and no $\varepsilon$-transitions}\}$. Following the previous notation, we define the functions $\ell$, $\sigma$ and $\psi$ that characterize self-reducibility. The only interesting cases, of course, are those where the automaton in the input is in $\mathcal{N}$ (and the input is correctly encoded). In \review{all the others}, the input is not correct, so \review{the set of solutions} is empty, and we do not need to worry about self-reducibility. That said, we define
\begin{align*}
\ell((N,0^k)) &=
\begin{cases}
k & \text{if the input is correctly encoded and $N\in\mathcal{N}$} \\
0 & \text{in any other case}
\end{cases}
\\
\sigma((N,0^k)) &=
\begin{cases}
1 & \text{if the input is correctly encoded, $k>0$ and $N\in\mathcal{N}$} \\
0 & \text{in any other case}
\end{cases}
\end{align*}

Both functions are clearly computable in polynomial time. The definition of $\ell$ is just saying that on input $(N,0^k)$, \review{any solution} will have length $k$, which comes directly from the definition of $\nfa$. The definition of $\sigma$ indicates that, for any input, as long \review{as its solutions} have positive length, we can create another input that has \review{the same solutions}, but with the first character removed. Notice that with these definitions, conditions (3) and (4) for self-reducibility are trivially met. Condition (1) is also met, which is easy to see from the definitions of $\nfa$ and $\ell$. The only task left is to define $\psi$ and prove conditions (5), (6) and (8). We now proceed in that direction.

Let $N=(Q,\review{\{0,1\}},\delta,q_0,\{q_F\})$ be an automaton in $\mathcal{N}$. Notice we are making the assumption that $N$ has a unique final state, since it makes the idea clearer and the proof only has to be modified slightly for the general case. We will mention some points about the exact encoding soon (which is key for condition (5) to hold). But first, consider an input $x=(N,0^k)$ which is incorrectly encoded or where $N$ is not in $\mathcal{N}$. Then, it \review{has no solutions} and it is enough to set $\psi(x,w)=x$ for all $w \in \review{\{0,1\}}^*$ (which is clearly computable in polynomial time). In that case, notice that condition (5) is trivially true. Also, notice that since $N$ is not in $\mathcal{N}$ (or is encoded in an incorrect format), we have $\ell(x)=\sigma(x)=0$, so for any $w$ it holds that
\begin{equation*}
\ell(\psi(x,w)) = \ell(x) = 0 = \max \{-|w|, 0\} = \max \{\ell(x)-|w|, 0\}
\end{equation*}
so condition (6) is also true. And given that $\ell(x)=\sigma(x)=0$, condition (8) amounts to checking that for every $y\in\review{\{0,1\}}^*$, it holds that $(x,y) \in \nfa$ if and only if $(x,y) \in \nfa$, which is obviously true. Now, consider the case of an input $x=(N,0^k)$ that is correctly encoded and where $N$ is in $\mathcal{N}$. There are two main cases to consider.

First, the case where $k=0$. This case is also simple, because we can set $\psi(x,w)=x$ for all $w \in \review{\{0,1\}}^*$ (which is computable in polynomial time and \review{implies} that condition (5) is trivially true), and since $\ell(x)=\sigma(x)=0$, 
it is possible to prove as before that conditions (6) and (8) hold. Second, we need to consider the case where $k>0$. Then we have $\sigma(x)=1$, so $\psi(x,w)$ only needs to be defined when $w$ is a single symbol. Then, for both $w\in\review{\{0,1\}}$, we set $\psi((N,0^k),w)=(N', 0^{k-1})$, where $N'$ is defined as follows. Let $Q_w$ be the set
\begin{eqnarray*}
Q_w & = & \{q\in Q \mid (q_0,w,q)\in\delta\}.
\end{eqnarray*}
Thus, $Q_w$ is the set of states that can be reached (with one transition) from the initial state, by reading the symbol $w$. Now, we define $N'=(Q',\review{\{0,1\}},\delta',q_0',\{q_F'\})$ where $q_0'$ is a new state not contained in $Q$, and:
\begin{eqnarray*}
Q' &= &(Q \setminus Q_w) \cup \{q_0'\} \\
\delta' &=& \{(q,a,p) \mid (q,a,p)\in\delta \text{ and } q,p\in Q'\} \cup \{(q,a,q_0') \mid (q,a,p)\in\delta \text{ and } q\in Q', p\in Q_w\} \cup \\
&& \{(q_0',a,p) \mid (q,a,p)\in\delta \text{ and } q\in Q_w, p\in Q'\} \cup \{(q_0',a,q_0') \mid (q,a,p)\in\delta \text{ and } q,p\in Q_w\} \\
q_F' &= &
\begin{cases}
q_F  & \text{if } q_F\in Q' \\
q_0' & \text{if } q_F\not\in Q'
\end{cases}
\end{eqnarray*}

Notice that this construction takes only polynomial time. What we are doing, basically, is the following. Imagine $Q_w$ as a first ``layer" of states reachable from $q_0$ in one step. We want to merge all of $Q_w$ in a single new initial state $q_0'$, while ensuring that from $q_0'$ we can reach the same states as were previously reachable from $Q_w$. The definitions are a little complicated because we have to account for some special cases. For example, we would maybe want to remove $q_0$ (since now we have a new initial state) but there is the possibility that $q_0$ is part of the acceptance runs of some strings, and not only as an initial state. The same goes for the states in $Q_w$, and that is why we have many different cases to consider in the definition of $\delta'$. We have to make sure not to lose any accepting runs with the removal of $Q_w$.

\review{Now, we make some observations} about $N'$. To construct $Q'$, we are removing at least one state from $Q$\review{, but} we are adding at most one new state, $q_0'$. \review{This implies that} $|Q'|\leq |Q|$ (notation here indicates set cardinality). Similarly for the construction of $\delta'$. Notice that each transition we add to construct $\delta'$ (besides the ones that come directly from $\delta$) corresponds to a transition that already existed \review{and that involved} at least one state from $Q_w$. So, all in all, we have not really added any new transitions, just simulated the ones where states in $Q_w$ appeared. That means that $|\delta'|\leq |\delta|$. So, as a whole, $N'$ contains at most as many states and transitions as $N$, and maybe less. Does that mean that (notation here indicates encoding sizes) $|\psi((N,0^k), w)|\leq |(N,0^k)|$? It will depend on the type of encoding used, of course. So we will consider that the NFA in the input is encoded in the following (natural) way. First, a list of all states, followed by the list of all tuples in the transition relation, and at the end the initial and final states. Also, we assume that all states have an encoding of the same size (which is easy to achieve through padding). And the same goes for all transitions. With that encoding, since $N'$ has less (or equal) number of states and transitions than $N$, it is clear that $|N'|\leq |N|$. Of course, it is also true that $|0^{k-1}|\leq |0^k|$. We can then conclude that $|\psi((N,0^k), w)|=|(N',0^{k-1})|\leq |(N,0^k)|$, that is, condition (5) is satisfied. We also have by definition of $\ell$ that $\ell((N,0^k))=k$ and $\ell((N',0^{k-1}))=k-1$. Since $|w|=1$, condition (6) is also true:
\begin{multline*}
\ell(\psi((N,0^k),w)) 
\ = \  \ell((N',0^{k-1})) 
\ = \  k - 1
\ = \\  \ell((N,0^k)) - 1
\ = \  \ell((N,0^k)) - |w|
\ = \  \max \{\ell((N,0^k))-|w|, 0\}.
\end{multline*}
Finally, we turn to condition (8). Let $y=y_1y_2\dots y_m\in\review{\{0,1\}}^*$. Since $\sigma(x)=1$, condition (8) amounts to checking~that
\begin{equation*}
((N,0^k),y) \in \nfa \ \mbox{ if and only if }\  ((N', 0^{k-1}), y_2 \dots y_m) \in \nfa,
\end{equation*}
where $N'$ is constructed by considering $w=y_1$, that is, $N'=\psi((N,0^k),y_1)$. Notice that if $m\not =k$, then both sides of the \review{equivalence} above are immediately false (and thus the \review{equivalence} is true), so we need only consider the case where $m=k$. We will now prove both directions of the \review{equivalence}. First, suppose $((N,0^k),y) \in \nfa$. Then, by definition, we know there is an accepting run $\rho$ of $N$ on input $y$ such that
\begin{equation*}
\rho: p_0 \xrightarrow{y_1} p_1 \xrightarrow{y_2} p_2 \xrightarrow{y_3} \dots \xrightarrow{y_{k-1}} p_{k-1} \xrightarrow{y_k} p_k
\end{equation*}
where $p_0=q_0$, $p_k=q_F$ and $(p_{i-1},y_i,p_i)\in\delta$ for all $i\in\{1,\dots,k\}$. Now, we will show that $((N', 0^{k-1}), y_2 \dots y_k) \in \nfa$, that is, $y_2 \dots y_k$ is accepted by $N'$. To do that, we first show by induction the following property: for all $i\in\{2,\dots,k\}$ there is a valid run of $N'$ on input $y_2 \dots y_i$ (although the run is not necessarily accepting) that looks like this:
\begin{equation*}
\rho_i: s_1 \xrightarrow{y_2} s_2 \xrightarrow{y_3} s_3 \xrightarrow{y_4} \dots \xrightarrow{y_{i-1}} s_{i-1} \xrightarrow{y_i} s_i
\end{equation*}
where $s_1=q_0'$ and for all $j\in\{2,\dots,i\}$, we have that $(s_{j-1},y_j,s_j)\in\delta'$ and
\begin{equation*}
s_j =
\begin{cases}
p_j & \text{ if } p_j\not\in Q_{y_1} \\
q_0' & \text{ if } p_j\in Q_{y_1}.
\end{cases}
\end{equation*}
To prove this fact by induction, consider first the case of $i=2$. By definition, we know that $p_1\in Q_{y_1}$ and $(p_1,y_1,p_2)\in \delta$. There are now two different possibilities. First, if $p_2\not\in Q_{y_1}$, then by definition of $\delta'$, we know that $(q_0',y_2,p_2)\in\delta'$. Second, if $p_2\in Q_{y_1}$, then by definition of $\delta'$, we know that $(q_0',y_2,q_0')\in\delta'$. So the property is true when $i=2$.

Now, suppose the property holds for some $i<k$, and consider the case for $i+1$. By the induction hypothesis, we know there is a valid run $\rho_i$ such that
\begin{equation*}
\rho_i: s_1 \xrightarrow{y_2} s_2 \xrightarrow{y_3} s_3 \xrightarrow{y_4} \dots \xrightarrow{y_{i-1}} s_{i-1} \xrightarrow{y_i} s_i
\end{equation*}
where $s_1=q_0'$ and $(s_{j-1},y_j,s_j)\in\delta'$ for all $j\in\{2,\dots,i\}$. Now, by the induction hypothesis, there are four possibilities (where each possibility is represented in one of the four sets that form the definition of $\delta'$):
\begin{itemize}
	\item $s_i=p_i$ and $p_{i+1}\not\in Q_{y_1}$. In that case, if we set $s_{i+1}=p_{i+1}$, by definition we know that $(s_i,y_{i+1},s_{i+1}) \in \delta'$.
	\item $s_i=p_i$ and $p_{i+1}\in Q_{y_1}$. In that case, if we set $s_{i+1}=q_0'$, by definition we know that $(s_i,y_{i+1},s_{i+1}) \in \delta'$.
	\item $s_i=q_0'$ and $p_{i+1}\not\in Q_{y_1}$. In that case, if we set $s_{i+1}=p_{i+1}$, by definition we know that $(s_i,y_{i+1},s_{i+1}) \in \delta'$.
	\item $s_i=q_0'$ and $p_{i+1}\in Q_{y_1}$. In that case, if we set $s_{i+1}=q_0'$, by definition we know that $(s_i,y_{i+1},s_{i+1}) \in \delta'$.
\end{itemize}
All that means that we can add one more transition to $\rho_i$ to form a valid run $\rho_{i+1}$ given by
\begin{equation*}
\rho_{i+1}: s_1 \xrightarrow{y_2} s_2 \xrightarrow{y_3} s_3 \xrightarrow{y_4} \dots \xrightarrow{y_i} s_i \xrightarrow{y_{i+1}} s_{i+1}
\end{equation*}
where $s_1=q_0'$ and for all $j\in\{2,\dots,i+1\}$, we have that $(s_{j-1},y_j,s_j)\in\delta'$ and
\begin{equation*}
s_j =
\begin{cases}
p_j & \text{ if } p_j\not\in Q_{y_1} \\
q_0' & \text{ if } p_j\in Q_{y_1}.
\end{cases}
\end{equation*}
The property is thus proved. Now, consider a valid run of that type for $i=k$ that looks like
\begin{equation*}
\rho': s_1 \xrightarrow{y_2} s_2 \xrightarrow{y_3} s_3 \xrightarrow{y_4} \dots \xrightarrow{y_{k-1}} s_{k-1} \xrightarrow{y_k} s_k
\end{equation*}
where $s_1=q_0'$ and for all $j\in\{2,\dots,k\}$, we have that $(s_{j-1},y_j,s_j)\in\delta'$. Now, by the property just proved, we know there are two possibilities. First, if $p_k\not\in Q_{y_1}$, we know $s_k=p_k=q_F$. Since $q_F=p_k\not\in Q_{y_1}$, we have that $q_F'=q_F$ and thus $\rho'$ is an accepting run, which means that $y_2 \dots y_k$ is accepted by $N'$. Second, if $p_k\in Q_{y_1}$,  we know $s_k=q_0'$. And since $q_F=p_k\in Q_{y_1}$, we have that $q_F'=q_0'$ and thus $\rho'$ is again an accepting run, which means that $y_2 \dots y_k$ is accepted by $N'$. All this proves that if $((N,0^k),y) \in \nfa$, then $((N', 0^{k-1}), y_2 \dots y_k) \in \nfa$. The proof for the other direction is analogous.

We conclude this section by pointing out that the same proof works for the case of $\unfa$. That is, the same definitions of $\ell$, $\sigma$ and $\psi$ work for that proof. The only difference is that we also need to show that $\psi$ produces a valid automaton for the relation, that is, an unambiguous NFA. But that is not hard to show from the previous proof. Making similar use of the notation of valid runs, it can be shown that if $\psi((N,0^k),w)$ had two different accepting runs for some word $y$, then $N$ would have two different accepting runs for $w\circ y$, and so it would not be unambiguous.

\end{document}